\g@addto@macro\bfseries{\boldmath}
\g@addto@macro\mdseries{\unboldmath}
\g@addto@macro\normalfont{\unboldmath}
\g@addto@macro\rmfamily{\unboldmath}
\g@addto@macro\upshape{\unboldmath}
\renewcommand*{\multicitedelim}{\addcomma\space}
    \newlength{\temp@x}%
    \newlength{\temp@y}%
    \newlength{\temp@w}%
    \newlength{\temp@h}%
    \def\my@coords#1#2#3#4{%
      \setlength{\temp@x}{#1}%
      \setlength{\temp@y}{#2}%
      \setlength{\temp@w}{#3}%
      \setlength{\temp@h}{#4}%
      \adjustlengths{}%
      \my@pdfliteral{\strip@pt\temp@x\space\strip@pt\temp@y\space\strip@pt\temp@w\space\strip@pt\temp@h\space re}}%
      \def\my@pdfliteral#1{\pdfliteral page{#1}}% I don't know why % this command...
      \def\adjustlengths{}%
      \def\my@pdfliteral #1{}% isn't equivalent to this one
      \def\adjustlengths{\setlength{\temp@h}{-\temp@h}\addtolength{\temp@y}{1in}\addtolength{\temp@x}{-1in}}%
    \def\Hy@colorlink#1{%
      \begingroup
        \ifHy@ocgcolorlinks
          \def\Hy@ocgcolor{#1}%
          \my@pdfliteral{q}%
          \my@pdfliteral{7 Tr}% Set text mode to clipping-only
        \else
          \HyColor@UseColor#1%
        \fi
    }%
    \def\Hy@endcolorlink{%
      \ifHy@ocgcolorlinks%
        \my@pdfliteral{/OC/OCPrint BDC}%
        \my@coords{0pt}{0pt}{\pdfpagewidth}{\pdfpageheight}%
        \my@pdfliteral{F}% Fill clipping path (the url's text) with
                           % current color
        %
        \my@pdfliteral{EMC/OC/OCView BDC}%
        \begingroup%
          \expandafter\HyColor@UseColor\Hy@ocgcolor%
          \my@coords{0pt}{0pt}{\pdfpagewidth}{\pdfpageheight}%
          \my@pdfliteral{F}% Fill clipping path (the url's text)
                             % with \Hy@ocgcolor
        \endgroup%
        \my@pdfliteral{EMC}%
        \my@pdfliteral{0 Tr}% Reset text to normal mode
        \my@pdfliteral{Q}%
      \fi
      \endgroup
    }%
\colorlet{DarkRed}{red!50!black}
\colorlet{DarkGreen}{green!50!black}
\colorlet{DarkBlue}{blue!50!black}
\declaretheorem[numberwithin=section]{theorem}
\declaretheorem[numberlike=theorem]{lemma}
\declaretheorem[numberlike=theorem]{corollary}
\declaretheorem[numberlike=theorem]{definition}
\xdef\csname m\x\endcsname{\noexpand\mathbf{\x}}
\newcommand{\F}{\mathbb{F}}
\newcommand{\Z}{\mathbb{Z}}
\newcommand{\init}{\text{(init)}}
\newcommand{\dyn}{\text{(dyn)}}
\DeclareMathOperator{\poly}{poly}
\title{Fast Deterministic Fully Dynamic Distance Approximation}
\author[1]{Jan van den Brand}
\affil[1]{Simons Institute \& UC Berkeley, USA}
\author[2]{Sebastian Forster}
\author[2]{Yasamin Nazari}
\affil[2]{University of Salzburg, Austria}
\date{}
\renewcommand{\paragraph}{%
	\@startsection{paragraph}{4}%
	{\z@}{1.25ex \@plus 1ex \@minus .2ex}{-1em}%
	{\normalfont\normalsize\bfseries}%
}
\begin{document}
\maketitle
\pagenumbering{roman}

\begin{abstract}
    In this paper, we develop deterministic fully dynamic algorithms for computing approximate distances in a graph with worst-case update time guarantees.
In particular, we obtain improved dynamic algorithms that, given an unweighted and undirected graph $G=(V,E)$ undergoing edge insertions and deletions, and a parameter $ 0 < \epsilon \leq 1 $, maintain $(1+\epsilon)$-approximations of the $st$-distance between a given pair of nodes $ s $ and $ t $, the distances from a single source to all nodes (``SSSP''), the distances from multiple sources to all nodes (``MSSP''), or the distances between all nodes (``APSP'').

Our main result is a deterministic algorithm for maintaining $(1+\epsilon)$-approximate $st$-distance with worst-case update time $O(n^{1.407})$ (for the current best known bound on the matrix multiplication exponent $\omega$).
This even improves upon the fastest known randomized algorithm for this problem. Similar to several other well-studied dynamic problems whose state-of-the-art worst-case update time is $O(n^{1.407})$, this matches a conditional lower bound~\citem[BNS, FOCS 2019]{BrandNS19}.
We further give a deterministic algorithm for maintaining $(1+\epsilon)$-approximate single-source distances with worst-case update time $O(n^{1.529})$, which also matches a conditional lower bound.

At the core, our approach is to combine algebraic distance maintenance data structures with near-additive emulator constructions.
This also leads to novel dynamic algorithms for maintaining $(1+\epsilon, \beta)$-emulators that improve upon the state of the art, which might be of independent interest.
Our techniques also lead to improved randomized algorithms for several problems such as exact $st$-distances and diameter approximation.
\end{abstract}

\vspace{7ex}

\noindent

\newpage
\tableofcontents
\newpage
\pagenumbering{arabic}

%\linenumbers

\section{Introduction}

From the procedural point of view, an algorithm is a set of instructions that outputs the result of a computational task for a given input.
This static viewpoint neglects that computation is often not a one-time task with input data in successive runs of the algorithm being very similar. 
The idea of dynamic graph algorithms is to explicitly model the situation that the input is constantly undergoing changes and the algorithm needs to adapt its output after each change to the input.
This paradigm has been highly successfully applied to the domain of graph algorithms. The major goal in designing dynamic graph algorithms is to spend as little computation time as possible for processing each update to the input graph.

Despite the progress on dynamic graph algorithms in recent years, many state-of-the-art solutions suffer from at least one of the following restrictions:
(1) Many dynamic algorithms only support one type of updates, i.e., are incremental (supporting only insertions) or decremental (supporting only deletions).
\emph{Fully dynamic} algorithms support both types of updates.
(2) Many dynamic algorithms only achieve amortized update time guarantees, i.e., the stated bound only holds ``on average'' over a sequence of updates with individual updates possibly taking significantly more time than the stated amortized bound.
{Worst-case} bounds also hold for individual updates, which for example is relevant in real-time systems.
(3) Many dynamic algorithms are randomized.
(i) On one hand, this means these algorithms only give probabilistic guarantees on correctness or running time that do not hold in all cases.
(ii) On the other hand, randomized algorithms often do not allow the ``adversary'' creating the sequence of updates to be \emph{adaptive} in the sense that it may react to the outputs of the algorithm\footnote{This type of adversary is called ``adaptive online adversary'' in the context of online algorithms~\cite{Ben-DavidBKTW94}. Note that despite being allowed to choose the next update in its sequence based on the outputs of the algorithm so far, this adversary may not explicitly observe the internal random choices of the algorithm.}. This is because the power of randomization can in many cases only be unleashed if the adversary is \emph{oblivious} to the outputs of the algorithm, which guarantees probabilistic independence of the random choices made by the algorithm.
\emph{Deterministic} algorithms avoid these two issues.

While these restrictions are not prohibitive in certain settings, they obstruct the general-purpose usage of dynamic algorithms as ``black boxes''.
Thus, the ``gold standard'' in the design of dynamic algorithms should be deterministic fully dynamic algorithms with worst-case update time bounds.
To date, there is only a limited number of problems that admit such algorithms and additionally have time bounds that match (conditional) lower bounds (say up to subpolynomial factors).
To the best of our knowledge, this is the case only for $ (2 + \epsilon) $-approximate maximum fractional matching and minimum vertex cover~\cite{BhattacharyaHN17}, $ (2 \Delta - 1) $-edge coloring~\cite{BhattacharyaCHN18}, $ (1 - \epsilon) $-approximate densest subgraph~\cite{SawlaniW20}, connectivity~\cite{ChuzhoyGLNPS20}, minimum spanning tree~\cite{ChuzhoyGLNPS20}, and edge connectivity~\cite{JinS21}.

In this paper, we add two important problems to this list: $ (1 + \epsilon) $-approximate $st$ distances and $ (1 + \epsilon) $-approximate single-source distances in unweighted, undirected graphs.
For current bounds on the matrix-multiplication exponent\footnote{
Two $n\times n$ matrices can be multiplied in $O(n^\omega)$ operations with $\omega \le 2.373$ \cite{Williams12,Gall14,AlmanW21}. 
We write $O(n^{\omega(a,b,c)})$ for the complexity of multiplying an $n^a \times n^b$ by $n^b \times n^c$ matrix \cite{GallU18}.} $ \omega$,
our deterministic worst-case update times for these problems are $ O (n^{1.407}) $ and $ O (n^{1.529}) $, respectively,
and match conditional lower bounds from~\cite{BrandNS19} up to subpolynomial factors.
In particular, the dynamic $ (1 + \epsilon) $-approximate $st$ distance currently shares this conditional lower bound and the upper bound we derive with an array of other dynamic problems such as $st$ reachability and cycle detection in directed graphs, maximum matching size, or determinant and rank of a matrix \cite{Sankowski04,Sankowski05,Sankowski07,BrandNS19}.

Apart from our main results for $st$ and single-source distances, we also obtain novel results for approximating multi-source distances, all-pairs distances, and the diameter; see \Cref{sec:results} for a detailed overview on our results.

Summarized in one sentence, our results are obtained by combining algebraic bounded-distance data structures with near-additive emulator constructions (see Definition \ref{def:spanners_emulators}) and then obtaining distance estimates from such an emulator.
A similar strategy was employed by recent related work of~\cite{BHGWW2021}.  One major ingredient of their approach is equipping the algebraic distance data structure of Sankowski~\cite{Sankowski04,Sankowski05} with a path-reporting mechanism similar to Seidel's technique for APSP in the static setting~\cite{Seidel95}.
This allows them to maintain a near-additive spanner -- which fits their path-reporting purposes -- but together with other parts of their algorithm introduces randomization. By using certain types of emulators instead of spanners, we can obtain a faster, deterministic algorithm. In particular, we can tailor the algebraic data structures better to our needs due to several nice properties of our emulators, for example that their structure changes \textit{slowly} and \textit{locally}.

In the remainder of this section we state all our results and compare them with related work. In Section~\ref{sec:overview}, we give an overview of our main ideas and technical contributions.
We subsequently provide the full details: 
Section~\ref{sec:emulator} focuses on the combinatorial aspects of
maintaining ``mid-sparsity'' emulators and its consequences to $st$ and single source distances. Section~\ref{sec:sparse emulator} gives a more general emulator result by further sparsifying this ``mid-sparsity'' emulators which leads to further applications for MSSP and APSP distances. Finally, in Section~\ref{sec:algebraic} we design an algebraic bounded-distance distance data structure used in our emulator constructions.
In Appendix~\ref{app:randomized} we explain how almost immediate consequences of our primitives lead to improved bounds for diameter approximation and APSP distance oracles.

\subsection{Our Results and Comparison with Related Work}\label{sec:results}

In this section, we summarize our main results for deterministic fully dynamic distance computation ($st$, SSSP, APSP, and MSSP supporting distance queries) and emulators. 
A summary of our algorithms for maintaining $(1+\epsilon)$-approximate distances with their worst-case update time guarantees can be found in Table \ref{tab:results}. 
In addition to these deterministic results, our techniques also give improved randomized solutions for diameter approximation, and subquadratic update-time $(1+\epsilon)$-APSP distance oracles\footnote{%
By a ``distance oracle'', we mean a data structure that supports fast queries. 
Our goal -- unlike many static algorithms -- is not optimizing the space of this data structure.} 
with sublinear query time. We next discuss each of these results and compare them with related work. 
Throughout this paper we assume that we are given an unweighted graph with $n$ nodes and $m$ edges. 

\begin{table}[t] 
    \centering
    \def\arraystretch{1.5} % adds some vertical spacing so the math is not so crammed in the table
\scalebox{0.8}{    
    \begin{tabular}[t]{|c|c|c|c|c|}
\hline 
Approx & Type & \multicolumn{2}{c|}{Worst-case update} & Reference \tabularnewline
\hline
\hline
$1+\epsilon$ & st & \multicolumn{2}{c|}{$O(n^{1.407}\epsilon^{-2} \log \epsilon^{-1})$} & \Cref{thm:st}
\tabularnewline
\hline 
$1+\epsilon$ & SSSP & \multicolumn{2}{c|}{$O(n^{1.529}\epsilon^{-2} \log \epsilon^{-1} )$} &  \Cref{thm:sssp}
\tabularnewline
\hline
$1+\epsilon$ & $k$-MSSP & \multicolumn{2}{c|}{$O(n^{1.529}+kn) \cdot O({\epsilon}^{-1})^{\sqrt{2\log_{1/\epsilon} n}}$} & \Cref{thm:MSSP}
\tabularnewline
\hline 
$1+\epsilon$ & APSP & \multicolumn{2}{c|}{$O (n^2) \cdot  O({\epsilon}^{-1})^{\sqrt{2\log_{1/\epsilon} n}}$} & \Cref{cor:APSP}
\tabularnewline
\hline 
$(1+\epsilon, n^{o(1)})$ & Emulators & \multicolumn{2}{c|}{$O(n^{1.407}\epsilon^{-2} \log \epsilon^{-1})$} & \Cref{lem:sparse_det_emulator}
\tabularnewline
\hline 
\end{tabular}
}
    \caption{\footnotesize Summary of our deterministic results for distance and emulators (our randomized ones are not included). 
    By $k$-MSSP we mean multi-source distances from $k$ sources. For the exact dependence on $\omega$, see the respective theorems.
    }
    \label{tab:results}
\end{table}

\paragraph{Deterministic $(1+\epsilon)$-$st$ distances.}
Our main result is a deterministic, fully dynamic algorithm for maintaining a $(1 + \epsilon)$-approximation of the distance between a fixed pair of nodes $s,t \in V$ whose worst-case update time matches a conditional lower bound.

\begin{restatable}{theorem}{st}\label{thm:st}
Given an unweighted undirected graph $G=(V,E)$ and a pair of nodes $s$ and $t$, there is a fully-dynamic data structure for maintaining $(1+\epsilon)$-distances between $s$ and $t$ deterministically with
\begin{itemize}[nosep]
    \item Preprocessing time of $O(n^{\omega}\epsilon ^{-2}\log\epsilon^{-1})$, where $\omega \le 2.373$.
    \item Worst-case update time of $O((n^{\omega(1,1,\mu)-\mu} + n^{\omega(1,\mu,\nu)-\nu} + n^{\mu+\nu} + n^{4/3}){\epsilon^{-2}}\log\epsilon^{-1})$
    for any parameters $0\le\nu\le\mu\le1$, which is $O(n^{1.407} \epsilon^{-2} \log \epsilon^{-1})$ for current $\omega$
    ($\mu\approx 0.856$, 
    $\nu \approx 0.551$).
\end{itemize}
\end{restatable}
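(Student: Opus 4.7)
The plan is to combine three ingredients that the paper develops separately: a deterministic algebraic data structure for bounded-distance queries (Section~\ref{sec:algebraic}), a deterministic dynamic near-additive emulator built on top of it (Section~\ref{sec:emulator}), and a shortest-path computation on the emulator to answer the query.

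First, we plan to invoke the algebraic bounded-distance structure, which after $O(n^\omega \epsilon^{-2}\log\epsilon^{-1})$ preprocessing supports $(1+\epsilon)$-approximate pairwise distance queries up to some polylogarithmic threshold, with worst-case update cost $O((n^{\omega(1,1,\mu)-\mu} + n^{\omega(1,\mu,\nu)-\nu} + n^{\mu+\nu})\epsilon^{-2}\log\epsilon^{-1})$ per edge change; here $\mu,\nu$ parameterize a Sankowski-style rank-$n^\mu$ inverse-maintenance scheme and a rectangular query subroutine. Second, using this structure as an oracle, Section~\ref{sec:emulator} maintains a $(1+\epsilon, n^{o(1)})$-emulator $H$ with $O(n^{4/3})$ edges, designed so that its topology evolves only slowly and locally, keeping the number of oracle calls per graph update bounded. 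Third, we answer each $st$-query by running BFS from $s$ to $t$ in $H$, which costs $O(n^{4/3})$; if the true distance $d^\ast(s,t)$ exceeds $n^{o(1)}/\epsilon$, the emulator's additive error is absorbed into the $(1+\epsilon)$ factor (after adjusting $\epsilon$ by a constant), and otherwise the distance is small enough that the algebraic structure reports it directly.

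Summing these three contributions yields the stated worst-case bound $O((n^{\omega(1,1,\mu)-\mu} + n^{\omega(1,\mu,\nu)-\nu} + n^{\mu+\nu} + n^{4/3})\epsilon^{-2}\log\epsilon^{-1})$, and balancing $\mu \approx 0.856$ and $\nu \approx 0.551$ against the current $\omega \le 2.373$ gives the headline bound $O(n^{1.407}\epsilon^{-2}\log\epsilon^{-1})$. The main difficulty, and the reason for decomposing the problem this way, is achieving \emph{worst-case} rather than amortized update time throughout: both the algebraic inverse maintenance and the emulator maintenance admit much simpler amortized variants via periodic rebuilds of $H$. The worst-case versions require (i) de-amortizing the lazy matrix-inverse updates by pipelining them into multiple parallel queries (Section~\ref{sec:algebraic}), and (ii) ensuring that each graph edge change triggers only a bounded number of local modifications to the emulator (Section~\ref{sec:emulator}). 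Once these two technical results are in place, the proof of Theorem~\ref{thm:st} reduces to summing the update times and verifying the small-versus-large distance case analysis above.
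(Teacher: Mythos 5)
Your three-ingredient template (algebraic bounded-distance data structure, dynamic emulator, Dijkstra on the emulator with a small/large distance case split) matches the paper's high-level plan, but the specific emulator you invoke is the wrong one, and this breaks the claimed bound. The paper proves Theorem~\ref{thm:st} via the $(1+\epsilon,4)$-emulator of Theorem~\ref{thm:sparse_emulator}, which has \emph{constant} additive stretch~$4$ and size $O(n^{4/3}\sqrt{\log n})$; it then queries $(8/\epsilon+2)$-bounded $st$-distances from Theorem~\ref{thm:overview:low_hop}. You instead describe a $(1+\epsilon, n^{o(1)})$-emulator of size $O(n^{4/3})$ and a case split at $n^{o(1)}/\epsilon$. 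No such object appears in the paper: the $(1+\epsilon,n^{o(1)})$-emulator of Lemma~\ref{lem:sparse_det_emulator} has size $n^{1+o(1)}$ and is used for MSSP/APSP, not $st$-distances. More importantly, an $n^{o(1)}$ additive term would force the hop bound of the algebraic data structure to be $h = n^{o(1)}/\epsilon$, and the update cost of Theorem~\ref{thm:overview:low_hop} scales as $h^2\log h$, so you would lose the stated $\epsilon^{-2}\log\epsilon^{-1}$ dependence and pick up an extraneous $n^{o(1)}$ factor. The constant additive term of $4$ is precisely what lets the paper take $h = O(1/\epsilon)$.

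You also miss the structural reason the $(1+\epsilon,4)$-emulator is what drives the improvement from $O(n^{1.529})$ (SSSP) to $O(n^{1.407})$ ($st$). Theorem~\ref{thm:sparse_emulator} attaches weighted edges only between pairs in $A_d \times A_d$ with $|A_d| = \tilde O(n^{2/3})$, rather than in $A_d \times V$ as the $(1+\epsilon,2)$-emulator does. This keeps the set $S\times T$ that Theorem~\ref{thm:overview:low_hop} must maintain of size $\tilde O(n^{4/3})$, which is what makes the $|S\times T|$ term and the rectangular matrix-multiplication exponents line up to $n^{1.407}$. Your write-up never explains why the emulator supporting $st$-distances should be sparser or use pairwise distances between two sublinear sets; without that observation, summing the contributions gives only the SSSP-style $n^{1.529}$ bound, not $n^{1.407}$. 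Two smaller inaccuracies: the algebraic data structure maintains \emph{exact} $h$-bounded distances (not $(1+\epsilon)$-approximate ones), and for the $st$-case the hop threshold is $O(1/\epsilon)$, not ``polylogarithmic''.
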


We are not aware of any non-trivial deterministic algorithms with worst-case update time for maintaining the exact or $(1+\epsilon)$-approximate $st$-distance under both insertions and deletions.\footnote{In independent work, \cite{KarczmarzMS22} obtained such deterministic bounds for (the more general) directed graphs when restricting to only edge insertions. 
}

When relaxing determinism to randomization against adaptive adversaries, the previously fastest fully-dynamic algorithm for $st$-distance has worst-case update time $O(n^{1.724})$ \cite{Sankowski05,BrandNS19} and maintains the distance exactly for unweighted directed graphs. We later also show that if randomization is allowed, our approach also improves the bound for \textit{exact} $st$-distances to $O(n^{1.7035})$.

The previously fastest fully dynamic for unweighted, undirected graphs is implied by the approach of \cite{BHGWW2021} and yields a worst-case update time of $ O (n^{1.529}) $; this algorithm employs randomization against oblivious adversaries, and in addition to the approximate distance can also report an $st$-path of the corresponding length. Despite being deterministic, our algorithm improves upon these upper bounds.

Moreover, for \textit{current bounds} on $\omega$, our result closes the gap between previous upper bounds and a conditional lower bound for $(1+\epsilon)$-approximate dynamic $st$-distances on unweighted undirected graphs \cite{BrandNS19}.
This conditional lower bound is based on a hardness assumption called ``uMv-hinted uMv'' where a vector-matrix-vector product must be computed after receiving hints about the structure of the three inputs.
This assumption formalizes the current barrier for improving upon algorithms for various fully dynamic problems
such as directed $st$-reachability, maximum matching size, directed cycle detection, directed $k$-cycle and $k$-path detection, and on the algebraic side, maintaining determinant and rank of a dynamic matrix. All of these problems admit an $O(\min_{0\le\nu\le\mu\le1} (n^{\omega(1,1,\mu)-\mu} +n^{\omega(1,\mu,\nu)-\nu} + n^{\mu+\nu})) $ worst-case update time (which for current $ \omega $ amounts to $ O(n^{1.407}) $) and -- assuming hardness of ``uMv-hinted uMv'' -- no dynamic algorithm for these problems can improve upon this by a polynomial factor.
While the nature of conditional lower bounds can never rule out the existence of faster algorithms with certainty,  we believe that these connections provide evidence that a substantial breakthrough will be necessary in order to improve upon the update time of our algorithm.
We further note that closing the update-time gap between fully dynamic $st$-reachability and $st$-distance was raised as an important open problem by Sankowski~\cite{Sankowski08}; our bound for $(1+\epsilon)$-approximate $st$-distance in undirected graphs partially resolves this question.

\paragraph{Deterministic $(1+\epsilon)$-SSSP.}

Our second result is a deterministic, fully dynamic algorithm for maintaining $(1+\epsilon)$-single source distances whose worst-case update time matches a conditional lower bound.

Formally we show the following. 
\begin{restatable}{theorem}{sssp}\label{thm:sssp}

Given an unweighted undirected graph $G=(V,E)$ and a single source $s$, and $0 <\epsilon <1$, there is a deterministic fully-dynamic data structure for maintaining $(1+\epsilon)$-distances from $s$ with
\begin{itemize}[nosep]
    \item Preprocessing time of $O(n^{\omega}\epsilon^{-2}\log \epsilon^{-1})$, where $\omega \le 2.373$.
    \item Worst-case update time of $O((n^{\omega(1,1,\mu)-\mu}+n^{1+\mu}) \epsilon^{-2}\log \epsilon^{-1})$ for any $0\le\mu\le1$. For current bounds on $\omega$ and the best choice of $\mu\approx 0.529$, this is $O(n^{1.529} \epsilon^{-2})$.
\end{itemize}
\end{restatable}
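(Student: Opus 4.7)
The plan is to combine the mid-sparsity deterministic near-additive emulator developed in Section~\ref{sec:emulator} with a single-source shortest-path computation on the emulator after every update. Concretely, I would maintain a $(1+\epsilon,\beta)$-emulator $H$ with $\beta = n^{o(1)}$ and $O(n^{1+\mu})$ edges, using the algebraic bounded-distance data structure of Section~\ref{sec:algebraic} as the engine that supplies the pairwise distance estimates from which $H$ is built. To absorb the additive error $\beta$ into a purely $(1+\epsilon)$ multiplicative guarantee, I would instantiate the emulator at geometrically growing distance thresholds and report the minimum estimate across scales; together with the precision needed inside the algebraic primitive, this is the source of the $\epsilon^{-2}\log\epsilon^{-1}$ factor in the stated bound.

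On each update to $G$, I would first refresh $H$ in worst-case time $O(n^{\omega(1,1,\mu)-\mu}\epsilon^{-2}\log\epsilon^{-1})$ by invoking the algebraic primitive; this is affordable because the emulator is constructed locally, so only a small structured portion of $H$ changes after a single edge insertion or deletion. I would then compute single-source distances from $s$ within $H$ by a standard BFS (or weighted Dijkstra), which costs $O(|E(H)|) = O(n^{1+\mu})$ per update. Correctness of the reported estimate is inherited from the stretch guarantee of $H$ combined with the scale-bucketing. Balancing the two terms over $\mu \in [0,1]$ yields the claimed update time; for current $\omega$ the optimum satisfies $\omega(1,1,\mu) = 1+2\mu$ at $\mu \approx 0.529$, giving $O(n^{1.529}\epsilon^{-2}\log\epsilon^{-1})$. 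The preprocessing cost is dominated by one matrix inverse per scale, summing to $O(n^{\omega}\epsilon^{-2}\log\epsilon^{-1})$.

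The main obstacle will be ensuring that $H$ truly changes only \emph{slowly and locally} under a graph update, so that the refresh step fits within the $n^{\omega(1,1,\mu)-\mu}$ budget and is not dominated by the $\Theta(n^{1+\mu})$ total size of $H$. This requires the algebraic data structure to expose an update interface that pinpoints exactly the bounded-distance entries that change, and the deterministic emulator construction in Section~\ref{sec:emulator} to be stable enough that only those flagged entries can trigger edge changes in $H$. Together with the deterministic replacement for the random sampling typically used in near-additive emulator constructions, these are the substantive ingredients that Sections~\ref{sec:emulator} and \ref{sec:algebraic} are designed to provide; once they are in place, the theorem follows by direct composition of the emulator maintenance with a standard SSSP routine on $H$.
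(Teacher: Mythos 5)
Your high-level plan — maintain a near-additive emulator via the algebraic bounded-distance primitive, then run Dijkstra on the emulator after every update, and combine with small-hop distances — is the right skeleton, but several of the specific choices you propose diverge from what actually makes the claimed bounds work, and some would quietly cost you extra factors.

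First, the paper does \emph{not} use an emulator with $\beta = n^{o(1)}$ for SSSP; it uses the much simpler $(1+\epsilon/2,\,2)$-emulator from Section~\ref{sec:dense_emulator}, i.e.\ additive stretch a constant $2$. This matters: the additive error is absorbed by one clean case split, not by geometric scale bucketing. The algorithm maintains (i) $\lceil 4/\epsilon\rceil$-bounded $\{s\}\times V$ distances via the algebraic data structure and (ii) SSSP on the emulator, and reports the minimum. If $d_G(s,v) > 4/\epsilon$ then the additive $+2$ is at most $\tfrac{\epsilon}{2} d_G(s,v)$, so the emulator alone gives a $(1+\epsilon)$-approximation; if $d_G(s,v) \le 4/\epsilon$ the bounded-distance query is exact. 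Your scheme of instantiating the emulator at geometrically growing thresholds is unnecessary here and would introduce an extra $\log$-scale overhead not present in the theorem statement. More importantly, with $\beta = n^{o(1)}$ you would have to maintain $\Theta(\beta/\epsilon)$-bounded distances (as the paper does for MSSP/APSP), paying an $n^{o(1)}$ factor — which is precisely why Theorems~\ref{thm:MSSP} and \ref{cor:APSP} have $(1/\epsilon)^{\sqrt{2\log_{1/\epsilon} n}}$ overhead while Theorem~\ref{thm:sssp} has only $\epsilon^{-2}\log\epsilon^{-1}$.

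Second, you misidentify the source of the two terms in the update time. The emulator $H$ used for SSSP always has size $\tilde O(n^{3/2})$ (degree threshold $d = \sqrt{n\log n}$, hitting set of size $\tilde O(\sqrt n)$, pairwise edges $A_d \times V$); its density is not tuned by $\mu$. The Dijkstra step thus costs a fixed $\tilde O(n^{3/2})$. The free parameter $\mu$ in the theorem is the balance parameter of the algebraic primitive (Theorem~\ref{thm:overview:low_hop} with $S = A_d$, $T = V$), and the term $n^{1+\mu}$ is the $n^{\mu+\nu}$-type term with one of the parameters fixed to $1$; the Dijkstra cost $\tilde O(n^{3/2})$ is subsumed by $n^{1+\mu}$ for the optimal $\mu \approx 0.529 > 0.5$. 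Likewise, the $\epsilon^{-2}\log\epsilon^{-1}$ factor comes entirely from the $h^2\log h$ blowup of the algebraic data structure at $h = O(1/\epsilon)$ (Corollary~\ref{cor:algebraic:polynomial} and Theorem~\ref{thm:overview:low_hop}), not from scale bucketing. Your proposal, taken at face value, would give neither the clean $\epsilon^{-2}\log\epsilon^{-1}$ dependence nor the exact form $n^{\omega(1,1,\mu)-\mu} + n^{1+\mu}$.

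Your observation that $H$ must change slowly and locally is on target — this is exactly why the paper develops the deterministic low-recourse hitting set (Lemma~\ref{lem:hitting-set}) and a dynamic matrix-inverse routine tailored to slowly changing index sets. But the realization of that idea is a dense emulator with constant additive error, not a near-linear-size emulator with $n^{o(1)}$ error, and the latter would not reproduce the stated bounds.
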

As with the $st$ case, we are not aware of any non-trivial deterministic algorithms with worst-case update time for maintaining exact or $(1+\epsilon)$-approximate SSSP under both insertions and deletions.
When relaxing determinism to randomization against adaptive adversaries, the previously fastest fully-dynamic algorithm for $ (1 + \epsilon) $-approximate SSSP in unweighted graphs has a much slower worst-case update time of $O(n^{1.823})$~\cite{BrandN19} 
%and maintains the distance for unweighted directed graphs 
(albeit that bound also holds for directed weighted graphs).

The fastest fully dynamic algorithm for unweighted, undirected graphs is implied by the approach of \cite{BHGWW2021} and yields a worst-case update time of $ O (n^{1.529}) $; this algorithm employs randomization against oblivious adversaries, and in addition to the approximate distance can also report an $st$-path of the corresponding length.
Our result matches the update time of the distance maintenance problem with a deterministic algorithm and additionally improves the dependence on the error parameter $ \epsilon $ from $ (1/\epsilon)^{O(\sqrt{\log_{1/\epsilon} n})} $ to a small polynomial.
Moreover, this update time matches a conditional lower bound stated in \cite{BrandNS19} based on the hardness assumption ``Mv-hinted Mv''.
This is a similar type of hardness assumption as discusses in the $st$-case, but tuned to single source problems. We emphasize again that our approximate $st$ result matches the conditional lower for \textit{current} $\omega$, whereas our approximate SSSP bound matches the conditional lower bound for any $\omega$. 

\paragraph{Deterministic Sparse Emulators.}

The main tool developed and applied in this paper is a novel fully dynamic algorithm for maintaining $(1 + \epsilon, \beta)$-emulators with various trade-offs, which might be of independent interest.

\begin{definition}\label{def:spanners_emulators}
Given a graph $G = (V,E)$, an \emph{$(\alpha, \beta)$-emulator} of $G$ is a graph $H=(V, E')$ (that is not necessarily a subgraph of $G$ and might be weighted) in which $d_G(u,v) \leq d_H(u,v) \leq \alpha \cdot d_G(u,v) +\beta$ for all pairs of nodes $ u, v \in V $.
If $ H $ is a subgraph of $ G $, then $ H $ is an \emph{$(\alpha, \beta)$-spanner} of $G$. %(Note that a spanner of an unweighted graph remains unweighted.)
\end{definition}

In this paper, we are mainly interested in so-called \emph{near-additive} emulators and spanners, as introduced by~\cite{ElkinP04}, for which $ \alpha = 1 + \epsilon $ for any parameter $ \epsilon > 0 $ and $ \beta $ is a function of $ \epsilon $.
A influential construction of Thorup and Zwick~\cite{TZ2006emulators} gives $(1 + \epsilon, \beta)$-spanners of size $\tilde{O}(n^{1+1/k})$\footnote{%
Throughout this paper, we use $ \tilde O (\cdot) $-notation to suppress terms that are polylogarithmic in $ n $, the number of nodes of the graph.}
and with $\beta= O(1/\epsilon)^{k}$ for any $0<\epsilon\leq1$ and $2 \leq k \leq \log n$ that can statically be computed in time $\tilde{O}(mn^{1/k})$.\footnote{In static settings there are somewhat more involved algorithms for near-additive emulators that lead to slightly better tradeoffs in specific parameter settings (e.g.~see \cite{EN2018}).}
For any constant $ \epsilon $, this allows for a $(1 + \epsilon, n^{o(1})$-spanner of size $ n^{1+o(1)} $.

In this paper, we obtain the following result for maintaining near-additive emulators.
\begin{restatable}{lemma}{emulator}\label{lem:sparse_det_emulator}
Given an unweighted, undirected graph $G=(V,E)$, 
parameters $0<\epsilon<1$ and $2 \leq k \leq \log n$, 
we can maintain a $(1+\epsilon, \beta)$-emulator of $G$ with size $\tilde{O}(n^{1+1/k})$, 
where $\beta= O(1/\epsilon)^{k}$ deterministically with worst-case update time of $\max(\tilde{O}(n^{4/3+1/k}), O(n^{1.407}\epsilon^{-2}\log \epsilon^{-1} ))$. Here the latter term of the update time has the same dependence on $\omega$ as \Cref{thm:st}.
The preprocessing time of this algorithm is $O(n^{\omega} \epsilon^{-2}\log \epsilon^{-1})$.
\end{restatable}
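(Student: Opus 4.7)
The plan is to combine the algebraic bounded-distance data structure developed in Section~\ref{sec:algebraic} with a two-stage near-additive emulator construction. First, I would maintain a ``mid-sparsity'' $(1+\epsilon,\beta_1)$-emulator $H'$ of size roughly $\tilde O(n^{4/3})$ using the algebraic data structure (this is the object announced in Section~\ref{sec:emulator} and used to prove Theorem~\ref{thm:st}); this stage contributes the $O(n^{1.407}\epsilon^{-2}\log\epsilon^{-1})$ term in the update time, with dependence on $\omega$ matching Theorem~\ref{thm:st}. The key property I would exploit is that this emulator only tracks distances up to a short threshold of order $1/\epsilon$, so that each individual edge change to $G$ causes only a \emph{local} and \emph{slow} change in $H'$, which is exactly what lets the algebraic data structure report the edge set of $H'$ efficiently.

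Next, I would feed $H'$ into a deterministic static near-additive emulator construction of Thorup--Zwick~\cite{TZ2006emulators} type to obtain a $(1+\epsilon,\beta_2)$-emulator $H$ of $H'$ of size $\tilde O(n^{1+1/k})$ with $\beta_2 = O(1/\epsilon)^{k-1}$. Since $H'$ has only $\tilde O(n^{4/3})$ edges, the static algorithm takes $\tilde O(n^{4/3+1/k})$ time. The two guarantees compose: for all $u,v\in V$,
\[
d_G(u,v)\le d_H(u,v)\le (1+\epsilon)\,d_{H'}(u,v)+\beta_2\le (1+\epsilon)^2 d_G(u,v)+(1+\epsilon)\beta_1+\beta_2,
\]
so $H$ is a $(1+O(\epsilon),O(1/\epsilon)^k)$-emulator of $G$, and a standard rescaling $\epsilon\leftarrow\Theta(\epsilon)$ yields the claimed parameters.

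To turn the $\tilde O(n^{4/3+1/k})$ recomputation cost into a \emph{worst-case} per-update bound rather than an amortized one, I would use the standard batched-rebuild schedule: split time into epochs of length $L=\tilde\Theta(1)$ (so the rebuild work per update is $\tilde O(n^{4/3+1/k})$), maintain two copies of $H$ in lockstep, and during each epoch spread the work of rebuilding the ``next'' copy from the current $H'$ evenly across the $L$ updates, switching over at the epoch boundary. Any staleness in the emulator at query time can only inflate distances by $O(L)$ additively, which is absorbed into $\beta=O(1/\epsilon)^k$ for $k\ge 2$.

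The main obstacle is getting the mid-sparsity emulator $H'$ right: it must simultaneously (i) be sparse enough ($\tilde O(n^{4/3})$ edges) so that the downstream static Thorup--Zwick step runs in $\tilde O(n^{4/3+1/k})$, (ii) change in a local and predictable fashion under each edge update so the algebraic bounded-distance data structure of Section~\ref{sec:algebraic} can report the $O(n^{1.407})$-bounded set of edge changes deterministically, and (iii) still capture all distances up to a $(1+\epsilon)$ factor with a small additive error. Balancing these three requirements is what drives the $n^{4/3}$ term and is the technical heart of the argument; the sparsification and scheduling on top are essentially black-box.
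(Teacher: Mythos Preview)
Your approach is essentially the paper's: maintain the $(1+\epsilon',4)$-emulator of Theorem~\ref{thm:sparse_emulator} and then apply the deterministic static Thorup--Zwick construction (Lemma~\ref{lem:det_spanners}) on top of it. Two points where you diverge from (and over-complicate relative to) the paper deserve comment.

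First, the batched-rebuild de-amortization is unnecessary. The static Thorup--Zwick step already runs in $\tilde O(n^{4/3+1/k})$, which \emph{is} the target worst-case update bound, so the paper simply recomputes the sparse emulator from scratch after every single update; nothing needs to be spread across epochs. Your staleness argument is also wrong as stated: a single edge deletion can change a distance from $1$ to $\infty$, so an emulator stale by even one update may under-estimate distances and violate $d_G \le d_H$. (With $L=\tilde\Theta(1)$ this never actually bites, but the justification you give for why staleness is tolerable is incorrect.)

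Second, you omit a small but necessary step the paper performs: the mid-sparsity emulator is \emph{weighted} (edge weights up to $O(1/\epsilon)$), whereas the Thorup--Zwick construction in Lemma~\ref{lem:det_spanners} takes an unweighted input. The paper handles this by replacing each weighted edge by an unweighted path of the corresponding length, inflating the edge count by only an $O(1/\epsilon)$ factor before running the static sparsifier.
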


This result should mainly be compared to the fully dynamic algorithm of~\cite{BHGWW2021} for maintaining a $(1+\epsilon, n^{o(1)})$-spanner of size $ n^{1+o(1)} $ with worst-case update time $O(n^{1.529})$ for any constant $ \epsilon > 0 $ that employs randomization against an oblivious adversary.
We improve upon the result of~\cite{BHGWW2021} both in running time and by having a deterministic algorithm at the cost of maintaining emulators instead of spanners.
Other works on maintaining spanners or emulators give a multiplicative stretch $ \alpha \geq 3 $~\cite{AusielloFI06,Elkin11,BaswanaKS12,BodwinK16,BernsteinFH21,ForsterG19,BernsteinBGNSSS20} or are restricted to a partially dynamic setting~\cite{BernsteinR11,HKN2013}.

\paragraph{Deterministic $(1+\epsilon)$-MSSP.}

Another implication of our techniques is an algorithm for $(1+\epsilon)$-multi-source distances. In Section \ref{sec:MSSP}, we give an algorithm that combines our sparse emulator construction with the algebraic techniques to prove the following theorem.

\begin{restatable}{theorem}{mssp} \label{thm:MSSP}
Given an unweighted, undirected graph $G=(V,E)$, and $0<\epsilon<1$, and a fixed set of sources $S$, we can maintain $(1+\epsilon)$-approximate distances from $S$ (i.e.~pairs in $S \times V$) deterministically with
     $O((n^{\omega(1,1,\mu)-\mu}+n^{1+\mu}+ |S| \cdot n) \cdot O(1/\epsilon)^{\sqrt{2\log_{1/\epsilon} n}}$ worst-case update time, which for current $\omega$ is $O(n^{1.529}+ |S| \cdot n) \cdot O(1/\epsilon)^{\sqrt{2\log_{1/\epsilon} n}}$. The preprocessing time is $O(n^{\omega}) \cdot O(\tfrac{1}{\epsilon})^{\sqrt{2\log_{1/\epsilon} n}}$.
\end{restatable}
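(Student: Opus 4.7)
The plan is to combine the deterministic sparse emulator of \Cref{lem:sparse_det_emulator} with the bounded-distance algebraic primitive that underlies the SSSP algorithm of \Cref{thm:sssp}, using the emulator to compress long distances and the algebraic oracle as a source-oblivious short-distance oracle that can be read off cheaply once per source. Choose $k = \sqrt{2\log_{1/\epsilon} n}$, so that $n^{1/k}$ and $(1/\epsilon)^k$ are both $O(1/\epsilon)^{\sqrt{2\log_{1/\epsilon} n}}$, and apply \Cref{lem:sparse_det_emulator} with this $k$ and with error parameter $\epsilon' = \epsilon/c$ for a sufficiently large constant $c$ to maintain a $(1+\epsilon', \beta)$-emulator $H$ of $G$ having $|E(H)| = \tilde O(n)\cdot O(1/\epsilon)^{\sqrt{2\log_{1/\epsilon} n}}$ edges and additive error $\beta = O(1/\epsilon)^{\sqrt{2\log_{1/\epsilon} n}}$; its update cost is dominated by $O(n^{1.407}\epsilon^{-2}\log\epsilon^{-1})$ and is absorbed into the claimed $O(n^{\omega(1,1,\mu)-\mu}+n^{1+\mu})$ term once $\mu \approx 0.529$ is used.

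In parallel, maintain a source-oblivious bounded-distance algebraic oracle (from \Cref{sec:algebraic}) that, up to the threshold $R := \beta/\epsilon'$, stores $(1+\epsilon')$-approximate distances between every pair of nodes, with update time matching the SSSP bound $O((n^{\omega(1,1,\mu)-\mu}+n^{1+\mu})\epsilon^{-2}\log\epsilon^{-1})$ and supporting $O(n)$-per-row extraction. After each edge update, refresh both structures, and then for every source $s \in S$ run a BFS from $s$ in $H$ in time $\tilde O(|E(H)|) = n \cdot O(1/\epsilon)^{\sqrt{2\log_{1/\epsilon} n}}$ and read off the length-$R$ row of the algebraic oracle in $O(n)$ time, reporting the pointwise minimum of the two estimates as the approximate $(s,v)$-distance. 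Correctness follows by the standard near-additive-to-multiplicative conversion: if $d_G(s,v) \leq R$ then the short-distance oracle returns a $(1+\epsilon')$-estimate, while if $d_G(s,v) > R$ then $d_H(s,v) \leq (1+\epsilon') d_G(s,v) + \beta \leq (1+\epsilon) d_G(s,v)$ for $c$ sufficiently large. Summing the contributions across $|S|$ sources and the two maintained structures yields exactly the claimed update time, and the preprocessing cost is inherited from the two underlying data structures.

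The main obstacle is to verify that the algebraic primitive of \Cref{sec:algebraic}---the workhorse of \Cref{thm:sssp}---is genuinely \emph{source-oblivious} and supports $O(n)$-per-source extraction; this is not apparent from the bare SSSP statement and requires opening up its internal matrix-based structure, in contrast to the SSSP case where one may hard-wire the source. A secondary technical point is the careful tuning of $k$, $\epsilon'$, and $c$ so that the pointwise minimum is genuinely a $(1+\epsilon)$-approximation and so that the single factor $O(1/\epsilon)^{\sqrt{2\log_{1/\epsilon} n}}$ arising from the sparse emulator absorbs all of the lower-order $\poly(1/\epsilon)$ factors appearing in the SSSP-style pieces.
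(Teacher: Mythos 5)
Your high-level plan — maintain a sparse $(1+\epsilon',\beta)$-emulator via \Cref{lem:sparse_det_emulator}, maintain $\Theta(\beta/\epsilon')$-bounded distances from the sources on $G$ via the algebraic machinery of \Cref{sec:algebraic}, and take a pointwise minimum — is exactly the paper's approach, and your correctness argument (short distances exact from the oracle, long distances $(1+\epsilon)$-approximate from the emulator) is sound. However, there are two genuine quantitative gaps and one unnecessary detour.

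First, you set $k = \sqrt{2\log_{1/\epsilon} n}$, but the correct choice is $k = \sqrt{(\log_{1/\epsilon} n)/2}$. The reason your choice is wrong is tied to your second gap: you assert that the bounded-distance algebraic oracle has ``update time matching the SSSP bound'' with an $\epsilon^{-2}\log\epsilon^{-1}$ factor, but that factor comes from $h=O(1/\epsilon)$ in the SSSP application. Here the hop threshold is $h = \Theta(\beta/\epsilon')$, and the update time of \Cref{thm:overview:low_hop} scales as $h^2\log h$, i.e.~$\Theta(\beta^2)\cdot\poly(1/\epsilon,\log n)$, not $\epsilon^{-2}$. The balancing act is therefore between $\beta^2 = O(1/\epsilon)^{2k}$ (from the oracle) and $n^{1/k}$ (from the emulator size / BFS cost), which is why the paper uses $k=\sqrt{(\log_{1/\epsilon} n)/2}$: this makes both $\beta^2$ and $n^{1/k}$ equal to $O(1/\epsilon)^{\sqrt{2\log_{1/\epsilon} n}}$. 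With your $k$, $\beta^2 = O(1/\epsilon)^{2\sqrt{2\log_{1/\epsilon} n}}$, which blows past the claimed bound.

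Second, the ``main obstacle'' you flag — whether the algebraic primitive is ``source-oblivious'' and supports $O(n)$-per-row extraction — is a detour rather than a real obstacle, and it signals a misreading of what \Cref{thm:overview:low_hop} provides. That theorem maintains the $h$-bounded $S\times T$ pairwise distances \emph{explicitly} for dynamic sets $S,T$, at cost $O((\cdots + |S\times T|)h^2\log h)$ per update. Since your source set $S$ is fixed, you simply initialize with this $S$ and $T=V$; the $|S\times T| = |S|\cdot n$ term already accounts for what you are trying to re-derive as a per-row extraction step. There is no need to imagine an all-pairs store with on-demand row reads (which, if taken literally, would cost $\Omega(n^2)$ per update to maintain). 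Once you fix the value of $k$, use $h = \Theta(\beta/\epsilon')$ in \Cref{thm:overview:low_hop} with $S$ fixed and $T=V$, and account for the $h^2\log h$ factor correctly, your argument matches the paper's proof of \Cref{thm:MSSP}.
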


Hence we can maintain distances from up to $\tilde{O}(n^{0.52})$ sources in almost (up to an $n^{o(1)}$ factor) the same time as maintaining distances from a single-source.

\paragraph{Deterministic $(1+\epsilon)$-APSP.}
One implication of \Cref{thm:MSSP} (by simply setting $S=V$) is a deterministic fully-dynamic algorithm for maintaining all-pairs-shortest path that nearly (up to an $n^{o(1)}$ factor) matches the trivial lower bound of $\Omega(n^2)$ time per update for this problem. More formally,
\begin{restatable}{corollary}{apsp}\label{cor:APSP}
Given an unweighted, undirected graph $G=(V,E)$, and $0<\epsilon<1$, we can maintain $(1+\epsilon)$-all-pairs distances deterministically with $O(n^{2}) \cdot O(\tfrac{1}{\epsilon})^{\sqrt{2\log_{1/\epsilon} n}}$ worst-case update time. The preprocessing time is $O(n^{\omega}) \cdot O(\tfrac{1}{\epsilon})^{\sqrt{2\log_{1/\epsilon} n}}$.
\end{restatable}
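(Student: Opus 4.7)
The plan is to obtain \Cref{cor:APSP} as a direct specialization of \Cref{thm:MSSP} by taking the source set to be all of $V$. First I would set $S = V$ in the MSSP algorithm, so that $|S| = n$ and the multi-source distance data structure maintains $(1+\epsilon)$-approximate distances between every pair of vertices. Plugging $|S| = n$ into the update-time bound of \Cref{thm:MSSP} turns the $|S| \cdot n$ term into $n^2$, and the preprocessing time transfers verbatim since it does not depend on $|S|$.

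Next I would verify that the remaining two terms in the MSSP update time, $n^{\omega(1,1,\mu)-\mu}$ and $n^{1+\mu}$, do not dominate. Taking $\mu = 1$ gives $n^{1+\mu} = n^2$ and $n^{\omega(1,1,1)-1} = n^{\omega-1} \le n^{1.373}$, both of which are $O(n^2)$, so the overall update-time expression collapses to $O(n^2) \cdot O(1/\epsilon)^{\sqrt{2\log_{1/\epsilon} n}}$ exactly as claimed. A balanced interior choice of $\mu$ is unnecessary here because the $|S| \cdot n = n^2$ term already saturates the trivial output lower bound for APSP.

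The only subtlety, and in some sense the sole ``step'' that needs to be stated explicitly, is that APSP really is just MSSP with the full vertex set as sources, and that a distance query between any specific pair $(u,v) \in V \times V$ can therefore be answered using the MSSP oracle with $u \in S$. Since \Cref{thm:MSSP} is deterministic and provides worst-case update-time guarantees, both properties are inherited by the APSP algorithm without further modification, and no additional combinatorial or algebraic argument is required. There is no real obstacle beyond bookkeeping; the result is immediate once \Cref{thm:MSSP} is in hand.
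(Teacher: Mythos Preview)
Your proposal is correct and matches the paper's primary derivation: the paper states \Cref{cor:APSP} ``as a direct consequence of Theorem~\ref{thm:MSSP}'' by setting $S=V$, exactly as you do, and your verification that the $n^{\omega(1,1,\mu)-\mu}$ and $n^{1+\mu}$ terms are subsumed by $n^2$ is a welcome bit of explicitness. The paper additionally remarks on an even simpler alternative that avoids dynamically maintaining an emulator (compute a near-linear-size emulator \emph{statically} after each update and combine with algebraically maintained $n^{o(1)}$-bounded distances), but this is presented as a side comment rather than the formal proof.
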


It is worth mentioning that there is another (simpler) approach to obtain this bound that we will discuss in Section \ref{sec:APSP}.
The previous comparable bounds for this problem either used randomization \cite{BrandN19} or have amortized bounds \cite{DemetrescuI04,Thorup04}.
The fastest deterministic algorithm with worst-case guarantee that maintains exact shortest paths unweighted, directed graphs and has an update time of $ \tilde O (n^{2.6})$~\cite{GutenbergW20}.

We will next show several result that, unlike our previous bounds, are randomized. This includes an improved bound for exact $st$-distances using our new algebraic data structures (details in \Cref{app:exact_st}), and two other implications of our 
dynamic multi-source algorithms (details in \Cref{sec:appendix:randomized}).
\paragraph{Exact $st$-distances.}
Our new dynamic algorithm for maintaining bounded distances also leads to improved bounds for dynamic exact $st$-distances in \textit{directed graphs}, if we allow randomization. For current $\omega$, we obtain a worst-case update time of $O(n^{1.7035})$,
improving upon the previous best bound of $O(n^{1.7643})$ \cite{Sankowski05,BrandNS19}.
\begin{restatable}{theorem}{exactst}\label{thm:exactst}
For any $0\le\nu\le\mu\le1$ and $0\le h\le n$, there exists a randomized dynamic algorithm that maintains exact $st$-distances in directed graphs.
The preprocessing time is $\tilde O(hn^\omega)$ and the worst-case update time per edge insertion or deletion is $\tilde{O}(h(n^{\omega(1,1,\mu)-\mu}+n^{\omega(1,\mu,\nu)-\nu}+n^{\mu+\nu}+(n/h)^{2}))$.
After each update, the algorithm returns the exact $st$-distance and the result is correct with high probability.
The algorithm works against an adaptive adversary.

For current bounds on $\omega$, % \cite{GallU18,AlmanW21}, 
this is $O(n^{1.7035})$ time per update 
(with 
$\mu\approx 0.8556$, 
$\nu \approx 0.5512$ and 
$\log_n(h) \approx 0.2966$).
\end{restatable}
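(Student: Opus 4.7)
The plan is to reduce exact $st$-distance in directed graphs to the bounded-distance primitive from Section~\ref{sec:algebraic} via the classical random hitting-set decomposition, in the spirit of Sankowski's polynomial-inverse algebraic approach~\cite{Sankowski05}. The three ingredients are: (i) a dynamic data structure that, under edge insertions and deletions, symbolically maintains $(I-xA)^{-1}$ as a matrix over $\mathbb{F}[x]/(x^{h+1})$ from which we can read off $d_G^{(h)}(u,v):=\min(d_G(u,v),h)$ for any pair $u,v$; (ii) a uniformly random sample $S\subseteq V$ of size $\Theta((n\log n)/h)$, which with high probability hits every length-$h$ window on every directed shortest path; and (iii) per update, a Dijkstra/BFS on the auxiliary weighted digraph $G'$ with vertex set $S\cup\{s,t\}$ and edge weights $d_G^{(h)}$.

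For (i) I would instantiate the data structure of Section~\ref{sec:algebraic} with depth parameter $h$; since entries are polynomials of degree at most $h$, every field operation in the Sankowski-style machinery incurs an extra $\tilde{O}(h)$ factor, giving preprocessing $\tilde{O}(hn^\omega)$ and worst-case per-update cost $\tilde{O}(h(n^{\omega(1,1,\mu)-\mu}+n^{\omega(1,\mu,\nu)-\nu}+n^{\mu+\nu}))$. Extracting the $|S\cup\{s,t\}|^2=\tilde{O}((n/h)^2)$ bounded distances needed to assemble $G'$ contributes a further $\tilde{O}(h\cdot(n/h)^2)=\tilde{O}(n^2/h)$, matching the last term in the stated bound. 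For (ii), for any fixed graph and any length-$h$ shortest path the probability that $S$ misses all of its internal vertices is at most $(1-|S|/n)^h\le n^{-c}$, and a polynomial union bound handles all pairs and updates. For (iii), the hitting property implies that any $s\to t$ shortest path of length exceeding $h$ can be broken into consecutive segments of length at most $h$ whose endpoints all lie in $S\cup\{s,t\}$; each such segment appears as an edge of $G'$ of weight equal to its true length, so $d_{G'}(s,t)=d_G(s,t)$ with high probability, and a final Dijkstra on $\tilde{O}(n/h)$ vertices runs in $\tilde{O}((n/h)^2)$ time.

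The main obstacle is the adaptive-adversary guarantee, since a random hitting set $S$ normally yields only oblivious correctness. We exploit the fact that the reported value $d_G(s,t)$ is a deterministic function of the current graph $G$ and therefore leaks no information about $S$ to the adversary; the adversary's next update is then (conditionally) independent of $S$'s random choice, so the hitting-set Chernoff bound survives under adaptivity. A secondary technical point is realizing the $\tilde{O}((n/h)^2)$ pair queries within the stated budget, which relies on Section~\ref{sec:algebraic} exposing a batched row/column extraction from the maintained polynomial inverse and then spending only $\tilde{O}(h)$ per read entry. Plugging in the balanced choice of $h$ together with optimizing $\mu,\nu$ (trading $h\cdot n^{\omega(1,1,\mu)-\mu}$ against $n^2/h$) then yields the $O(n^{1.7035})$ worst-case update bound for the current value of $\omega$, as claimed.
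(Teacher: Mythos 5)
Your proposal is correct and follows essentially the same route as the paper's proof: sample a fixed random hitting set $S$ of size $\tilde O(n/h)$ containing $s,t$, maintain $h$-bounded $S\times S$ distances via the randomized algebraic data structure (\Cref{thm:randomized:low_hop}), rebuild the auxiliary weighted digraph and run Dijkstra on it after each update, and argue adaptive-adversary correctness from the fact that the (correct) exact distance is a deterministic function of $G$ that leaks nothing about $S$. One small imprecision: the $(n/h)^2$ term arises because the data structure \emph{maintains} the $|S\times T|$-size submatrix explicitly as part of its update cost (the $|S\times T|\cdot h$ term in \Cref{thm:randomized:low_hop}), rather than from a separate post-update batch extraction, but this does not change the bound.
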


\paragraph{Randomized Approximate Diameter.}
We can maintain a nearly-$(3/2 +\epsilon)$-approximation
of the diameter in fully dynamic unweighted graphs in $O(n^{1.596})\cdot (\frac{1}{\epsilon})^{O(1)}$ worst-case update time against an adaptive adversary. See \Cref{cor:diam} for details.
This is done by using our emulator to compute $(1+\epsilon)$-MSSP algorithms for certain sets $S$ of size $\tilde{O}(\sqrt{n})$ based on an algorithm by \cite{RV13}. 

Previously, the fastest fully-dynamic algorithm with this approximation guarantee by \cite{BrandN19} had a worst-case update time of $O(n^{1.779})$ and employed randomization against an adaptive adversary. We get better bounds by combining our sparse emulator algorithms with the algorithm of \cite{BrandN19}. 

Dynamic diameter was also analyzed in the partially dynamic setting \cite{AnconaHRWW19,ChoudharyG20}, e.g.~there exists a nearly-$(3/2 + \epsilon)$-approximate decremental algorithm with $m^{1+o(1/\epsilon)}\sqrt{n}/\epsilon^2$ expected total update time \cite{AnconaHRWW19}.

\paragraph{$(1+\epsilon)$-APSP Distance Oracles with Sublinear Query.} Another implication of our new approach for dynamic $(1+\epsilon)$-MSSP is an improved bound for maintaining a data structure supporting all-pairs distance queries that has subquadratic update time $O(n^{1.788}) \cdot O (\tfrac{1}{\epsilon})^{\sqrt{2\log_{1/\epsilon} n}})$ and a small polynomial query time $O (n^{0.45} \epsilon^{-2})$ against an adaptive adversary.
See \Cref{cor:subquad_APSP} for details.

Our result directly improves upon the $O(n^{1.862}\epsilon^{-2}\log \epsilon^{-1})$ update time of a corresponding algorithm by~\cite{BrandN19} which has the same query time as ours and also employs randomization against an adaptive adversary.
The algorithm of~\cite{BrandN19} internally maintains $(1+\epsilon)$-approximate MSSP and thus our result is almost directly implied by our improvement for maintaining approximate MSSP.

\subsection{Further Related Work}
Several state-of-the art dynamic algorithms employ an algebraic approach (i.e.~use fast matrix multiplication) for maintaining reachability and distance information.
As a conditional lower bound by Abboud and Vassilevska Williams~\cite{AbboudW14} shows, this is inherent in certain regimes: 
Unless one is able to multiply two $n\times n$ boolean matrices in $ O(n^{3 - \delta})$ time for some constant $\delta>0$,
no fully dynamic algorithm for $st$ reachability in directed graphs can beat $ O(n^{2-\delta'}) $ update and query time and $ O(n^{3 - \delta'}) $ preprocessing time  (for some constant $ \delta'> 0 $).

While not explicitly stated in \cite{AbboudW14}, the same conditional lower bound extends to fully dynamic $(1 + \epsilon) $-approximate $st$ distances on undirected unweighted graphs for a small enough constant $ \epsilon $.

In the same spirit, \cite{BHNW2021} obtained a more refined conditional lower bound for combinatorial algorithms maintaining sparse near-additive spanners and emulators based on the Combinatorial $k$-Clique hypothesis.

The use of algebraic techniques for maintaining reachability and distance information can be traced back to the path counting approaches of King and Sagert~\cite{KingS02} and Demetrescu and Italiano~\cite{DemetrescuI00}.
Sankowski~\cite{Sankowski04} subsequently developed a more general framework for maintaining the adjoint of a matrix and applied it to maintaining reachability in directed graphs~\cite{Sankowski04} and distances in unweighted, directed graph~\cite{Sankowski05}.
This approach was further refined which led to improved dynamic algorithms for reachability \cite{BrandNS19} as well as for approximate distances~\cite{BrandN19}.
Recently, such algebraic data structures have been enriched to maintain ``witnesses'' that allow reporting paths in addition to the pure reachability/distance information:
the path reporting mechanism of~\cite{BHGWW2021} uses randomization against an oblivious adversary and the one of~\cite{KarczmarzMS22} uses randomization against an adaptive adversary.
The latter paper also contains deterministic bounds for incremental approximate shortest paths independently of our work.

\section{Technical Overview}\label{sec:overview}

In this section we give a high-level overview of our technical contributions. In \Cref{sec:overview:emulator}, we start by presenting deterministic algorithms for maintaining $(1+\epsilon,2)$ and $(1+\epsilon,4)$-emulators with applications respectively in $(1+\epsilon)$-SSSP and $(1+\epsilon)$-$st$ distances.
These emulator algorithms slightly extend a known ``localization''~\cite{HKN2013} of the (randomized) additive emulator construction~\cite{DorHZ00} and have two properties crucial for our bounds:
(1) They are based on a ``deterministic'' and ``slowly changing'' hitting set of high-degree neighborhoods. 
(2) For assigning the edge weights, we only need to compute bounded pairwise distances between the smaller set of nodes involving the hitting set.
We show that in our setting we can -- instead of using a standard randomized approach -- deterministically maintain an approximate solution to this particular hitting-set instance with low recourse.

In \Cref{sec:overview:algebraic}, we then design an algebraic data structure for maintaining bounded distances in such a way that it can deal with a gradually changing hitting set efficiently.
Following the approach by Sankowski \cite{Sankowski05}, maintaining small distances between certain vertices reduces to maintaining a submatrix of some dynamic matrix inverse.
We modify the dynamic matrix inverse algorithm of \cite{BrandNS19} to efficiently maintain such a submatrix. 
In general, the algorithm of \cite{BrandNS19} has faster update but slower query time compared to other dynamic matrix inverse algorithms \cite{Sankowski04}.
However, by exploiting that the queries will be located within some specified submatrix, we can speed up the query complexity.
Using this additional information about the location of the queries, we can periodically precompute larger batches of information during the update phase via fast matrix multiplication. 
For getting this speed up we need to modify the algorithm and analysis of \cite{BrandNS19}, as their algorithm has different layers that need to be handled separately in our case. 

Finally, in \Cref{sec:overview:other} we discuss how using further resparsifications we can obtain near linear size additive spanners with applications in MSSP, APSP, and diameter approximation.

\subsection{Dynamic Emulators via Low-Recourse Hitting Sets}\label{sec:overview for first emulator}
\label{sec:overview:emulator}
\paragraph{Deterministic $(1+\epsilon, 2)$-emulator and $(1+\epsilon)$-SSSP.} 
We start with a deterministic algorithm for maintaining a $(1+\epsilon,2)$-emulator. This algorithm is inspired by a randomized algorithm (working against an oblivious adversary) used by \cite{HKN2013} in the decremental setting, which in turn is based on the purely additive static construction of~\cite{DorHZ00}.
Given an unweighted graph $G=(V,E)$, we maintain an emulator $H$ with size $\tilde{O}(n^{3/2})$ as follows:
\begin{enumerate}[nosep]
    \item Let $d= \sqrt{n}$ be a degree threshold. For any node $v$ where $\deg(v) < d$, add all the edges incident to $v$ to $H$. These edges have weight $1$.
    \item Construct a \textit{hitting set} $A \subseteq V$ of size $\tilde{O}(\sqrt{n})$, such that every node with degree at least $d$, called a \textit{heavy node}, has a neighbor in $A$.
    \item For any node $u \in A$, add an edge to all nodes within distance $\lceil 2/\epsilon \rceil + 1$ to $u$. Set the weight of such an edge $(u,w)$ to $d_G(u,w)$. \label{bullet:lowhop}
    \end{enumerate}

It is easy to see that if we were interested in a randomized algorithm that only works against an oblivious adversary, we could simply construct a hitting set $A$ by uniformly sampling a fixed set of size $\tilde{O}(\sqrt{n})$~\cite{UllmanY91}. 

We could then maintain the corresponding $(\lceil 2/\epsilon \rceil + 1)$-bounded distances for all pairs in $A \times V$ after each update using the algebraic data structure by \cite{Sankowski05} which runs in $O(n^{1.529}\epsilon^{-1} \log\epsilon^{-1})$ time per update.

The distance bound of $(\lceil 2/\epsilon \rceil + 1)$ in our emulator algorithms leverages the power of algebraic distance maintenance data structures because their running times scale with the given distance bound.
However, these ideas alone are not enough for obtaining an efficient deterministic algorithm. We will have to change both the hitting set construction and the algebraic data structure.

Before explaining how to maintain both the hitting set and the corresponding distances deterministically, let us sketch the properties of this emulator and how it can be used for maintaining $(1+\epsilon)$-SSSP. It is easy to see that $H$ has size $\tilde{O}(n^{3/2})$: we add $\tilde{O}(nd)$ edges incident to low-degree nodes, and $\tilde{O}(n^{3/2})$ edges in $A \times V$. For the stretch analysis, consider any pair of nodes $s,t$, and let $\pi$ be the shortest path between $s,t$. 
We can divide $\pi$ into segments of equal length $\lceil 2/\epsilon \rceil$, and possibly one additional smaller segment. Consider one such segment $[u,v]$. If all the nodes on this segment are low-degree, then we have included all the corresponding edges in the emulator. Otherwise there is a node $w \in A$ that is adjacent to the first heavy node on this segment. We have $d_G(w,v) \leq \lceil 2/\epsilon \rceil$, and thus in the third step of the algorithm we have added a (weighted) edge $(w,v)$ in the emulator. It is easy to see that the path going through $w$ either provides a $(1+\epsilon)$ multiplicative factor, or (for the one smaller segment) an additive term of $2$. 

Given a $(1+\frac{\epsilon}{2}, 2)$ emulator, we can now maintain $(1+\epsilon)$-SSSP by (i) using algebraic techniques to maintain $O(1/\epsilon)$-bounded distances from the source $s$ to all nodes in $V$, and (ii) statically running Dijkstra's algorithm on the emulator in time $\tilde{O}(n^{3/2})$, and finally (iii) taking the minimum of the two distance values for each pair $(s,v) \in \{s\}\times V$. We observe that if $d_G(s,v) \leq O(\frac{1}{\epsilon})$, then we are maintaining a correct estimate in step~(i). Otherwise in step~(ii) the combination of the $(1+\frac{\epsilon}{2})$ multiplicative factor and the additive term, leads to an overall $(1+\epsilon)$-approximate estimate.

\paragraph{Deterministic low-recourse hitting set.} As discussed, we can easily obtain a fixed hitting set of size $ \tilde O (n/d) $ using randomization, but we are interested in a deterministic algorithm. One natural approach for constructing the hitting set $A$ deterministically is as follows: For each node $v$ with degree at least $d$, consider a set of exactly $d$ neighbors of $v$. After \textit{each update} we can \textit{statically} and \textit{deterministically} compute an $O(\log n)$-approximation to this instance of the hitting set problem. We use a simple greedy algorithm that proceeds by sequentially adding nodes to $A$ that hit the maximum number of uncovered heavy nodes.
 
 This can be done in $\tilde{O}(nd)$ time and gives us a hitting set of size $ \tilde O (n/d) $ as well. This running time is within our desired update-time bound, but we also need to maintain $\Theta(1/\epsilon)$-bounded distances from elements in this hitting set. 
 %As we will see in Section \ref{sec:overview:algebraic}, 
 As we outline in Appendix \ref{sec:batchquerycomparison},
 by using the naive approach of recomputing a hitting set in each update and employing off-the-shelf algebraic data structures (e.g.~\cite{Sankowski05,BrandNS19}) for maintaining bounded distances in $A \times V$, we would get an update time of $O(n^{1.596})$ for current $\omega$. However, there is a %hardness assumption implying a 
 conditional lower bound of $O(n^{1.529})$ for this problem \cite{BrandNS19}, and our goal is to design an algorithm that matches this bound. 

To get a better running time, we change both our construction and the algebraic data structure (see \Cref{sec:overview:algebraic}) to use a \textit{low-recourse hitting set} instead, which ensures that in each update only a constant number of nodes are added to the set. More formally in Section \ref{sec:hitting_set} we will prove the following lemma:
\begin{restatable}{lemma}{hittingset} \label{lem:hitting-set}
Given a graph $G=(V,E)$ undergoing edge insertions and edge deletions and a degree threshold $d$, call a node $v$ heavy if it has degree at least $d$. We can deterministically maintain a hitting set $A_d$ of size $O(\frac{n \cdot \log n}{d})$ with worst-case $O(1)$ recourse and worst-case $O(d^2 + d \log n)$ time per update (after $ O (n d) $ preprocessing time) such that all heavy nodes have a neighbor in $A_d$.
\end{restatable}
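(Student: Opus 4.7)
The plan is to run a dynamic version of the classical $O(\log n)$-greedy set-cover heuristic, exploiting that each heavy node has at least $d$ candidate covers; hence the fractional LP optimum is at most $n/d$ and greedy statically returns a hitting set of size $O(n \log n / d)$. For each heavy node $v$ I designate a canonical size-$d$ candidate set $N_d(v)$ (e.g.\ its $d$ smallest-ID neighbors), and for each node $u$ I maintain the counter $c(u) = |\{v : v \text{ heavy and uncovered}, u \in N_d(v)\}|$ inside a max-priority queue. Preprocessing runs textbook greedy (repeatedly extract the max-$c$ candidate, add it to $A_d$, mark its associated heavy nodes as covered, and decrement the counters of their other candidates); since each of the $\Theta(nd)$ candidate-heavy incidences is touched $O(1)$ times, this costs $O(nd)$ and produces an initial $A_d$ of size $O(n \log n / d)$.

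\textbf{Update and timing.} On an edge insertion/deletion of $(u, v)$, I first update the two endpoints' degrees and membership lists; only $u$ or $v$ can flip heavy status, and if so $N_d(\cdot)$ is (re)instantiated with $O(d)$ slots, each bumping or decrementing some $c$-value and incurring one $O(\log n)$ heap reheapify, for $O(d \log n)$ total. For each of the $O(1)$ heavy nodes directly affected by the edge that now lack a neighbor in $A_d$, I pull the top-$c$ element of its $N_d$ out of the heap and add it to $A_d$; when a node $w$ is newly added, I mark its $\le d$ heavy neighbors as covered and decrement $c$ across their other $\le d$ candidates, which accounts for the $O(d^2)$ term. This establishes both the $O(d^2 + d \log n)$ worst-case time per update and the $O(1)$ bound on reactive additions (but does not yet bound removals, which is why we need the next step).

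\textbf{Main obstacle: holding $|A_d|$ at $O(n \log n / d)$ with $O(1)$ recourse.} The delicate part is that reactive additions accumulate and have no natural removal token, so naively $|A_d|$ drifts upward, and eagerly deleting every node whose counter dropped to $0$ would cost up to $\Omega(d)$ removals in a single update. My plan is to interleave the reactive layer with a \emph{slow rebuild}: during $T = \Theta(n \log n / d)$ consecutive updates I perform a constant-sized chunk of a fresh static-greedy run on the current graph, so after $T$ updates a new hitting set $A_d^{\text{new}}$ of size $O(n \log n / d)$ is ready; I then spend the next $T$ updates phasing in $A_d^{\text{new}}$ and phasing out the old set at a rate of one add and one remove per update. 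Reactive additions during a window are at most $O(T) = O(n \log n / d)$, so at any moment $|A_d| \le |A_d^{\text{old}}| + |A_d^{\text{new}}| + O(n \log n / d) = O(n \log n / d)$; visible recourse is at most one reactive addition, plus one background addition and one background removal, i.e.\ $O(1)$. The key thing to verify is that $A_d^{\text{new}}$, built from a stale snapshot, remains a valid hitting set during phase-in despite graph changes, which the reactive layer handles by continuously patching $A_d^{\text{new}}$ for any heavy nodes that appear or get uncovered mid-phase.
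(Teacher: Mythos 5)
Your overall plan --- a reactive layer interleaved with a slow periodic rebuild --- is the same as the paper's, and your size accounting is sound, but the proposal leaves the one genuinely delicate step of the argument unaddressed. You write that the reactive layer handles the staleness of $A_d^{\text{new}}$ ``by continuously patching $A_d^{\text{new}}$ for any heavy nodes that appear or get uncovered mid-phase,'' but the reactive layer is triggered only by the \emph{current} edge update, while the heavy nodes that $A_d^{\text{new}}$ fails to cover are exactly those whose incident edges changed during the \emph{build} window, when $A_d^{\text{new}}$ did not yet exist as a finished set to patch. Once phase-out of $A_d^{\text{old}}$ begins, those nodes can lose their only cover with no event that ever revisits them. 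The paper closes this hole by recording every endpoint of an update seen during the build window into a list $U$, and then --- crucially, \emph{after} the build is complete but \emph{before} the first removal from $A_{\text{old}}$ --- devoting a dedicated subphase to fixing $A_{\text{new}}$ on every element of $U$. Relatedly, your simultaneous ``one add and one remove per update'' phase-in/phase-out is unsafe on its own: at an intermediate moment neither the surviving part of $A_d^{\text{old}}$ nor the partial $A_d^{\text{new}}$ is a hitting set, and removing a single node of $A_d^{\text{old}}$ can uncover up to $\Theta(d)$ heavy nodes with no edge update to trigger a repair. The paper's five-subphase schedule is doing real work here: it orders the sweep so that removals from $A_{\text{old}}$ only begin after $A_{\text{new}}$ has been fully inserted and fully caught up with the intervening history.

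Two smaller points. First, the rebuild cannot literally run greedy ``on the current graph'': a consistent snapshot costs $\Theta(nd)$ and cannot be taken atomically within one update, so the paper constructs the auxiliary graph $G'$ incrementally (copying $\Theta(d)$ nodes' first $d$ neighbors per update, $O(d^2)$ per step), and the $U$-list is precisely what compensates for $G'$ being a smeared rather than point-in-time snapshot. Second, maintaining the max-priority counters $c(\cdot)$ in a heap makes each reactive addition cost $O(d^2 \log n)$ (up to $d$ heavy neighbors, each with $d$ other candidates, each decrement an $O(\log n)$ reheapify), which overshoots the claimed $O(d^2 + d\log n)$. The paper's fix subroutine simply adds an arbitrary neighbor of the offending heavy node in $O(d\log n)$; the $O(\frac{n\log n}{d})$ size bound comes entirely from the periodic greedy rebuild, so the reactive additions need not be chosen cleverly at all.
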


At a high-level our dynamic low recourse hitting set proceeds as follows: we start by using the static greedy hitting set algorithm.
We then note that each update (insertion or deletion) can make at most $2$ heavy nodes uncovered. We can keep on adding arbitrary neighbors of such nodes to our hitting set $A$ until the size of the hitting set exceeds its initial $ O (\tfrac{n}{d} \log n) $ bound by a constant factor, and then reset the construction. This leads to an amortized constant recourse bound, and we can then use a standard technique to turn this into a worst case constant recourse bound (see Section \ref{sec:hitting_set} for details).

Note that this hitting set problem can be seen as a set cover instance of size $O(nd)$, where each set consists of exactly $d$ neighbors of a heavy node.
Dynamic set cover approximation has received significant attention in recent years (e.g.~\cite{abboud2019, BHN2019, gupta2017, BHNW2021}). The most relevant result to our setting is a fully-dynamic $O(\log n)$-approximate set cover algorithm by \cite{gupta2017}). However we cannot use their result directly, as they state that their polynomial time algorithm only leads to constant \textit{amortized} recourse, and their update-time guarantees are also only amortized\footnote{Of course the goal in \cite{gupta2017} is a generic set cover approximation algorithm, which is why they are not comparable to our specialized algorithm. Also, the other set cover algorithms cited lead to approximation ratio dependent on an instance parameter $f$, which can be as large as $n$ in our case.}. Here we use a simple approach that utilizes the properties of our hitting set instance, which is enough to get \textit{worst-case} recourse bounds.

\paragraph{Deterministic $(1+\epsilon,4)$-emulator for $(1+\epsilon)$-$st$ distances.}
Next, we outline how we can improve the $O(n^{1.529})$ update time to $O(n^{1.407})$ in case of $st$-distances.
For this purpose, we maintain a $(1+\epsilon, 4)$-emulator with size $\tilde{O}(n^{4/3})$, which again is inspired by the purely additive construction of~\cite{DorHZ00} in the static setting, by making the following modifications to the algorithm described in Section~\ref{sec:overview for first emulator} above: We set the degree threshold to $d=n^{1/3}$. More importantly, rather than adding edges corresponding to bounded distances in $A_d \times V$, we only add pairwise edges between nodes (with bounded distance) in $A_d \times A_d$.
This has two advantages: First, we can run Dijkstra on a sparser graph. Second, the algebraic steps can be performed much faster when we only need to maintain pairwise distances between two sets of sublinear size (here $|A_d|=\tilde{O}(n^{2/3})$, rather than from a set of size $\tilde{O}(\sqrt{n})$ to \textit{all} nodes in $V$. 

It is easy to see that this emulator has size $\tilde{O}(n^{4/3})$. There are $\tilde{O}(nd)$ edges corresponding to low-degree nodes, and $\tilde{O}(n^{4/3})$ corresponding to edges in $A_d \times A_d$. The stretch argument follows a similar structure to the one for the $(1+\epsilon,2)$-emulator. Again, for each pair of nodes $s,t$, we divide the shortest path to segments of equal length $\Theta(1/\epsilon)$. The main difference is that here we should consider the first and last heavy nodes on each segment, which we denote by $x$ and $y$. Then there must be nodes $w_1, w_2 \in A_d$ that are adjacent to $x$ and $y$ respectively. We have $d_G(w_1,w_2) \leq \Theta(1/\epsilon)$ and thus we have added an edge $(w_1,w_2)$ in the emulator. The path using this edge will lead to either a $(1+\epsilon)$-multiplicative stretch for this segment, or an additive term of $4$ for the (at most) one smaller segment.

Note that this algorithm does not lead to better bounds for single-source distances since querying $ \Theta(1/\epsilon)$-bounded single-source distances still takes $O(n^{1.529})$ time using known algebraic techniques. However, if we are interested in the $\Theta(1/\epsilon)$-bounded distance between a fixed pair of nodes $s$ and $t$, our algebraic approach, as outlined in \Cref{sec:overview:algebraic}, leads to better bounds. In this case, we get an improved bound of $O(n^{1.407})$.

\subsection{Dynamic Pairwise Bounded Distances via Matrix Inverse}\label{sec:overview:algebraic}
As outlined before, %for obtaining our desired bounds, 
we must efficiently maintain bounded pairwise distances 
for some sets $S\times T \subseteq V\times V$, 
where the sets $S$ and $T$ are dynamically changing. 
We additionally use the fact that even though these sets change, 
they do not change substantially with each update because of our low-recourse hitting sets.
In this section, we outline the following: 
(i) a reduction from maintaining $S\times T$-distances to maintaining a submatrix%
\footnote{Throughout, we use $\mN_{S,T}$ for sets $S,T \subseteq [n]$ 
and $n\times n$ matrix $\mN$ to denote the submatrix consisting of rows with index in $S$ 
and columns with index in $T$.} 
$(\mA^{-1})_{S,T}$ for some dynamic matrix $\mA$, 
and (ii) a dynamic algorithm maintaining this submatrix of the inverse efficiently.
This dynamic matrix inverse algorithm, together with the reduction, then imply the following dynamic algorithm  (\Cref{thm:overview:low_hop}, proven in \Cref{sec:algebraic}) for maintaining bounded distances.
\begin{restatable}{theorem}{lowHopThm}
    \label{lem:overview:pairwise}
    \label{thm:overview:low_hop}
    For all $0\le\nu\le\mu\le1$ there exists a deterministic dynamic algorithm that, after preprocessing a given unweighted directed graph $G$ and sets $S,T \subseteq V$, supports edge-updates to $G$ and set-updates to $S$ and $T$ (i.e.~adding or removing a node to $S$ or $T$) as long as $|S|,|T|\le n^\mu$ throughout all updates.
    After each edge- or set-update the algorithm returns the $h$-bounded pairwise distances of $S\times T$ in $G$.

    The preprocessing time is $O(n^\omega h^2 \log h)$,
    and the worst-case update time is
    $$O((n^{\omega(1,1,\mu)-\mu} + n^{\omega(1,\mu,\nu)-\nu} + n^{\mu+\nu} + |S\times T|) h^2 \log h).$$
    For current bounds on rectangular matrix multiplication $\omega(\cdot,\cdot,\cdot)$ \cite{GallU18}, 
    this is $O((n^{1.407} + |S\times T|) h^2 \log h)$ for $|S|,|T| \le n^{0.85}$, 
    or $O((n^{1.529} + |S\times T|) h^2 \log h)$ for any (possibly larger) $S,T$.
    %For both algorithms, all update times are worst-case.
\end{restatable}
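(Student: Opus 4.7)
The argument would factor through two independent pieces: (a) the Sankowski-style reduction from bounded $S\times T$-distances to querying a submatrix of a dynamic matrix inverse, and (b) a variant of the three-layer look-ahead matrix-inverse algorithm of BrandNS19 that exploits the restriction of the queries to the slowly changing submatrix indexed by $S\times T$.

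For (a) I would let $\mM$ be the $\{0,1\}$ adjacency matrix of $G$ and work with $\mZ := \mI - X\mM$ in the ring $\mathbb{F}_p[X]/(X^{h+1})$ for a prime $p > n^h$. The identity $\mZ^{-1} \equiv \sum_{k=0}^{h} X^k \mM^k \pmod{X^{h+1}}$ says that the $X^k$-coefficient of $(\mZ^{-1})_{ij}$ equals the number of length-$k$ walks from $i$ to $j$; the choice $p>n^h$ prevents any cancellation, so the smallest $k$ for which that coefficient is nonzero is exactly $d_G(i,j)$ whenever the distance is at most $h$, and otherwise the whole polynomial vanishes. This reduction is deterministic, each edge update to $G$ is a rank-one update of $\mZ$, each ring operation costs $O(h \log h \cdot \log p) = O(h^2 \log h)$ once polynomial multiplication is done by FFT and field arithmetic costs $O(\log p)=O(h\log n)$, which accounts for the universal $h^2 \log h$ factor. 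Set updates to $S$ or $T$ leave $\mZ$ untouched and only change which rows and columns of $\mZ^{-1}$ must be reported.

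For (b) I would carry over the structure of BrandNS19: a base inverse $\mN$ of $\mZ$ at the start of the outer epoch, together with an outer buffer of size $n^\mu$ and an inner buffer of size $n^\nu$ that accumulate rank-one updates as low-rank factors $\mU,\mV$. The Woodbury identity lets me write
\[
(\mZ^{-1})_{S,T} \;=\; \mN_{S,T} \;-\; (\mN_{S,\ast}\mU)\,(\mI + \mV^{\top}\mN\mU)^{-1}\,(\mV^{\top}\mN_{\ast,T}).
\]
The key modification is to precompute and maintain, for each layer, only the \emph{small} rectangular factors $\mN_{S,\ast}\mU$ and $\mV^{\top}\mN_{\ast,T}$ (of dimensions $n^\mu\times n^\mu$ and $n^\mu\times n^\nu$), rather than the $n\times n^\mu$ products used by the original scheme when queries are unrestricted. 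Rebuilding the outer factors after an epoch of $n^\mu$ updates costs one rectangular multiplication of complexity $n^{\omega(1,1,\mu)}$, amortizing to the first term $n^{\omega(1,1,\mu)-\mu}$; rebuilding the inner factors against the outer buffer contributes $n^{\omega(1,\mu,\nu)-\nu}$; the per-update Woodbury inversion of an $(n^\mu+n^\nu)$-block costs $n^{\mu+\nu}$; and reading out the answer gives the last term $|S\times T|$. Each set update is handled using $|S|,|T|\le n^\mu$ together with the implicit constant recourse of the caller: adding a vertex to $S$ is a single vector--matrix product against the currently cached $\mU$, which fits inside the $n^{1+\mu}$ budget already paid per update. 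The amortized bounds are converted to worst case by running two staggered copies of the look-ahead scheme, as in BrandNS19.

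The step I expect to be the main obstacle is the bookkeeping across the three layers when $S$ and $T$ themselves drift mid-epoch: the precomputed factors $\mN_{S,\ast}\mU$ are consistent only with the value of $S$ and the buffer contents fixed at the last layer reset, so every insertion into $S$ forces a patch against \emph{every} currently live buffer, and every flush between layers forces a patch against the current $S$. Making these patches fit into the worst-case budget with exactly the stated dependence on the rectangular exponents $\omega(1,1,\mu)$ and $\omega(1,\mu,\nu)$ requires interleaving them with the staggered look-ahead rather than treating each layer in isolation, and aligning this interleaving with the constant-recourse guarantee of \Cref{lem:hitting-set} is where the main modification of the BrandNS19 analysis will be needed.
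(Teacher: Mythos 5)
Your plan matches the paper's approach in outline: part (a) is exactly the paper's reduction -- work with $(\mI-X\mA)^{-1}$ over $\mathbb{Z}_p[X]/\langle X^{h+1}\rangle$ for a prime $p>n^h$ so that walk counts up to length $h$ are never reduced mod $p$, giving the $O(h^2\log h)$ blowup -- and part (b) is the same idea of retrofitting the two-layer look-ahead scheme of BrandNS19 with precomputed factors restricted to the query window.

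There is, however, a concrete gap in (b), and you correctly sense it when you call the layer bookkeeping the ``main obstacle'': maintaining only the $S$- and $T$-restricted factors $\mN_{S,\ast}\mU$ and $\mV^\top\mN_{\ast,T}$ is not enough to advance the data structure under an edge update. When you fold in a rank-one change $v\,e_ie_j^\top$, the Sherman--Morrison correction to $(\mM^{-1})_{S,T}$ requires the partial column $\mM^{-1}_{S,i}$ and partial row $\mM^{-1}_{j,T}$ for the \emph{arbitrary} update coordinates $(i,j)$, not for indices inside $S$ or $T$. Your cached factors never see these. The paper resolves this by an extra level of indirection (\Cref{lem:reduction_multi}): rather than maintaining $\mM^{-1}_{S,T}$ directly inside the look-ahead data structure, it builds a structure that answers queries $\mM^{-1}_{i,T}$ for \emph{any} $i\in[n]$ (at cost $O(n^{\mu+\nu})$), runs a transposed copy to get $\mM^{-1}_{S,j}$ for any $j$, and assembles $\mM^{-1}_{S,T}$ on top via Sherman--Morrison. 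To make the ``any $i$'' query fast, the inner data structure (\Cref{lem:batch_update}) must precompute full-height columns $\mM''^{-1}_{[n],T}$ -- i.e.~$n\times n^\mu$, not $n^\mu\times n^\mu$ as in your sketch -- which is why the $n^{\omega(1,\mu,\nu)-\nu}$ term appears as a batch rebuild. Once this reduction is in place, set updates to $S$ or $T$ are handled locally in each layer (\Cref{lem:batch_update,lem:algebraic:entry_update}) and there is no cross-layer ``patch against every live buffer'' to reconcile; the worry dissolves.

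One smaller point: the data structure guarantee in \Cref{thm:overview:low_hop} does not rely on the caller's constant recourse. It charges $O((n^{\omega(1,\mu,\nu)-\nu}+n^{\mu+\nu}+|S\times T|)h^2\log h)$ per set-update no matter how often $S$ and $T$ change; the low-recourse property of \Cref{lem:hitting-set} matters only for the \emph{total} cost incurred by the emulator layer that calls it, not for the correctness or per-update cost of this theorem.
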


For our approximate $st$-distance algorithm, we will set $|S|=|T|=\tilde{O}(n^{2/3})$ and $h = O(1/\epsilon)$,
resulting in $O(n^{1.407} \epsilon^{-2} \log \epsilon^{-1})$ update time.
For our approximate SSSP algorithm we will set $|S|=n$, $|T| = \tilde{O}(\sqrt{n})$ and $h = O(1/\epsilon)$,
resulting in $O(n^{1.529} \epsilon^{-2} \log \epsilon^{-1})$ update time.

\paragraph{Reducing distances to matrix inverse.}
All previous fully dynamic algebraic algorithms that maintain distances work by reducing the task to the so called ``dynamic matrix inverse'' problem \cite{Sankowski05,BrandNS19,BrandN19,BrandS19,GuR21,BHGWW2021}.
This reduction is due to Sankowski \cite{Sankowski05} who originally used the adjoint instead of the matrix inverse.
In previous work on fully dynamic algebraic algorithms, this reduction was always randomized. 
Here we recap the reduction when using matrix inverse instead of adjoint, and argue why the reduction can be derandomized for our use-case of maintaining bounded distances. 
Readers already familiar with this reduction might want to skip ahead to the paragraph labeled ``\emph{Submatrix maintenance}''.

For the reduction, we are given an adjacency matrix $\mA$. 
Note that $\mA^k_{s,t}$ (where $\mA^k$ is the $k$-th power of $ \mA $) is the number of (not necessarily simple) paths from $s$ to $t$ of length $k$. 
Specifically, the smallest $k$ with $\mA^k_{s,t} \neq 0$ is the distance from $s$ to $t$.
We can maintain these powers of $\mA$ via dynamic matrix inverse as follows:

Let $X$ be some symbol and let $(\mI-X\mA)$ be the matrix with $1$ on the diagonal and $(\mI-X\mA)_{u,v} = -X$ for all edges $(u,v) \in E$.
When performing all arithmetic operations\footnote{%
For our proofs, this is formalized as the entries of the matrix being from $\F[X]/\langle X^h \rangle$ for some field $\F$, i.e.~polynomials over $\F$ where we truncate all monomials of degree $\ge h$.} modulo $X^h$, we have $(\mI-X\mA)^{-1} = \sum_{k=0}^{h-1} X^k \mA^k$. To see this, observe
$$(\mI-X\mA)\cdot\sum_{k=0}^{h-1} X^k \mA^k = \sum_{k=0}^{h-1} X^k \mA^k - \sum_{k=1}^{h} X^k \mA^k = \mI$$
where the last identity holds by $X^{h} \mA^k = 0$ because of the entry-wise mod $X^h$.
Thus, a dynamic algorithm that maintains the inverse of matrix $(\mI-X\mA)$ % \in (\F[X]/\langle X^h \rangle)^{n\times n}$ 
is able to maintain distances of length $< h$ in dynamic graphs.
The task of maintaining pairwise distances for $S\times T$ thus reduces to the task of maintaining the submatrix $(\mM^{-1})_{S,T}$ for some dynamic matrix $\mM$.

Note that the number of $uv$-paths of length $k$, given by $\mA^k_{u,v}$, might be as large as $O(n^k)$.
Representing this number needs $O(k)$ words in Word-RAM model and each arithmetic operation needs $O(k)$ time \cite{Knuth97}.
In general, a graph might have paths of length $O(n)$, thus randomization was used in previous work \cite{Sankowski05,BrandNS19,BrandN19,BrandS19,GuR21,BHGWW2021} 
to bound the bit-length and arithmetic complexity of the numbers involved (e.g.~by maintaining the number of paths modulo some small random prime $p = \poly(n)$, or by using Schwartz-Zippel lemma).\footnote{We focus on fully dynamic algorithms here. We note that in the \emph{incremental} setting (i.e.~only edge insertions), such randomization is not required. See e.g.~\cite{KarczmarzMS22}.}

However, in our use-case, we only need distances up to $O(1/\epsilon)$ thanks to properties of our emulators, thus the randomization is not required.
Each arithmetic operation will only need $O(1/\epsilon)$ time as we only consider numbers represented by $O(1/\epsilon)$ words.

\paragraph{Submatrix maintenance.}

As explained in the previous paragraph, our dynamic distance algorithms reduce to a dynamic matrix inverse algorithm that maintains a submatrix $\mM^{-1}_{S,T}$ for some dynamic matrix $\mM$.
Any existing dynamic matrix algorithm can maintain such a submatrix by just querying all $|S\times T|$ entries after each change to $\mM$, but this would not be fast enough for our purposes.
We instead propose a new dynamic matrix inverse algorithm that can maintain such a submatrix efficiently, if the sets $S$ and $T$ are slowly changing.

The construction of this dynamic algorithm relies on reducing maintaining $\mM^{-1}_{S,T}$ to maintaining partial rows of the form $(\mM^{-1})_{k,T}$ for any $k\in[n]$, formalized in \Cref{lem:reduction_multi} (proven in \Cref{sec:algebraic}).

\begin{restatable*}{lemma}{reductionMulti}\label{lem:reduction_multi}
Assume we are given a dynamic algorithm 
that initializes on a dynamic set $T\subset[n]$ and a dynamic $n\times n$ matrix $\mM$ that is promised to stay non-singular.
Assume the algorithms supports both changing any entry of $\mM$ and adding/removing any index to/from $T$ in $O(u(|T|,n))$ operations,
and supports queries for any $i\in[n]$ that return $\mM^{-1}_{i,T}$ in $O(q(|T|,n))$ operations.

Then the dynamic algorithm can also maintain $\mM^{-1}_{S,T}$ explicitly for dynamic matrix $\mM$ and dynamic sets $S,T\subset[n]$
while the update time increases to $O(u(k,n)+q(k,n)+|S\times T|)$ for $k=\max(|S|,|T|)$.
The preprocessing time increases by an additive $O(n^\omega)$ operations.
\end{restatable*}

Thus it suffices to design a dynamic matrix inverse algorithm that supports efficient queries to partial rows $\mM^{-1}_{i,T}$.

Our proposed algorithm is a modified version of the dynamic matrix inverse algorithm by \cite{BrandNS19}. 
Their data structure has the fastest known update complexity among all dynamic matrix inverse algorithms, but comes at the cost of slower queries than some data structures from \cite{Sankowski04}.

We are able to accelerate the queries of \cite{BrandNS19} by exploiting the fact that set $T$ is slowly changing, thus we know ahead of time which entries of the inverse might be queried in the future. 
By preprocessing these entries, we can speed up queries to $\mM^{-1}_{i,T}$ for any $i\in[n]$ and a dynamic set $T\subset[n]$.
In addition to these faster queries, we also simplify the proof and the structure of the dynamic algorithm from \cite{BrandNS19}.

We next explain how to achieve such a speed up. We start with a quick recap of how the data structure of \cite{BrandNS19} represents the dynamic matrix inverse and then explain how we modify the algorithm.
%We maintain the dynamic matrix $\mM$ in the following form. 
Let $\mM'$ be the dynamic matrix $\mM$ during initialization, then we maintain $\mM$ in the following implicit form:
\begin{align}
\mM = \mM' + \mU' \mV'^\top + \mU \mV^\top \label{eq:overview:invariant}
\end{align}
where for some $0\le\nu\le\mu\le1$, the matrices $\mU',\mV'$ have at most $n^\mu$ columns and $\mU,\mV$ have at most $n^\nu$ columns, 
all of which have at most one non-zero entry per column.
Initially, $\mU,\mU',\mV,\mV'$ are all empty matrices (i.e. with 0 columns) as $\mM = \mM'$.
Then, with each update to $\mM$, we update $\mU$ and $\mV$ as follows:
The entry update to $\mM_{i,j}$ can be represented as adding some $v\cdot e_i e_j^\top$ to $\mM$ for some scalar $v$. 
We can thus maintain \eqref{eq:overview:invariant} by setting 
$\mU \leftarrow [\mU|v\cdot e_i]$ and $\mV \leftarrow [\mV| e_j]$ 
(i.e.~appending a new column to $\mU$ and $\mV$).
After $n^\nu$ updates, the matrices $\mU$ and $\mV$ have $n^\nu$ columns 
%\yasamin{is $[\mU'|\mU]$ standard enough notation/ no need to define?}
%\jan{added that this means adding a column}
and we append these columns to $\mU',\mV'$ by setting $\mU' \leftarrow [\mU'|\mU]$, $\mV' \leftarrow [\mV'|\mV]$,
then we reset $\mU,\mV$ to be empty matrices (i.e. with 0 columns).
Thus $\mM$ is still maintained in form \eqref{eq:overview:invariant} 
and we can assume $\mU$, $\mV$ always have at most $n^\nu$ columns.
After $n^\mu$ updates, the algorithm is reset by letting $\mM' \leftarrow \mM$ and all $\mU,\mU',\mV,\mV'$ are reset to be empty matrices. Thus we can also assume $\mU',\mV'$ have at most $n^\mu$ columns.

The task is now to maintain $\mM^{-1}$ in some implicit form that allows for fast queries to $\mM^{-1}_{i,T}$ for any $i\in[n]$ and a dynamic set $T\subset[n]$.
For this consider the following Sherman-Morrison-Woodbury identity. 

\begin{lemma}[{\cite{ShermanM50,Woodbury50}}]
\label{lem:woodbury}
For any non-singular $\mM$ and $\mM+\mU\mV^\top$
we have
$$
(\mM + \mU \mV^\top)^{-1}
=
\mM^{-1} - \mM^{-1} \mU (\mI + \mV^\top \mM^{-1} \mU)^{-1} \mV^\top \mM^{-1}.
$$
\end{lemma}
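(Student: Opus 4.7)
The plan is to verify the identity by direct right-multiplication: compute $(\mM + \mU\mV^\top)$ times the claimed inverse and check that the product simplifies to $\mI$. Since $\mM + \mU\mV^\top$ is assumed non-singular, exhibiting any right inverse is enough to conclude that the right-hand side equals $(\mM + \mU\mV^\top)^{-1}$. Distributing the product produces four terms; three of them share a common right factor $(\mI + \mV^\top\mM^{-1}\mU)^{-1}\mV^\top\mM^{-1}$, and the remaining term is $\mM\mM^{-1} = \mI$.

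The key algebraic step is to combine those three terms by pulling out $\mU$ on the left and recognizing that the middle factor becomes
\[
\bigl[\mI + \mV^\top\mM^{-1}\mU\bigr] \cdot \bigl(\mI + \mV^\top\mM^{-1}\mU\bigr)^{-1} = \mI,
\]
which kills the correction. Concretely, after distribution one is left with
\[
\mI + \mU\mV^\top\mM^{-1} - \mU\bigl(\mI + \mV^\top\mM^{-1}\mU\bigr)\bigl(\mI + \mV^\top\mM^{-1}\mU\bigr)^{-1}\mV^\top\mM^{-1} = \mI + \mU\mV^\top\mM^{-1} - \mU\mV^\top\mM^{-1} = \mI,
\]
using that $\mU\mV^\top\mM^{-1}\mU = \mU\cdot(\mV^\top\mM^{-1}\mU)$ contributes the $\mV^\top\mM^{-1}\mU$ summand inside the bracket while the other term contributes the $\mI$.

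The only subtle point is ensuring $\mI + \mV^\top\mM^{-1}\mU$ is invertible so that the right-hand side is well defined. I would handle this via Sylvester's determinant identity
\[
\det(\mM + \mU\mV^\top) = \det(\mM)\cdot\det(\mI + \mV^\top\mM^{-1}\mU);
\]
since both $\mM$ and $\mM + \mU\mV^\top$ are non-singular by hypothesis, the right-hand determinant is non-zero, so $\mI + \mV^\top\mM^{-1}\mU$ is non-singular. With this verified, the computation above closes the proof. I do not anticipate any genuine obstacle here: this is a textbook identity, so every step is routine linear algebra, and the only thing to be careful about is treating the invertibility of $\mI + \mV^\top\mM^{-1}\mU$ as a lemma rather than an assumption.
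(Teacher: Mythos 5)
Your verification is correct and complete. The paper does not actually prove this lemma---it is stated as a citation to Sherman--Morrison and Woodbury and used as a black box---so there is no proof in the paper to compare against. Your direct right-multiplication check is the standard textbook argument, and the step that merits the most care---the invertibility of $\mI + \mV^\top\mM^{-1}\mU$, which the lemma statement silently assumes---is handled cleanly via Sylvester's determinant identity, giving a self-contained proof of exactly the version the paper states (with both $\mM$ and $\mM+\mU\mV^\top$ assumed non-singular). The only cosmetic suggestion: you could also verify left-multiplication, or note explicitly that a two-sided inverse of a square matrix is unique, but exhibiting a right inverse of a matrix already known to be invertible is indeed sufficient, as you observe.
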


By applying this identity twice (once for $\mM:=\mM''+\mU\mV^\top$ and once for $\mM'':=\mM'+\mU'\mV'^\top$) we can write:
\begin{align}
\mM^{-1} = \underbrace{\mM'^{-1} + \mA \mB}_{=\mM''^{-1}} +&~ \mM''^{-1}\mU \mC \mV^\top \mM''^{-1} \label{eq:maintenance}\\
\mA := \mM'^{-1} \mU' (\mI + \mV'^\top \mM'^{-1} \mU')^{-1}, \quad \mB :=&~\mV'^\top \mM'^{-1}, \quad \mC := (\mI + \mV^\top \mM''^{-1} \mU)^{-1} \notag
\end{align}
where matrices $\mA,\mB,\mC$ are maintained by the data structure.
%\yasamin{I think footnote notation is now outdated with your recent changes}
In \cite{BrandNS19}, it was shown that this representation (that is, matrices $\mM'^{-1},\mA,\mB,\mU,\mV,\mC$) can be maintained in $O(n^{\omega(1,1,\mu)-\mu} + n^{\omega(1,\mu,\nu)-\nu}+n^{\mu+\nu})$ time per update.\footnote{%
Technically, \cite{BrandNS19} uses \Cref{lem:woodbury} in the form $(\mM+\mU\mV^\top)^{-1} = \mM^{-1}\mT$ for $\mT = (\mU(\mI+\mV^\top\mM^{-1}\mU)^{-1}\mV^\top\mM^{-1})$. 
Then sum \eqref{eq:maintenance} is written as a matrix product of two such $\mT$, one for $\mU\mV^\top$ and one for $\mU'\mV'^\top$. 
In \Cref{sec:algebraic} we reprove the algorithm in sum-form \eqref{eq:maintenance} which simplifies both the analysis of the algorithm and the analysis of our modifications to accelerate the queries.}

\begin{figure}
\center
% left bottom right top
\includegraphics[trim={0 200 110 50},clip,scale=0.5]{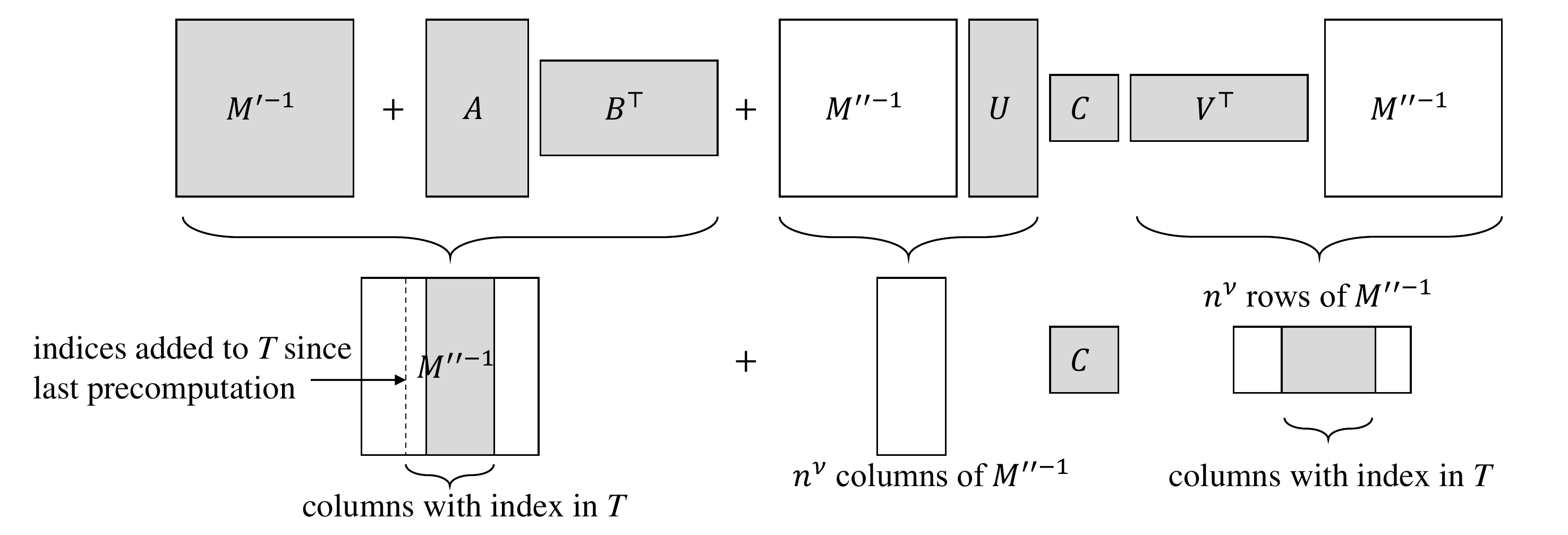}
\caption{\label{fig:maintenance}%
Implicit representation of $\mM^{-1}$ \eqref{eq:maintenance}. Gray blocks represent (sub-)matrices that are explicitly maintained. White (sub-)matrices are only implicitly accessible (i.e.~entries can be queried). Matrices $\mA,\mB$ have $n^\mu$ columns while $\mU,\mV,\mC$ have $n^\nu$ columns for $0\le\nu\le\mu\le1$. Matrices $\mU,\mV$ have only one non-zero entry per column.
}
\end{figure}

Here the representation of $\mM^{-1}$ and $\mM''^{-1}$ is only implicit via \eqref{eq:maintenance}, while matrices $\mM'^{-1}$, $\mA$, $\mB$, $\mU$, $\mV$, $\mC$ are known explicitly (i.e.~direct read access in memory). The first row of \Cref{fig:maintenance} shows \eqref{eq:maintenance} where each box represents one of the matrices and gray matrices are computed explicitly.
We modify the algorithm by computing some submatrices of the implicit $\mM''^{-1}$ and $\mV^\top\mM''^{-1}$ explicitly (see gray areas in the second row of \Cref{fig:maintenance}). 
Every time matrices $\mA$ and $\mB$ change (i.e.~every $n^\nu$ iterations) we precompute $\mM''^{-1}_{[n],T} = \mM'^{-1}_{[n],T} + (\mA \mB^\top)_{[n],T}$ for current set $T$.
It is possible to show that this precomputation can be performed in $O(n^{\omega(1,\mu,\nu)})$ operations if $|T| \le n^\mu$.
Since $T$ is slowly changing, whenever we attempt to query $\mM''^{-1}_{i,T}$ at a later point, there are at most $O(n^\nu)$ entries that have not been precomputed yet. Each of these missing entries can be computed in $O(n^\mu)$ time because $\mM''^{-1}_{i,j} = \mM'^{-1}_{i,j} + (e_i^\top \mA) (\mB e_j)$ where $\mA$ and $\mB$ have at most $n^\mu$ columns.
Thus any row $\mM''^{-1}_{i,T}$ can be obtained in $O(n^{\nu+\mu})$ operations.

With every update to $\mM$, we also maintain the columns of $\mV^\top\mM''^{-1}$ with index in $T$. 
Note that by $\mV$ having at most $n^\nu$ columns, each with only one non-zero entry, 
$\mV^\top\mM''^{-1}$ are just $\le n^\nu$ rows of $\mM''^{-1}$.
Further, with each update to $\mM$, $\mV$ grows by one column, 
so one more row of $(\mM''^{-1})_{i,[n]}$ is added to $\mV^\top\mM''^{-1}$ for some $i\in[n]$. 
So we can maintain the desired submatrix of $\mV^\top\mM''^{-1}$ by querying the entries $\mM''^{-1}_{i,T}$ in $O(n^{\mu+\nu})$ operations.
If an index is added to $T$, we need to compute one new column of $\mV^\top\mM''^{-1}$, 
which means we just need to query $\le n^\nu$ entries of $\mM''^{-1}$.
This can also be done in $O(n^{\nu+\mu})$ operations.

With these explicit submatrices maintained (see \Cref{fig:maintenance} for a summary), we can now query any $\mM^{-1}_{i,T}$ efficiently as follows:
Query $\mM''^{-1}_{i,T}$ in $O(n^{\nu+\mu})$ operations,
then query $(\mM''^{-1}\mU\mC\mV^\top\mM''^{-1})_{i,T}$.
For the latter, note that $e_i^\top\mM''^{-1}\mU$ are just $n^\nu$ entries of $\mM''^{-1}$ because $\mU$ has only one non-zero entry per column, and the columns with index in $T$ of $\mV^\top\mM''^{-1}$ are maintained explicitly.
Thus, this also takes just $O(n^{\nu+\mu})$ operations by $|T|\le n^\mu$.

In summary, our modification has amortized complexity (which can be made worst-case via standard techniques, see e.g.~\cite[Theorem B.1]{BrandNS19})
\begin{align*}
O(\underbrace{n^{\omega(1,\mu,\nu)-\nu}}_{
\begin{array}{l}
\scriptsize\text{Explicitly maintain}\\
\scriptsize\text{submatrix of $\mM''^{-1}$.}
\end{array}} + \underbrace{n^{\mu+\nu}}_{
\begin{array}{l}
\scriptsize\text{Explicitly maintain}\\
\scriptsize\text{submatrix of $\mV''^\top\mM''^{-1}$}\\
\scriptsize\text{+query any $\mM''^{-1}_{i,T}$}
\end{array}})
\end{align*}
This is subsumed by the complexity of \cite{BrandNS19}
for maintaining the matrices $\mM'^{-1},\mA,\mB,\mC$
in \eqref{eq:maintenance}.
So our modification of their algorithm does not increase the update complexity despite precomputing submatrices of $\mM''^{-1}$ and $\mV'^\top\mM''^{-1}$.

\subsection{Sparse Emulators, MSSP, APSP, and Further Applications} 
\label{sec:overview:other}
\paragraph{Sparser emulators with applications in APSP and MSSP.} Finally, we give another algorithm that lets us maintain much sparser emulators, which further leads to improvements when we need to maintain approximate distances from many sources (e.g. MSSP and APSP).

We start by maintaining near-linear size emulators as follows: first \textit{maintain} a $(1+\epsilon, 4)$-emulator $H_1$ of $G$. Then statically construct a much sparser $(1+\epsilon, n^{o(1)})$-emulator $H_2$ of size $\tilde{O}(n^{1+o(1)})$. The key idea here is to use $H_1$ in order to construct $H_2$ more efficiently.
We use a \textit{static} deterministic emulator algorithm (based on \cite{RTZ2005, TZ2006emulators}) that can construct such an emulator in time $O(|E(H_1)|n^{o(1)})$.
This leads to a fully-dynamic algorithm for maintaining $(1+\epsilon, n^{o(1)})$-emulators deterministically in $\tilde{O}(n^{1.407})$ worst-case update time.

Now we can use this to maintain multi-source distances from many (up to $O(n^{0.52})$) sources with an update time almost the same as the time required for single-source distances. 
For this purpose, given a set of sources $S$, we use the above approach to maintain an emulator of size $\tilde{O}(n^{1+o(1)})$.
Then, similar to before, in each update we find distance estimates for pairs in $S \times V$ by computing the minimum of the following estimates: i) $n^{o(1)}$-bounded distances from all sources maintained by an algebraic data structure, ii) distances from all sources on emulator $H_2$, computed in time $O(|S| \cdot n^{1+o(1)})$.

This lets us maintain $(1+\epsilon)$-MSSP from up to $O(n^{0.52})$ sources in almost (up to an $n^{o(1)}$ factor) the same running time as maintaining distances from a single-source by computing multi-source distances statically on this very sparse emulator and querying small distances from the algebraic data structure. 
This approach naturally extends to maintaining all-pairs distances deterministically and yields a worst-case update time of $ n^{2 + o(1)} $ by setting $S=V$.

Having described our approach for maintaining the more general emulator, let us briefly explain the differences to the dynamic spanner algorithm of~\cite{BHGWW2021}: We do not aim at \emph{directly} maintaining an almost linear-size spanner. Instead, we use a two-level scheme in which we first compute a $(1+\epsilon, 4)$-emulator of ``medium'' sparsity (outlined in \Cref{sec:overview:emulator}) and then resparsify this first-level emulator with a static algorithm. 
Hence, we get improved bounds for the second-level (near-linear size) emulators, since we can maintain the ``first-level'' emulators more efficiently than the algorithms in~\cite{BHGWW2021} due to the properties described in \Cref{sec:overview:emulator}. Moreover, our deterministic dynamic hitting set and our algebraic data structure supporting its changes let us maintain these emulators deterministically, whereas the spanners of~\cite{BHGWW2021} are randomized.

\paragraph{Diameter Approximation.} 
Our sparse emulators can also be used to maintain a (nearly) $(3/2+\epsilon)$-approximation of the diameter. Our algorithm is an adaptation of the dynamic algorithm by \cite{BrandN19}, which is in turn based on an algorithm by \cite{RV13}. At a high-level, we need to query (approximate) multi-source distances from three sets of size at most $O (\sqrt{n})$. We show that our emulators can be used to maintain such approximate distances much more efficiently than the data structures of \cite{BrandN19}. %Note that we need to use a randomized algorithm by \cite{RV13} for our dynamic diameter approximation.

\paragraph{$(1+\epsilon)$-APSP Distance Oracles.} Finally, we maintain a data structure with worst-case subquadratic update time that supports sublinear all-pairs $(1+\epsilon)$-approximate distance queries. Our algorithm is based on ideas of \cite{RZ12,BrandN19} that utilize well-known path hitting techniques (e.g.~\cite{UllmanY91}). In order to get improved bounds we again use our sparse $(1+\epsilon, \beta)$-emulators. We need to handle some technicalities both in the algorithm and its analysis introduced by the additive factor~$\beta$, combined with $h$-bounded distances maintained in the algorithm of \cite{BrandN19} for an appropriately chosen parameter $h$.

\section{Approximate Distances via Emulators} \label{sec:unweighted}
\label{sec:emulator}
%\jan{i updated the omega terms in this section}
In this section we focus on maintaining emulators with various tradeoffs and describing how they can be combined with the algebraic data structure of Lemma~\ref{lem:overview:pairwise} for obtaining dynamic $st$ and single-source distance approximations.
%\sebastian{Is this summary still up to date? (In particular, the focus here is still on SSSP.)}
%\jan{added st}
While our main focus is on $st$-distances, as a warm-up we start with our SSSP result. 

We first assume that we have a low-recourse dynamic hitting set which we use in maintaining $(1+\epsilon, 2)$-emulators (with application in $(1+\epsilon)$-SSSP) and $(1+\epsilon, 4)$-emulators (with applications in $(1+\epsilon)$-$st$). 
We will then move on to give a deterministic algorithm that maintains low-recourse hitting sets.

\subsection{Deterministic $(1+\epsilon, 2)$-Emulators and $(1+\epsilon)$-SSSP}\label{sec:dense_emulator}
\label{sec:emulator:sssp}

In this section we describe how to maintain $(1+\epsilon)$-SSSP with a worst-case update time matching the conditional lower bound of~\cite{BrandNS19}. We start by describing how to maintain a $(1+\epsilon, 2)$-emulator, assuming that we have a low-recourse hitting set, and can compute bounded-hop distances from elements in this set. The algorithm is summarized in Algorithm \ref{alg:2_emulator}. Assume that we are given two functions: 
\begin{itemize}
    \item $\textsc{UpdateHittingSet}(G, d)$, which returns a dynamically maintained hitting set for neighborhoods of heavy nodes (i.e., with degree at least $ d $) satisfying Lemma \ref{lem:hitting-set}. We provide an efficient algorithm for this function in Section \ref{sec:hitting_set}.
    \item $\textsc{QueryDistances}(G, S, T, h)$, which can query $h$-bounded distances between pairs in $S \times T$ as specified in Lemma \ref{lem:overview:pairwise}. In Section \ref{sec:algebraic}, we formally explain how these distances can be maintained and then queried for our low-recourse hitting sets. 

\end{itemize}

\begin{algorithm2e}[h]
\Input{Unweighted Graph $G = (V, E)$\;}
 $A_d:= \textsc{UpdateHittingSet}(G, d)$ with $d= \sqrt{n\log n}$\;  
For all nodes $\{ v: \deg(v) \leq d \}$, add all the edges incident to $v$ to $H$ with weight $1$\;
$\textsc{QueryDistances}(G, A_d, V, \lceil \frac{2}{\epsilon} \rceil+1)$\;
Add edges $\{ (u,w) : u \in A_d, w\in V, d_G(u,v) \leq \lceil \frac{2}{\epsilon} \rceil+1\}$ to $H$, and set the weight of each edge $(u,w)$ to $d_G(u,w)$ \;
\Return{$H$}\;
\caption{Update Algorithm for a $(1+\epsilon, 2)$-Emulator}\label{alg:2_emulator}
\end{algorithm2e}

Observe that even though we start with an unweighted graph, we need to add weighted edges to the emulator (with weight corresponding to the distance between the endpoints). We note that a similar, but randomized version of this emulator construction (working only against an oblivious adversary) was used in \cite{HKN2013} for maintaining approximate shortest paths decrementally. For completeness we provide  a full analysis of the properties of this emulator here. Assuming that we can maintain hitting set $A_d$ for $d= \sqrt{n \log n}$ satisfying Lemma \ref{lem:hitting-set} and $O(1/\epsilon)$-bounded distances in $A_d \times V$, Algorithm \ref{alg:2_emulator} can be used to show the following theorem:

\begin{theorem}\label{thm:dense_emulator}
Given an unweighted graph $G=(V,E)$, $0 < \epsilon <1$, we can deterministically maintain a $(1+\epsilon, 2)$-emulator with size $O(n^{3/2} \sqrt{\log n})$. 
The worst-case update time is 
$O((n^{\omega(1,1,\nu)-\nu}+n^{1+\nu}) \epsilon^{-2}\log \epsilon^{-1})$
for any $0\le\nu\le1$
and preprocessing time is $O(n^{\omega} \epsilon^{-2}\log \epsilon^{-1})$.
\end{theorem}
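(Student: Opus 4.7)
Plan: The proof splits into three pieces matching the claims of the theorem: a size bound, the $(1+\epsilon,2)$-stretch guarantee, and the worst-case update/preprocessing complexity. The construction is Algorithm~\ref{alg:2_emulator}, which combines the low-recourse hitting set of \Cref{lem:hitting-set} (with degree threshold $d=\sqrt{n\log n}$) with the pairwise bounded-distance data structure of \Cref{thm:overview:low_hop} (instantiated with $S=A_d$, $T=V$, and hop-bound $h=\lceil 2/\epsilon\rceil+1$).

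For the size bound I would argue as follows. The low-degree step adds at most $\sum_v \min(\deg(v),d)\le nd = O(n^{3/2}\sqrt{\log n})$ edges to $H$, while by \Cref{lem:hitting-set} the hitting set satisfies $|A_d|=O(n\log n / d)=O(\sqrt{n\log n})$, so the bounded-distance step adds at most $|A_d|\cdot n = O(n^{3/2}\sqrt{\log n})$ further edges.

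For the stretch I would fix $u,v$, let $\pi$ be a shortest $uv$-path in $G$, and split $\pi$ into $m=\lfloor d_G(u,v)/k\rfloor$ consecutive segments of $k:=\lceil 2/\epsilon\rceil$ edges plus at most one trailing partial segment. For a segment with endpoints $x,y$: if every node on the segment has degree $<d$, all its edges lie in $H$ with unit weight, contributing exactly $d_G(x,y)$. Otherwise, letting $z$ be the \emph{first} heavy node on the segment, \Cref{lem:hitting-set} supplies a neighbor $w\in A_d$; since $d_G(w,y)\le 1+d_G(z,y)\le k+1 = h$, the bounded-distance step inserts both $(w,z)$ and $(w,y)$ into $H$ with weights $1$ and $d_G(w,y)$ respectively. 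The prefix of $\pi$ from $x$ to $z$ uses only edges incident to a low-degree node (any strictly earlier node is light, since $z$ is the first heavy one), and so lies in $H$ with unit weights. Concatenating yields an $H$-path $x\to\cdots\to z\to w\to y$ of total weight $d_G(x,z)+1+d_G(w,y)\le d_G(x,y)+2$. Summing across the at most $m+1\le d_G(u,v)/k+1$ segments and using $2/k\le\epsilon$ gives $d_H(u,v)\le d_G(u,v)+2(m+1)\le(1+\epsilon)d_G(u,v)+2$.

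For the running time, \Cref{lem:hitting-set} supports each edge update in worst-case $O(d^2+d\log n)=O(n\log n)$ time with $O(1)$ recourse, so only a constant number of set-updates to $A_d$ are forwarded to the data structure of \Cref{thm:overview:low_hop}. Instantiated as above (so $\mu=1$), that data structure has preprocessing cost $O(n^\omega h^2 \log h)=O(n^\omega\epsilon^{-2}\log\epsilon^{-1})$ and worst-case update cost $O((n^{\omega(1,1,1)-1}+n^{\omega(1,1,\nu)-\nu}+n^{1+\nu}+|S\times T|)\epsilon^{-2}\log\epsilon^{-1})$; the $n^{\omega-1}$ and $|S\times T|=O(n^{3/2}\sqrt{\log n})$ terms are dominated by $n^{\omega(1,1,\nu)-\nu}+n^{1+\nu}$ for every $\nu\in[0,1]$ (checking the extreme values $\nu=0$ and $\nu=1$ suffices by monotonicity of $\omega(1,1,\cdot)$), yielding the stated bound. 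Auxiliary bookkeeping---tracking heavy/light status, repairing the $O(d)$ low-degree-edges affected when a single node's status flips, installing the $O(1)$ edits to $A_d$, and reading off the queried bounded distances---fits within the same budget. The main (and only real) obstacle is the interface between the two primitives: one must verify that each graph edge update produces only a constant number of set-updates and entry updates to the matrix-inverse machinery underlying \Cref{thm:overview:low_hop}, so that the $O(1)$-recourse guarantee of \Cref{lem:hitting-set} is realized as a constant multiplier on the per-update cost rather than as an amortized one. Once this alignment is confirmed, the theorem follows directly from the two building blocks.
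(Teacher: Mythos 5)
Your proof is correct and takes essentially the same approach as the paper: same size accounting ($O(nd)$ low-degree edges plus $|A_d|\cdot n$ emulator edges with $d=\sqrt{n\log n}$), same decomposition of a shortest path into $\lceil 2/\epsilon\rceil$-length segments with a detour through a hitting-set neighbor $w$ of the first heavy node, and the same invocation of \Cref{lem:hitting-set} and \Cref{thm:overview:low_hop} with $S=A_d$, $T=V$, $\mu=1$, $h=\lceil 2/\epsilon\rceil+1$. You are in fact a bit more explicit than the paper in two places: you justify that the edge from the heavy node $z$ to $w\in A_d$ is present in $H$ (via the bounded-distance step, since $d_G(w,z)=1$), whereas the paper silently uses the hop $v\to w$; and you verify that the extra terms $n^{\omega(1,1,1)-1}$ and $|S\times T|=\tilde O(n^{3/2})$ from \Cref{thm:overview:low_hop} are dominated for all $\nu$, which the paper leaves implicit. (Your one-line justification ``by monotonicity of $\omega(1,1,\cdot)$'' is a little loose since a sum of two monotone-in-opposite-directions terms need not be monotone, but the domination claim itself is true: $n^{1+\nu}\ge n^{3/2}$ for $\nu\ge 1/2$, $n^{\omega(1,1,\nu)-\nu}\ge n^{2-\nu}>n^{3/2}$ for $\nu<1/2$, and $\omega(1,1,\nu)-\nu$ is minimized at $\nu=1$ giving $n^{\omega-1}$.)
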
 
For current bounds on $\omega$ and best choice of $\nu\approx0.529$, this is $O( n^{1.529} \epsilon^{-2}\log \epsilon^{-1})$ update time.
\begin{proof}
The size analysis is straightforward. We set $d= \sqrt{n \log n}$, and add $O(nd)$ edges for sparse nodes, and by \Cref{lem:hitting-set}, we have $O(\sqrt{n/ \log n})$ nodes in the hitting set and thus we add an overall $O(n^{3/2} \sqrt{\log n})$ edges for all of them.

We next move on to the stretch analysis. Consider any pair of nodes $s,t \in V$ and let $\pi$ be the shortest path between $s$ and $t$ in $ G $.
We divide $\pi$ into segments of length exactly $\lceil 2/\epsilon \rceil$ and possibly one shorter segment that we handle separately (which could be the only segment if $d_G(s,t) \leq \lceil 2/\epsilon \rceil $). We show that the emulator $ H $ contains for each segment of length $\lceil 2/\epsilon \rceil$ a path of multiplicative stretch $(1+\epsilon)$ and for the shorter segment a path of additive stretch $ 2 $.

Consider the $i$-th segment that we denote by $[u_i, u_{i+1}]$, and let the corresponding shortest path between $u_i$ and $u_{i+1}$ be $\pi'$. If all the nodes on $\pi'$ have degree less than $d$, then all the edges of the segment are in $H$.
Otherwise, let $v$ be the first heavy node on $\pi'$.
By Lemma \ref{lem:hitting-set} we know that there is a node $w \in A_d$ adjacent to $v$.
First assume that $d_G(u_i, u_{i+1}) = \lceil 2/\epsilon \rceil$. Since $d_G(u_i,u_{i+1}) \leq \lceil 2/\epsilon \rceil$ and the neighbor $v$ of $ w $ is on the shortest path between $u_i$ and $u_{i+1}$, we have $d_G(w,u_{i+1}) \leq \lceil 2/\epsilon \rceil + 1$.
Therefore, we have added an emulator edge between $w$ and $u_{i+1}$ to $ H $. Consider the path in $H$ going through $u_i \rightarrow v \rightarrow w \rightarrow u_{i+1}$. For the length of this path we have
\begin{align}
   d_H(u_i,u_{i+1}) &\leq d_G(u_i,v)+1+w_H(w,u_{i+1}) = d_G(u_i,v)+1+d_G(w,u_{i+1}) \\
   &\leq d_G(u_i,v)+1+d_G(v,u_{i+1})+1 \\
   &\leq d_G(u_i,u_{i+1})+ 2 \label{eq:add2} \\
   &\leq d_G(u_i,u_{i+1}) + \epsilon d_G(u_i,u_{i+1})\\ &\leq (1+\epsilon) d_G(u_i,u_{i+1}) \, .
\end{align}

Now assume that $d_G(u_i,u_{i+1}) < \lceil 2/\epsilon \rceil$.
Then the same analysis as in the previous case works up to inequality \eqref{eq:add2} and we thus have we have $d_H(u_i,u_{i+1}) \leq d_G(u_i,u_{i+1})+2$.
%and hence $d_G(w,t) \leq 2/\epsilon +1$ . Then we have added an edge from $w$ to $t$ in the emulator. Using the same analysis as above, and by inequality \ref{eq:add2} we have $d_H(u_i,u_{i+1}) \leq d_G(u_i,u_{i+1})+2$.
Hence, for any pair of nodes $s$ and $t$, the overall multiplicative stretch in $ H $ with respect to $ G $ is $(1+\epsilon)$ together with an additive stretch of~$2$.
%Since $\pi$ is the shortest path between $s$ and $t$ in $G$, the overall multiplicative stretch is $(1+\epsilon)$ and there will be one additive term of $2$.
 
The running time (update time and preprocessing) follows from Lemma \ref{lem:hitting-set} for maintaining a low-recourse hitting set $A_d$, 
and Theorem~\ref{lem:overview:pairwise}
for maintaining $O(1/\epsilon)$-bounded distances in $A_d \times V$, by setting $S= A_d$ and $T=V$. Note that we also need to maintain $d$ edges incident to heavy nodes that overlap with any node added to $A_d$, but this only takes $O(d)$ time per update.
\end{proof}

\paragraph{Using an emulator for maintaining $(1+\epsilon)$-SSSP.}
Given an unweighted graph $G=(V,E)$, we first maintain a $(1+\frac{\epsilon}{2}, 2)$-emulator $H$ for $G$. Given a single-source $s$ and $H$, we can now maintain the distances by:
\begin{enumerate}
    \item [(i)] Using the algebraic data structure of Lemma~\ref{lem:overview:pairwise}: $\lceil 4/\epsilon \rceil$-hop bounded distances from $s$ on $G$,
    \item [(ii)] After each update, \textit{statically} computing SSSP on $H$ in $O(n^{3/2} \sqrt{\log n})$ time. 
\end{enumerate}

\sssp*
\begin{proof}
The distance estimate stored at each node is the minimum of the two estimates (i) and~(ii) described above. To see the correctness (stretch), we simply observe that for any node $v$ where $d_G(s,v) > 4/\epsilon$, we have:

\[ d_H(s,v) \leq \left(1+\frac{\epsilon}{2}\right) d_G(s,v)+ 2 \leq \left(1+ \frac{\epsilon}{2}\right) d_G(s,v)+ \frac{\epsilon}{2} \cdot d_G(s,v) \leq (1+\epsilon)d_G(s,v)\]

Hence a $(1+\epsilon)$-approximate estimate is returned due to step (ii) above. On the other hand, if $d_G(s,v) \leq 4/\epsilon$, then we directly maintained the exact distance in step~(i).

Next, we analyze the running time. The update time for maintaining the emulator, as described in Theorem \ref{thm:dense_emulator} is $O((n^{\omega(1,1,\nu)-\nu}+n^{1+\nu})\epsilon^{-2}\log \epsilon^{-1})$. We then use Theorem~\ref{lem:overview:pairwise} again for maintaining $\lceil 4/\epsilon \rceil$-hop bounded distances from the source. Additionally, we statically compute single source distance on an emulator of size $\tilde O(n^{3/2})$ in $\tilde O(n^{3/2})$ time.
\end{proof}

\subsection{Deterministic $(1+\epsilon,4)$-Emulator and $(1+\epsilon)$-$st$ Distances}
\label{sec:emulator:st}
In this section, we give another emulator-based algorithm that lets us maintain the approximate distance from a given source $s$ to a given destination $t$ with better update time than the time bound we showed for SSSP. 
We maintain a sparser emulator with a slightly larger additive stretch that supports faster computation of the $st$ distance approximation. In particular, we maintain a $(1+\epsilon, 4)$-emulator of size $\tilde{O}(n^{4/3})$.

Compared to the emulator of Section~\ref{sec:dense_emulator}, for this emulator construction we need to maintain bounded distances with our algebraic data structure for a smaller number of pairs of nodes, which increases efficiency.
This, combined with the fact that our emulators are sparser, leads to a faster algorithm for maintaining $(1+\epsilon)$-approximate $st$ distances.

\paragraph{$(1+\epsilon, 4)$-emulator.} We start by maintaining a sparse emulator with slightly larger additive stretch term. The algorithm is summarized in Algorithm \ref{alg:4_emulator}.

\begin{algorithm2e}[h]
\Input{Unweighted Graph $G = (V, E)$.}
 $A_d:= \textsc{UpdateHittingSet}(G, d)$ with $d= n^{1/3}\sqrt{\log n}$\;
For all nodes $\{ v: \deg(v) \leq d \}$, add all the edges incident on $v$ to $H$ with weight $1$\;
$\textsc{QueryDistances}(G, A_d, A_d, \lceil \frac{4}{\epsilon} \rceil+2)$\;
Add edges $\{ (u,w) : u,w \in A_d, d_G(u,v) \leq \lceil \frac{4}{\epsilon} \rceil+2\}$ to $H$, and set the weight of each edge $(u,w)$ to $d_G(u,w)$ \;
\Return{$H$}\;
\label{alg:4_emulator}
\caption{Update Algorithm for $(1+\epsilon, 4)$-Emulators}
\end{algorithm2e}
Assuming that we can maintain a low-recourse hitting set $A_d$ and $O(1/\epsilon)$-bounded distance between pairs $A_d \times A_d$, 
Algorithm \ref{alg:4_emulator} leads to an emulator with the following guarantees:
\begin{theorem}\label{thm:sparse_emulator}
Given an unweighted graph $G=(V,E)$, $0 < \epsilon <1$, we can deterministically maintain a $(1+\epsilon, 4)$-emulator of size $O(n^{4/3} \sqrt{\log n})$ 
with worst-case update time of 
$O((n^{\omega(1,1,\mu)-\mu}+n^{\omega(1,\mu,\nu)-\nu}+n^{\mu+\nu}+n^{4/3}) \epsilon^{-2}\log \epsilon^{-1})$
for any $0\le\nu\le\mu\le1$, 
and preprocessing time of 
$O({n^{\omega}}{\epsilon^{-2}\log \epsilon^{-1}})$.
\end{theorem}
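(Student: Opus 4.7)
The proof will follow the same three-layer structure as Theorem~\ref{thm:dense_emulator} (size bound, stretch bound, running time), but with modifications to account for the fact that we only maintain distances between pairs of hitting-set vertices rather than between the hitting set and all of $V$. This causes both endpoints of each segment to incur two extra hops, so we must use the distance bound $\lceil 4/\epsilon\rceil + 2$ and we will pay an additive~$4$ per short segment.

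\textbf{Size.} The edges contributed by low-degree vertices number $O(nd) = O(n^{4/3}\sqrt{\log n})$ since each such vertex has degree at most $d = n^{1/3}\sqrt{\log n}$. By Lemma~\ref{lem:hitting-set}, $|A_d| = O(n\log n / d) = O(n^{2/3}\sqrt{\log n})$, so the set of edges in $A_d\times A_d$ has size at most $|A_d|^2 = O(n^{4/3}\log n)$. Adding them yields the claimed near-$n^{4/3}$ bound.

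\textbf{Stretch.} Fix $s,t\in V$ and let $\pi$ be a shortest $s$–$t$ path. I will partition $\pi$ into consecutive segments of length exactly $\lceil 4/\epsilon\rceil$ and possibly one trailing segment of smaller length. For a segment $[u_i,u_{i+1}]$: if every internal node has degree below $d$, then all of its edges are in $H$ with weight~$1$, giving an exact contribution. Otherwise let $x$ be the first and $y$ the last heavy node along the segment; by Lemma~\ref{lem:hitting-set} there exist $w_1,w_2\in A_d$ adjacent to $x$ and $y$ respectively. The triangle inequality gives
\[
d_G(w_1,w_2) \le 1 + d_G(x,y) + 1 \le d_G(u_i,u_{i+1}) + 2 \le \lceil 4/\epsilon\rceil + 2,
\]
so the edge $(w_1,w_2)$ of weight $d_G(w_1,w_2)$ is inserted into $H$ by Algorithm~\ref{alg:4_emulator}. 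The path $u_i\to x\to w_1\to w_2\to y\to u_{i+1}$ in $H$ therefore has length at most $d_G(u_i,u_{i+1})+4$. For a full-length segment, $d_G(u_i,u_{i+1})=\lceil 4/\epsilon\rceil$, hence the additive~$4$ is absorbed into a multiplicative factor of $1+\epsilon$. Summing over all full segments yields a $(1+\epsilon)$ multiplicative stretch, and the single possibly-shorter trailing segment contributes an additive~$4$, giving the $(1+\epsilon,4)$ guarantee.

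\textbf{Running time.} The hitting set $A_d$ is maintained in worst-case time $O(d^2 + d\log n) = \tilde O(n^{2/3})$ per update, with $O(nd) = O(n^{4/3}\sqrt{\log n})$ preprocessing by Lemma~\ref{lem:hitting-set}. The bounded-distance data structure of Theorem~\ref{lem:overview:pairwise}, invoked with $S=T=A_d$, $|S|,|T|\le n^{2/3}\sqrt{\log n} \le n^\mu$ for any $\mu$ at least $2/3$, and $h=\lceil 4/\epsilon\rceil+2$, takes $O((n^{\omega(1,1,\mu)-\mu}+n^{\omega(1,\mu,\nu)-\nu}+n^{\mu+\nu}+|A_d|^2)\epsilon^{-2}\log\epsilon^{-1})$ worst-case update time, and $O(n^\omega\epsilon^{-2}\log\epsilon^{-1})$ preprocessing. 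Because $|A_d|^2 = O(n^{4/3}\log n)$, the $|S\times T|$ term gives the extra $n^{4/3}$ summand in the statement. Thanks to the $O(1)$ worst-case recourse of $A_d$, each update to $G$ triggers only $O(1)$ insertions/deletions into $A_d$, so the set-updates fed into Theorem~\ref{lem:overview:pairwise} are within its budget. Maintaining the low-degree portion of~$H$ is trivial at $O(d)$ extra time per edge update.

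\textbf{Expected obstacle.} The size and stretch bookkeeping is routine once one sees that using both the first and the last heavy node on each segment restricts the required bounded distances to $A_d\times A_d$. The genuinely delicate point is ensuring that Theorem~\ref{lem:overview:pairwise} can absorb the constantly-changing sets $S$ and $T$ without a running-time blowup; this is precisely where Lemma~\ref{lem:hitting-set}'s worst-case $O(1)$ recourse is indispensable, and it must be invoked cleanly so that each set-update is charged properly against the algebraic data structure's $|S\times T|$ term.
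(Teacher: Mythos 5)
Your proposal is correct and mirrors the paper's own proof almost step for step: the size bound via $O(nd)$ low-degree edges plus $|A_d|^2 = \tilde O(n^{4/3})$ hitting-set edges, the stretch bound via partitioning the shortest path into $\lceil 4/\epsilon\rceil$-length segments and detouring through the $A_d$-neighbors of the first and last heavy node on each segment, and the running time via Lemma~\ref{lem:hitting-set} plus Theorem~\ref{lem:overview:pairwise} with $S=T=A_d$. The only small departure is cosmetic: you phrase the case distinction in the stretch argument in terms of \emph{internal} nodes being low-degree rather than all nodes on the segment being low-degree (the paper's phrasing is slightly safer for the degenerate single-edge trailing segment, but the heavy-endpoint subcase is absorbed by your ``otherwise'' branch via the same emulator-edge argument, so the conclusion is unaffected), and you write out the $H$-path $u_i\to x\to w_1\to w_2\to y\to u_{i+1}$ explicitly rather than via the paper's inequality chain, which makes the additive-$4$ accounting somewhat more transparent.
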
 
For current bounds on $\omega$ and $\mu\approx0.856$, $\nu\approx0.551$,
this is $O(n^{1.407} \epsilon^{-2}\log \epsilon^{-1})$ update time.
\begin{proof}
It is easy to see that Algorithm \ref{alg:4_emulator} leads to an emulator of size $O(n^{4/3} \sqrt{\log n})$. This follows from the fact that we add at most $O(n^{4/3} \sqrt{\log n})$ pairwise edges between nodes in $A_d \times A_d$, and we add $O(nd)$ edges for non-heavy node where $d= n^{1/3}\sqrt{\log n}$.

We next move on to the stretch analysis. Consider any pair of nodes $s,t \in V$ and let $\pi$ be the shortest path between $s$ and $t$ in $ G $.
We divide $\pi$ into segments of length exactly $\lceil 4/\epsilon \rceil$ and possibly one shorter segment that we handle separately (which could be the only segment if $d_G(s,t) \leq \lceil 4/\epsilon \rceil $). We show that the emulator $ H $ contains for each segment of length $\lceil 4/\epsilon \rceil$ a path of multiplicative stretch $(1+\epsilon)$ and for the shorter segment a path of additive stretch $ 4 $.

Consider the $i$-th segment that we denote by $[u_i, u_{i+1}]$, and let the corresponding shortest path between $u_i$ and $u_{i+1}$ be $\pi'$. If all the nodes on $\pi'$ have degree less than $d$, then all the edges of the segment are in $H$.
Otherwise, let $v$ be the first heavy node, and let $u$ be the last (furthest from $u_i$) heavy node on $\pi'$. 
By Lemma \ref{lem:hitting-set} we know that there are nodes $w_1,w_2 \in A_d$ such that $w_1$ is adjacent to $v$ and $w_2$ is adjacent to $u$. The case where $w_1=w_2$ is an easy special case, so let us assume $w_1 \neq w_2$. First assume that $d_G(u_i, u_{i+1}) = \lceil 4/\epsilon \rceil$. Since $d_G(u_i,u_{i+1}) \leq \lceil 4/\epsilon \rceil$ and the neighbors $v$ and $ u $ of $ w_1 $ and $ w_2 $, respectively, are on the shortest path between $u_i$ and $u_{i+1}$, we have $d_G(w_1,w_2) \leq \lceil 4/\epsilon \rceil+2$. Therefore, we have added an emulator edge between $w_1$ and $w_2$ to $ H $. Consider the path in $H$ going through $u_i \rightarrow w_1 \rightarrow w_2 \rightarrow u_{i+1}$. For the length of this path we have
\begin{align}
   d_H(u_i,u_{i+1}) &\leq d_G(u_i,v)+1+w_H(w_1,w_2)+d_G(w_2,u_{i+1})\\ 
   &=  d_G(u_i,v)+1+d_G(w_1,w_2)+d_G(w_2,u_{i+1}) \\
   &\leq d_G(u_i,v)+(1+d_G(v,w_2)+1)+d_G(w_2,u_{i+1}) \\
   &\leq d_G(u_i,v)+(1+d_G(v,u)+1+1)+(d_G(u,u_{i+1})+1)\\
   &\leq d_G(u_i,u_{i+1})+ 4 \label{eq:add4}\\  
   &\leq d_G(u_i,u_{i+1}) + \epsilon d_G(u_i,u_{i+1})\\
   &\leq (1+\epsilon) d_G(u_i,u_{i+1})
\end{align}

Now assume that $d_G(u_i,u_{i+1}) < \lceil 4/\epsilon \rceil$.
Then the same analysis as in the previous case works up to inequality \eqref{eq:add4} and we thus have $d_H(u_i,u_{i+1}) \leq d_G(u_i,u_{i+1})+4$.
Hence, for any pair of nodes $s$ and $t$, the overall multiplicative stretch in $ H $ with respect to $ G $ is $(1+\epsilon)$ together with an additive stretch of~$4$.

The running time analysis now follows by applying Lemma \ref{lem:overview:pairwise} to maintain $O(1/\epsilon)$-bounded distances from the low-recourse hitting set $A_d$ obtained by Lemma \ref{lem:hitting-set}. For this, we use Theorem~\ref{lem:overview:pairwise} and set $S=T=A_d$, where $|S|=|T|= \tilde{O}(n^{2/3})$.
\end{proof}
We can now use the emulator of Theorem \ref{thm:sparse_emulator} to maintain $(1+\epsilon)$-approximate $st$-distances using the same approach as in Section \ref{sec:dense_emulator}. We set the distance estimate to be the minimum obtained by maintaining $(8/\epsilon+2)$-bounded distances on $G$ (using the second algorithm in Lemma \ref{lem:overview:pairwise}), and statically running Dijkstra's on the emulator. The formal result can be stated as follows.
 
\st*
\begin{proof}
We use Theorem \ref{thm:sparse_emulator} to maintain a $(1+\epsilon, 4)$-emulator with worst-case update time of $O(n^{1.407})$.
Then we use Theorem~\ref{lem:overview:pairwise} again for maintaining $(8/\epsilon +2)$-bounded $st$ distances. Additionally, we statically compute single source distance on an emulator of size $O(n^{4/3} \sqrt{\log n})$.
The stretch analysis is the same as in Theorem \ref{thm:sssp}.
\end{proof}
\subsection{Deterministic Dynamic Low-Recourse Hitting Sets}\label{sec:hitting_set}
\label{sec:emulator:recourse}
In this section, we focus on the deterministic maintenance of the hitting sets in order to efficiently maintain the emulators described in the previous two sections.

Let us first review a very simple static deterministic construction and later discuss how to obtain a low-recourse dynamic algorithm.
Given a graph $G=(V,E)$, we create a sparse subgraph of $G$ with size $O(nd)$, which we denote by $G_d$ as follows: For any \textit{heavy} node $v \in V$ (a node with degree at least $d$), we choose an arbitrary set of $d$ neighbors of $v$, denoted by $A_v$. 

We then statically compute an $O(\log n)$-approximate hitting set $A_d$ for $G_d$ deterministically (i.e., $ A_d $ exceeds the size of a minimum hitting set on $ G_d $ by a factor of $O(\log n)$).  Consider the following simple greedy algorithm: in each step consecutively we add to $A_d$ the node that is incident to the maximum number of heavy nodes whose neighborhood does not contain a node of $A_d$ yet. It is well-known that such an algorithm leads to the following guarantees.

\begin{lemma}[Greedy Hitting Set, \cite{johnson1974}]\label{lem:static-hitting-set}
Given a graph $G=(V,E)$, and a degree threshold $d$, we can deterministically construct a hitting set $A_d$ with size $O(\frac{n\cdot \log n}{d} )$ in $O(nd)$ time, such that every heavy node has a neighbor in $A_d$.
\end{lemma}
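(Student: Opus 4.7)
The plan is to reduce the problem to a well-structured instance of \textsc{Set Cover}, apply Johnson's greedy analysis for the size bound, and then implement the greedy using a bucketed priority queue to hit the $O(nd)$ time budget.

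\emph{Reduction.} For each heavy $v \in V$ freeze an arbitrary $A_v \subseteq N(v)$ with $|A_v|=d$; this takes $O(nd)$ time. The universe to be hit is the set $H \subseteq V$ of heavy nodes, and for every candidate $u \in V$ the ``set'' is $S_u := \{v \in H : u \in A_v\}$. A hitting set for the sets $\{A_v\}_{v\in H}$ is exactly a vertex set intersecting $N(v)$ for every heavy $v$. The bipartite incidence structure has $\sum_{v\in H}|A_v|\le nd$ edges.

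\emph{Size bound.} Run the classical greedy that repeatedly selects a $u \in V$ maximizing $|S_u \cap U|$, where $U$ is the current uncovered part of $H$, and adds $u$ to $A_d$. Let $U_i$ be the number of uncovered heavy nodes after $i$ picks. Since each uncovered $v$ contributes $|A_v|=d$ to $\sum_{u \in V}|S_u \cap U_i|$, that sum equals $U_i d$; averaging over the $n$ candidates, some $u$ covers at least $U_i d/n$ new elements, so $U_{i+1}\le U_i(1-d/n)$. After $i=\lceil (n/d)\ln n\rceil$ iterations we have $U_i \le n\cdot e^{-\ln n}= 1$, and at most one more pick suffices; hence $|A_d|=O((n/d)\log n)$, and the property that every heavy node has a neighbor in $A_d$ holds by construction.

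\emph{Running time and main obstacle.} The only delicate point is avoiding an extra $\log n$ factor in the time to execute greedy. Initialize $c_u := |S_u|$ for every $u \in V$ in $O(nd)$ time while scanning the sets $A_v$, and store the candidates in bucket-lists $B_0,\dots,B_n$ indexed by current count, using doubly-linked lists so that moving a node between buckets costs $O(1)$. Maintain a pointer to the highest non-empty bucket; because counts only ever decrease, this pointer only walks downwards and accumulates $O(n)$ work in total. Each greedy round pops a node from the top bucket; for every currently uncovered $v \in S_u$ we mark $v$ covered and decrement $c_{u'}$ by one for each $u' \in A_v$. Since every heavy node is marked covered at most once, the total decrement work is $\sum_{v \in H}|A_v| = O(nd)$, which dominates the bucket-pointer movement and the $O((n/d)\log n)$ pop operations. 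A generic binary-heap implementation would cost $O(nd\log n)$; the observation that the $c_u$ are monotone non-increasing is exactly what licenses the bucketed data structure and keeps the total cost at $O(nd)$.
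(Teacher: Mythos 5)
Your proof is correct, and it is in fact more self-contained than the sketch the paper gives. For the size bound, the paper reasons indirectly: it argues that a hitting set of size $O(\tfrac{n}{d}\log n)$ exists by a sampling argument, that greedy is an $O(\log n)$-approximation (citing Johnson), which only yields $O(\tfrac{n}{d}\log^2 n)$, and then appeals to a ``tighter analysis'' in an external reference to shave the extra $\log n$. Your argument sidesteps OPT entirely: because each uncovered heavy node lies in exactly $d$ of the $n$ candidate sets, $\sum_u |S_u\cap U| = d|U|$, so the best candidate covers at least a $d/n$ fraction of what remains, giving $U_{i+1}\le U_i(1-d/n)$ and hence $O(\tfrac{n}{d}\log n)$ iterations directly. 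This is cleaner and achieves the stated bound without the detour through the approximation factor. For the running time, the paper asserts $O(nd)$ with no justification; your bucket-queue implementation (monotone counters, highest-nonempty-bucket pointer that only walks downward, $O(1)$ per decrement, total decrements bounded by $\sum_v|A_v|=O(nd)$) is the standard argument and is correct. One tiny nit: the sentence ``A hitting set for the sets $\{A_v\}$ is exactly a vertex set intersecting $N(v)$'' is only an implication, not an equivalence -- hitting $A_v$ suffices for hitting $N(v)$, which is all the lemma needs -- but this does not affect the argument.
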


It is easy to see that the hitting set obtained has size $O(\frac{n\log^2 n}{d} )$: by a simple sampling procedure we know that there exists a hitting set of size $O(\frac{n \log n}{d})$, and we are using an $O(\log n)$-approximate hitting set algorithm.
A tighter analysis (e.g.~see \cite{BHK2017}) will lead to a total size of $O(\frac{n \log n}{d})$.

Next we move on to a dynamic maintenance of hitting sets. Our goal is to prove the following lemma.

\hittingset*

\begin{proof}
Our goal is to design an algorithm that after each update \emph{reports} a constant number of nodes that are added to or removed from its output set $ \mathcal{A} $.
Let $ t $ such that $ t = \Theta (\tfrac{n}{d} \log n) $ be an upper bound on the size of a hitting set of a graph with at most~$ n $ nodes of degree at most~$ d $ as computed with the algorithm of Lemma~\ref{lem:static-hitting-set}.

In our algorithm we will repeatedly use the following subroutine for \emph{fixing} a hitting set $ A $ on a node $ v $:
If $ v $ has degree at least $ d $ and does not have a node of $ A $ among its first $ d $ neighbors, then return an arbitrary neighbor of $ v $.
This fixing takes time $ O (d \log n) $ (if $ A $ is represented by a self-balancing binary search tree and the graph is represented by adjacency lists).
Note that if $ A $ is a hitting set of the current graph and we insert or delete an edge $ (u, v) $, then by fixing $ A $ on $ u $ and~$ v $ and adding the (at most two) returned nodes to $ A $, $ A $ will remain a hitting set.

In our dynamic hitting set algorithm, we subdivide the sequence of updates into phases of length~$ t $.
In the following, we show how to maintain a hitting set with the desired bounds throughout an arbitrary phase.
The algorithm will internally use two sets $ A_\text{old} $ and $ A_\text{new} $ represented by self-balancing binary search trees.
Throughout, it will report changes to its output set $ \mathcal{A} $ by adding nodes to or removing nodes from $ \mathcal{A} $, but it will not always explicitly keep a representation of $ \mathcal{A} $ itself.
After the end of each phase, $ A_\text{old} $ is set to $ A_\text{new} $.

The algorithm will maintain the invariant that at the beginning of each phase $ A_\text{old} $ is equal to the maintained set $ \mathcal{A} $ and has size $ O (t) = O(\tfrac{n}{d} \log n) $ and initially this will be ensured by computing a hitting set of this size in the preprocessing stage with the algorithm of Lemma~\ref{lem:static-hitting-set}.

In our algorithm, we further subdivide each phase into five subphases of length $ \frac{t}{5} $ each and proceed as follows:
\begin{itemize}
    \item During subphases~$ 1 $--$ 4 $ we fix $ A_\text{old} $ on each endpoint of an updated edge, report the at most two returned nodes as part of $ \mathcal{A} $, and add all such returned nodes to $ A_\text{old} $.
    As discussed above, the fixing takes time $ O (d \log n) $ per update.
    \item During subphases~$ 1 $--$ 3 $, we add all endpoints of updated edges to a list $ U $.
    Note that $ | U | \leq \tfrac{6}{5} t $.
    This takes constant time per update.
    \item During subphase~$ 1 $, we additionally construct a graph $ G' $ as follows: After each update we process $ \Theta(d) $ nodes of the current input graph and for each processed node copy the edges to its first (up to) $ d $ neighbors (and potentially their incident nodes) to $ G' $.
    This takes time $ O (d^2) $ per update (for graphs represented as adjacency lists) and after this subphase of length $ \frac{t}{5} = \Omega (\tfrac{n}{d}) $ all $ n $ nodes of the input graph have been processed in this manner.
    \item During subphase~$ 2 $, we additionally compute a hitting set $ A_\text{new} $ of size $ O (\tfrac{n}{d} \log n) $ on $ G' $ using the algorithm of Lemma~\ref{lem:static-hitting-set} by performing $ \Theta (d^2) $ operations of this algorithm after each update.
    \item During subphase~$ 3 $, we report after each update $ \Theta (1) $ nodes of $ A_\text{new} $ as being added to $ \mathcal{A} $.
    This takes time $ O (\log n) $ per update.
    \item During subphases~$ 4 $--$ 5 $, we fix $ A_\text{new} $ on each endpoint of an updated edge, report the at most two returned nodes as part of $ \mathcal{A} $, and add all such returned nodes to $ A_\text{new} $.
    This takes time $ O (d \log n) $ per update.
    \item During subphase~$ 4 $, we additionally after each update fix $ A_\text{new} $ on $ 6 $ nodes from $ U $ (corresponding to endpoints of historic updates), report the at most $ 6 $ returned nodes as part of $ \mathcal{A} $, and add all returned nodes to the set $ A_\text{new} $.
    This takes time $ O (d \log n) $ per update.
    Note that after this subphase of length $ \frac{t}{5} $, we have fixed $ A_\text{new} $  on all of the at most $\tfrac{6}{5} t $ nodes of $ U $.
    \item During subphase~$ 5 $, we process after each update $ \Theta (1) $ nodes of $ A_\text{old} $, remove them from $ A_\text{old} $, and report each such node as not being part of $ \mathcal{A} $ anymore unless it has in the meanwhile been added to $ A_\text{new} $.
    This takes time $ O (\log n) $ per update.
\end{itemize}
If the invariant (i.e., that at the beginning of each phase $ A_\text{old} $ is equal to the maintained set $ \mathcal{A} $ and has size $ O (t) = O(\tfrac{n}{d} \log n) $) holds, then the bounds on the running time and the recourse follow from the inline arguments.
To prove the first part of the invariant, observe that the algorithm ensures that at the end of each phase $ A_\text{new} $ will be equal to $ \mathcal{A} $ and that this set $ A_\text{new} $ will be used as the set $ A_\text{old} $ at the beginning of the next phase.
To prove the size claim in the invariant, note that $ A_\text{new} $ has size $ O (t) $ when initialized and after each update of the phase of length $ t $ at most a constant number of nodes are added to $ A_\text{new} $.
It thus follows that $ A_\text{new} $ has size $ O (t) = O (\tfrac{n}{d} \log n) $ at the end of the phase.
As this set $ A_\text{new} $ will be used as the set $ A_\text{old} $ at the beginning of the next phase, the invariant follows.
Now, by the invariant, $ \mathcal{A} $ has size $ O (t) $ at the beginning of each phase.
Since at most a constant number of nodes are added to $ \mathcal{A} $ after each update in a phase of length $ t $, we can conclude that $ \mathcal{A} $ always size $ O (t) = O (\tfrac{n}{d} \log n) $.

The algorithm correctly maintains a hittin set based on the following observations:
Since $ A_\text{old} $ is a hitting set at the beginning of the phase and is fixed on every endpoint on an update until the end of subphase~$ 4 $, $ A_\text{old} $ is a valid hitting set until the end of subphase~$ 4 $.
Since the set~$ \mathcal{A} $ reported by the algorithm always contains $ A_\text{old} $ until the end of subphase~$ 4 $, the algorithm correctly maintains a hitting set until the end of subphase~$ 4 $.
Correctness after the end of subphase~$ 4 $ is ensured as follows:
$ A_\text{new} $ is initialized to be a hitting set of $ G' $.
Note at any time (after the end of subphase~$ 1 $) for any node that $ v $ has not been the endpoint of an update, the set of edges incident on $ v $ in $ G' $ equals the set of edges of $ v $ to its first (up to) $ d $ neighbors in the current input graph.
Thus, it suffices to fix $ A_\text{new} $ on every endpoint of an edge updated since the beginning of the current phase to ensure that $ A_\text{new} $ is a hitting set on the current input graph.
This is what the algorithm does until the end of subphase~$ 4 $.
Since the algorithm continues fixing $ A_\text{new} $ on endpoints of edges updated in subphase~$ 5 $, $ A_\text{new} $ is a valid hitting set during all of subphase~$ 5 $.
Since the set~$ \mathcal{A} $ reported by the algorithm always contains $ A_\text{new} $ from the end of subphase~$ 3 $ onwards, the algorithm correctly maintains a hitting set during subphase~$ 5 $.

\end{proof}

\section{Sparse Emulator with Applications in $(1+\epsilon)$-APSP and $(1+\epsilon)$-MSSP}\label{sec:sparse emulator}
In this section we show that by maintaining a much sparser emulator, we can maintain distances from many sources efficiently. At a high-level, we first use the construction in the previous section to maintain a $(1+\epsilon, 4)$-emulator $H$, and then use a static deterministic algorithm on $H$ to obtain a $(1+\epsilon, n^{o(1)})$-emulator with size $n^{1+o(1)}$. 

\subsection{Sparse Deterministic Emulators}
We start by showing that we can maintain near-additive emulators with general stretch/size tradeoffs.
Before describing our dynamic construction, we observe that statically we can construct near-additive spanners (and hence emulators) efficiently and deterministically. For this we can use the deterministic algorithm of \cite{RTZ2005} for constructing the clusters used in the \textit{spanner construction} of \cite{TZ2006emulators}. In other words, we can derandomize the spanner construction in \cite{TZ2006emulators} and have:
\begin{lemma}[\cite{TZ2006emulators, RTZ2005}]\label{lem:det_spanners}
Given an unweighted graph $G=(V,E)$ with $m$ edges, and an integer $k >1$, there is a deterministic algorithm that constructs, for any $ 0 < \epsilon \leq 1 $, a $(1+\epsilon, \beta)$-spanner with $\beta=(1/\epsilon)^k$ and size $O(n^{1+1/k})$ in $\tilde{O}(mn^{1/k}))$ time.
\end{lemma}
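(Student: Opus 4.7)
The plan is to invoke the Thorup--Zwick near-additive spanner construction and replace its randomized sampling step with the deterministic hitting-set construction of Roditty--Thorup--Zwick, then verify that the parameters stated in the lemma fall out of the standard analysis.

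First I would recall the skeleton of the Thorup--Zwick construction. One builds a nested hierarchy $V = A_0 \supseteq A_1 \supseteq \cdots \supseteq A_k = \emptyset$, and for each node $v \in A_i \setminus A_{i+1}$ one defines a ``cluster'' around $v$ consisting (roughly) of all nodes whose shortest path distance to $v$ is smaller than their distance to $A_{i+1}$, truncated at a bounded number of hops that depends on $\epsilon$ and $i$. The spanner $H$ is then the union of the bounded-depth BFS trees rooted at each $v$ restricted to its cluster. The stretch analysis of \cite{TZ2006emulators} shows that for any pair $(u,w)$ one can walk along the shortest $uw$-path in $G$ and iteratively replace long sub-segments by hops through progressively deeper pivots in the $A_i$'s, which gives a $(1+\epsilon, \beta)$ guarantee with $\beta = (1/\epsilon)^{O(k)}$; this step is purely combinatorial and does not use randomness once the sets $A_i$ have the correct hitting-set property.

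Second, I would identify what the sets $A_i$ must satisfy for the analysis to go through, namely that for each $i$, $A_{i+1}$ hits the ``neighborhood in $G$ restricted to $A_i$'' of every node $v \in A_i$ whose such neighborhood exceeds a certain size threshold $\approx n^{i/k}$. In the randomized TZ construction this is guaranteed by sampling each element of $A_i$ independently with probability $n^{-1/k}$; the only role of randomness is to obtain, with high probability, a hitting set of size $\tilde O(|A_i|/n^{1/k})$ with this property. I would then invoke the deterministic greedy hitting-set algorithm of \cite{RTZ2005}, which constructs such an $A_{i+1}$ of size $O(|A_i| \log n / n^{1/k})$ in time proportional to the total size of the neighborhoods it covers, i.e.\ $\tilde O(m n^{1/k})$ across the whole hierarchy.

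Third, I would bound the running time and the size. The total number of clusters across all levels is $\sum_i |A_i| = \tilde O(n)$, and each cluster is explored to bounded depth while its size is at most $\tilde O(n^{1/k})$ in expectation/by the hitting set property; summed up this yields $\tilde O(n^{1+1/k})$ spanner edges and $\tilde O(m n^{1/k})$ total construction time, matching the bound in the lemma. The main (very mild) obstacle is pinning down the exact dependence on $\epsilon$ inside $\beta$: one needs to verify that the hop-truncation radius used at level $i$ is of the form $(1/\epsilon)^{O(i)}$ so that the stretch telescoping gives $\beta = (1/\epsilon)^{k}$ up to constants, as stated. Everything else is a direct transcription of \cite{TZ2006emulators} with the RTZ hitting set plugged in wherever the sampling step originally appeared.
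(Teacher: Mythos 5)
Your proposal matches the paper's own justification exactly: the paper states the lemma with only the one-line remark that one should plug the deterministic cluster/hitting-set construction of \cite{RTZ2005} into the near-additive spanner algorithm of \cite{TZ2006emulators}, which is precisely the derandomization route you outline. The additional detail you supply about the hierarchy $A_0 \supseteq \cdots \supseteq A_k$, the hop-truncated clusters, and the size/time accounting is a faithful expansion of the cited argument rather than a different approach.
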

Note that while with this (static) lemma we can construct a spanner (i.e., is a true subgraph of $G$), in our deterministic dynamic construction the algebraic techniques only let us maintain an emulator efficiently.
For maintaining these emulators we perform the following for a parameter $\epsilon'$ that will be set later:
\begin{itemize}
    \item Maintain a $(1+\epsilon', 4)$-emulator $H$ of size $\tilde{O}(n^{4/3})$.
    \item Turn the emulator into an unweighted graph: we replace each \textit{weighted edge} $w(e)$ of $H$ with an unweighted path of length $ w(e)$. Since the emulator we constructed only has edge weights bounded by $O(1/\epsilon')$, this will blow-up the size only by a factor of $O(1/\epsilon')$.
    \item Using \Cref{lem:det_spanners}, we statically construct a $(1+\epsilon', \beta)$-emulator $H'$ of $H$.
\end{itemize}

It is easy to see that $H'$ is now an emulator of $G$ with slightly larger additive factor. More formally,

\emulator*
\begin{proof}
It is easy to see that the update time is the maximum of the time required for running the algorithm in Lemma \ref{lem:det_spanners} statically, and the dynamic time for maintaining the $(1+\epsilon, 4)$-emulator which is given by Theorem \ref{thm:sparse_emulator}.

For every pair of nodes $s,t$, there is a path with length $(1+\epsilon')d_H(s,t) + \beta$ in $H'$. 
We also know $d_H(s,t) \leq (1+ \epsilon') d_G(s,t)+ 2$. Hence,

\[d_H'(s,t) \leq (1+\epsilon')d_H(s,t)+ \beta \leq (1+\epsilon')[(1+\epsilon')d_G(s,t)+2] +\beta \leq (1+3\epsilon')d_G(s,t) + O(\beta)\]

Hence by setting $\epsilon' =\epsilon/3$ the claim follows.
\end{proof}

One important special case of this result is when we set $k=\sqrt{\log n}$ and $ \epsilon $ is a constant. In this case we have an additive stretch of $\beta=O(1/\epsilon)^{\sqrt{\log n}}=n^{o(1)}$, and the size of the emulator is $\tilde{O}(n^{1+o(1)})$. We can obtain such a sparse emulator in $O( \frac{n^{1.407}}{\epsilon^2 \log \epsilon^{-1}})$ worst-case update time deterministically.

\subsection{Deterministic $(1+\epsilon)$-MSSP}\label{sec:MSSP}
Now using the emulator in this special setting, we do the following for maintaining multi-source distances from a set $S$ of sources: 
\begin{enumerate}
    \item At each update, after updating $H'$, statically compute $S \times V$ distances on $H'$ in $O(|S| \cdot |E(H')|)$ time. 
    \item Maintain $O(\beta)$-bounded distances between pairs in $S \times V$ on $G$. 
    \item The distance estimate $d(s,v)$ for each source $s \in S$ and node $v$ is the minimum distance estimate derived from these two steps.
\end{enumerate}

It is now easy to combine Lemma \ref{lem:sparse_det_emulator} with Lemma \ref{lem:overview:pairwise} by setting $k=\sqrt{\frac{\log_{1/\epsilon } n}{2}}$ and thus $\beta= O(1/\epsilon)^{\sqrt{2\log_{1/\epsilon} n}}$ to get the following corollary for maintaining distances from multiple-sources:

\mssp*

Using the above approach we can compute $(1+\epsilon)$-MSSP from up to $O(n^{0.52})$ sources in the same time complexity as our $(1+\epsilon)$-SSSP algorithm, namely $O(n^{1.529})$.

\subsection{Deterministic $(1+\epsilon)$-APSP}\label{sec:APSP}
We can directly use our $(1+\epsilon)$-MSSP algorithm to maintain all-pairs-shortest paths distances deterministically by setting $S=V$. But here we give an even simpler algorithm that does not involve dynamically maintaining an emulator.

We maintain a $(1+\epsilon)$-APSP data structure with worst-case update time $O(n^{2+o(1)})$ deterministically by i) Computing an emulator with size $O(n^{1+o(1)})$ \textit{statically} in each update, ii) Maintaining $n^{o(1)}$-bounded distances using the algebraic algorithms of Lemma \ref{lem:overview:pairwise}. More formally as a direct consequence of Theorem \ref{thm:MSSP} we have,

\apsp*

\section{Pairwise Bounded Distances via Algebraic Techniques}
\label{sec:algebraic}

The emulator constructions from Section \ref{sec:emulator} require the $S\times T$-distances for some dynamically changing sets $S,T \subseteq V$.
By exploiting fast matrix multiplication, existing algebraic algorithms allow for batch queries that return all $S\times T$-distances in less time than by individually querying each $st$-distance for $(s,t)\in S\times T$.
However, these batch queries are relatively slow (see \Cref{sec:batchquerycomparison} for details) and using them to maintain distances does not allow us to reach the conditional lower bounds from \cite{BrandNS19}.

That is why we here design dynamic algorithms that are able to maintain $S\times T$ pairwise distances more efficiently for dynamically changing sets $S$ and $T$.

As outlined in \Cref{sec:overview:algebraic}, dynamic shortest paths for small distances can be reduced to dynamic matrix inverse. 
Thus maintaining $S\times T$ distances reduces to maintaining the submatrix $(\mM^{-1})_{S,T}$ for some dynamically changing $n\times n$ matrix $\mM$ and dynamically changing sets $S,T\subseteq [n]$.
The aim of this section is to prove \Cref{thm:overview:low_hop}, previously stated in the overview (\Cref{sec:overview:algebraic}). Via the reduction to matrix inverse, our first step for proving \Cref{thm:overview:low_hop} is to prove the following lemma for matrix inverse:

\begin{lemma}\label{lem:algebraic:submatrix}
For all $0 \le \nu \le \mu \le 1$ 
there exists a deterministic dynamic algorithm
that maintains $(\mM^{-1})_{S,T}$ for dynamic sets $S,T\subset[n]$ of size at most $n^\mu$ 
and an $n\times n$ dynamic matrix $\mM$ 
that is promised to stay non-singular throughout all updates.
Changing any entry of $\mM$ and adding/removing an index to/from $S$ or $T$ takes 
$O(n^{\omega(1,1,\mu)-\mu} + n^{\omega(1,\mu,\nu)-\nu} + n^{\mu+\nu} + |S\times T|)$ 
operations per update.
\end{lemma}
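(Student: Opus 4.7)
The plan is to follow the blueprint from \Cref{sec:overview:algebraic}: use \Cref{lem:reduction_multi} to reduce maintaining the submatrix $(\mM^{-1})_{S,T}$ to maintaining partial rows $\mM^{-1}_{i,T}$ for any $i\in[n]$, and for the latter to modify the dynamic matrix inverse algorithm of \cite{BrandNS19} so that partial row queries against the slowly changing column set $T$ can be served quickly.

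First I would maintain the implicit factorization $\mM = \mM' + \mU'\mV'^\top + \mU\mV^\top$ exactly as in \cite{BrandNS19}: each entry update to $\mM$ appends a column with a single non-zero entry to $\mU$ and $\mV$; every $n^\nu$ updates the current $\mU,\mV$ are absorbed into $\mU',\mV'$ and $\mU,\mV$ are emptied; every $n^\mu$ updates the representation is fully reset via $\mM' \leftarrow \mM$. Applying the Sherman--Morrison--Woodbury identity (\Cref{lem:woodbury}) twice yields
$$\mM^{-1} = \mM''^{-1} + \mM''^{-1}\mU\mC\mV^\top\mM''^{-1}, \qquad \mM''^{-1} = \mM'^{-1} + \mA\mB,$$
with $\mA := \mM'^{-1}\mU'(\mI + \mV'^\top\mM'^{-1}\mU')^{-1}$, $\mB := \mV'^\top\mM'^{-1}$, and $\mC := (\mI + \mV^\top\mM''^{-1}\mU)^{-1}$. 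I would reuse the update mechanics of \cite{BrandNS19} (reformulated in sum form for clarity) to keep $\mM'^{-1}$, $\mA$, $\mB$, $\mC$ explicit within the target budget $O(n^{\omega(1,1,\mu)-\mu} + n^{\omega(1,\mu,\nu)-\nu} + n^{\mu+\nu})$ per update.

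The two added ingredients are precomputed slabs tailored to the slowly changing $T$: (i) $\mM''^{-1}_{[n],T}$, recomputed after each refresh of $\mA,\mB$ (every $n^\nu$ updates) by a single rectangular product, absorbed into $O(n^{\omega(1,\mu,\nu)-\nu})$ amortized per update, and extended by one new column whenever $T$ gains an index at cost $O(n^{1+\mu})$ per such event; and (ii) the slab $(\mV^\top \mM''^{-1})_{[\cdot],T}$, which consists of at most $n^\nu$ rows of $\mM''^{-1}$ restricted to the columns in $T$. This second slab is maintained by observing that each entry update to $\mM$ appends one column to $\mV$ and hence requires one fresh row query $\mM''^{-1}_{j,T}$, while each $T$-addition requires at most $n^\nu$ individual entry queries to $\mM''^{-1}$. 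Given both slabs, a partial row query $\mM^{-1}_{i,T}$ is served in $O(n^{\mu+\nu})$ operations: first recover $\mM''^{-1}_{i,T}$ by reading the precomputed slab and filling in the at most $O(n^\nu)$ missing entries via $\mM''^{-1}_{i,j} = \mM'^{-1}_{i,j} + (e_i^\top \mA)(\mB e_j)$ at cost $O(n^\mu)$ each; then compute $(\mM''^{-1}\mU\mC\mV^\top\mM''^{-1})_{i,T}$ by using that $e_i^\top\mM''^{-1}\mU$ needs only $n^\nu$ entries of $\mM''^{-1}$ (again $O(n^\mu)$ each), and multiplying the resulting length-$n^\nu$ vector through $\mC$ and the maintained submatrix $(\mV^\top\mM''^{-1})_{[n^\nu],T}$ at cost $O(n^\nu|T|) \le O(n^{\mu+\nu})$.

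Hence partial rows are maintained with update time $u(n^\mu, n) = O(n^{\omega(1,1,\mu)-\mu} + n^{\omega(1,\mu,\nu)-\nu} + n^{\mu+\nu})$ and query time $q(n^\mu, n) = O(n^{\mu+\nu})$; plugging these into \Cref{lem:reduction_multi} with $k = n^\mu$ gives the claimed submatrix-maintenance bound $O(n^{\omega(1,1,\mu)-\mu} + n^{\omega(1,\mu,\nu)-\nu} + n^{\mu+\nu} + |S\times T|)$. The main obstacle is that the bounds above are naturally amortized, since the slab precomputation and the outer reset $\mM' \leftarrow \mM$ are only triggered every $n^\nu$ and $n^\mu$ updates respectively; to obtain worst-case update time I would invoke the standard staggered-copies technique sketched in \cite[Theorem B.1]{BrandNS19}, running a constant number of shifted copies of the representation and spreading each batch multiplication evenly across the corresponding phase. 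A secondary subtlety is to verify that $T$-additions during a phase do not blow up the per-update work; this is handled by observing that at most $n^\nu$ such additions can occur between successive refreshes of $\mA,\mB$, so the extra per-addition cost $O(n^{1+\mu})$ plus $O(n^{\mu+\nu})$ fits within the $O(n^{\omega(1,\mu,\nu)-\nu})$ budget.
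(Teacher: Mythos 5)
Your overall strategy matches the paper's: reduce via \Cref{lem:reduction_multi} to fast partial-row queries $\mM^{-1}_{i,T}$, then modify the data structure of \cite{BrandNS19} by maintaining precomputed slabs $\mM''^{-1}_{[n],T}$ and $(\mV^\top\mM''^{-1})_{[n^\nu],T}$. Most of the accounting is correct. However, there is a genuine gap in how you charge additions to $T$.

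You propose to eagerly append a full new column to the slab $\mM''^{-1}_{[n],T}$ whenever $T$ gains an index, at cost $O(n^{1+\mu})$ (computing $\mM''^{-1}_{[n],j} = \mM'^{-1}_{[n],j} + \mA(\mB e_j)$ involves an $n\times n^\mu$ matrix--vector product). You then assert this fits the budget by arguing that at most $n^\nu$ such additions occur per phase — but averaging $O(n^{1+\mu})$ work per addition still leaves $O(n^{1+\mu})$ per set-update, which is \emph{not} covered by any term of the claimed bound $O(n^{\omega(1,1,\mu)-\mu}+n^{\omega(1,\mu,\nu)-\nu}+n^{\mu+\nu}+|S\times T|)$ in general: we always have $\omega(1,\mu,\nu)\le 1+\mu+\nu$, so $n^{\omega(1,\mu,\nu)-\nu}\le n^{1+\mu}$ with strict inequality whenever rectangular matrix multiplication beats naive, and $\omega(1,1,\mu)-\mu < 1+\mu$ once $\mu$ is not small (e.g.\ $\omega-1 < 2$ at $\mu=1$); for the paper's actual parameters ($\mu\approx 0.856$, budget $O(n^{1.407})$) your $n^{1.856}$ is far too large. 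Moreover, there is nothing in the lemma statement tying the number of $T$-additions to the number of entry updates, so the "$n^\nu$ per phase" premise is itself unjustified. The fix, which the paper uses in \Cref{lem:batch_update}, is to handle $T$-additions \emph{lazily}: maintain $T = T^\init\cup T^\dyn$ with $|T^\dyn|\le n^\nu$, keep only the columns $\mM^{-1}_{[n],T^\init}$ explicit, compute entries indexed by $T^\dyn$ individually at query time ($O(n^\mu)$ each, $O(n^{\mu+\nu})$ total), and batch-fill the missing columns via a single $n\times n^\mu$ by $n^\mu\times n^\nu$ product in $O(n^{\omega(1,\mu,\nu)})$ only once $|T^\dyn|>n^\nu$, amortizing the per-addition cost to $O(n^{\omega(1,\mu,\nu)-\nu})$. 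A smaller related point: your slab refresh "every $n^\nu$ updates" must be computed incrementally — adding only the rank-$n^\nu$ correction $\mQ(\mI+\mR)^{-1}\mS$ restricted to columns $T$ at cost $O(n^{\omega(1,\nu,\mu)})$ — rather than recomputing $\mA\mB_{\cdot,T}$ from scratch, which would cost $O(n^{\omega(1,\mu,\mu)})$ and exceed the budget. With these adjustments, the remainder of your argument (the query complexity $O(n^{\mu+\nu})$, the maintenance of $(\mV^\top\mM''^{-1})_{[n^\nu],T}$, and the worst-case deamortization) is sound and matches the paper's two-level \Cref{lem:batch_update}/\Cref{lem:algebraic:entry_update} decomposition.
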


We prove \Cref{lem:algebraic:submatrix} in \Cref{sec:algebraic:submatrix}.
In \Cref{sec:algebraic:graph} we then use \Cref{lem:algebraic:submatrix} to prove \Cref{thm:overview:low_hop}.

\subsection{Dynamic Matrix Inverse with Submatrix Maintenance}
\label{sec:algebraic:submatrix}

Our proof of \Cref{lem:algebraic:submatrix} starts by reducing the task of explicitly maintaining some submatrix $\mM^{-1}_{S,T}$ to maintaining the inverse $\mM^{-1}$ implicitly where we allow queries to partial rows $\mM^{-1}_{i,T}$ for any $i \in [n]$.
%, and partial columns $\mM^{-1}_{T,j}$ for any $j \in [n]$.
This reduction is formalized in \Cref{lem:reduction_multi}.

\reductionMulti

\begin{proof}
During initialization we compute $\mM^{-1}$ in $O(n^\omega)$ operations, so we know $\mM^{-1}_{S,T}$.
Now consider an entry update where we add some scalar $c$ to some entry $\mM_{i,j}$, i.e.~we add some $e_i e_j^\top c$ to $\mM$.
By the Sherman-Morrison-Woodbury identity of \Cref{lem:woodbury} we have
$$
(\mM+e_i e_j^\top c)^{-1} = \mM^{-1} - \frac{\mM^{-1} e_i c e_j^\top \mM^{-1}}
{1 + c e_j^\top \mM^{-1} e_i}.
$$
Note that $\mM^{-1} e_i$ and $e_j^\top \mM^{-1}$ are just $i$th column and $j$th row of $\mM^{-1}$. 
So to obtain the submatrix $((\mM+e_i e_j^\top c)^{-1})_{S,T}$ 
we just need to compute
$$
((\mM+e_i e_j c)^{-1})_{S,T} 
= 
\mM^{-1}_{S,T} - \frac{\mM^{-1}_{S,i} c \mM^{-1}_{j,T}}
{1 + c \mM^{-1}_{i,j}}.
$$
This allows us to maintain $\mM^{-1}_{S,T}$ throughout all updates 
by just querying $\mM^{-1}_{S,i}$ and $\mM^{-1}_{j,T}$ in $O(q(k,n))$ time for $k=\max(|S|,|T|)$. 
Computing the above expression then takes an additional $O(|S\times T|)$ operations. 
Note that we can support queries to $\mM^{-1}_{S,i}$ 
by running a second copy in parallel on matrix $\mM^\top$ 
and querying $(\mM^\top)^{-1}_{i,S} = (\mM^{-1}_{S,i})^\top$.
Updates to set $S$ (or $T$) are handled by querying the missing row (or column), 
i.e.~adding some $1\le k\le n$ to $S$ just requires us to query $\mM^{-1}_{k,T}$ 
in order to know the new submatrix $\mM^{-1}_{S\cup\{k\},T}$.

\end{proof}

\iffalse
\begin{proof}
Assume the given dynamic algorithm also supports queries of the form $\mA^{-1}_{T,j}$ because we can run a second copy of the dynamic algorithm on $\mA^\top$ and query $(\mA^{-1}_{T,j})^\top = (\mA^\top)^{-1}_{j,T}$.

By the Sherman-Morrison identity \cite{ShermanM50}, we have
$$
(\mA + c\cdot e_i e_j^\top)^{-1}
=
\mA^{-1} -
\frac{\mA^{-1} e_i \cdot c \cdot e_j^\top \mA^{-1}}
{1 + c \cdot e_j^\top \mA^{-1} e_i}.
$$
Here $\mA^{-1} e_i$ and $e_j^\top (\mI-X\mA)^{-1}$ are the $i$-th column and $j$-th row of $\mA^{-1}$ respectively.
Thus, to maintain $\mA^{-1}_{S,T}$ under entry updates to $\mA$, we first query $\mA^{-1})_{S,i}$, $\mA^{-1}_{j,T}$ and $\mA^{-1})_{j,i}$
in $O(q(k))$ operations and then compute
$\mA^{-1}_{S,i} c (1 + c\cdot\mA^{-1})_{j,i})^{-1} \mA^{-1}_{j,T}$ in $O(|S \times T|)$ operations.

When adding an index $i\in[n]$ to $S$ (or $T$) we just need to query $\mA^{-1}_{i,T}$ (or $\mA^{-1}_{S,i}$).
When removing an index, we do not need to do anything besides removing the index from set $S$ and $T$.
\end{proof}
\fi

To prove \Cref{lem:algebraic:submatrix}, we must now design a dynamic algorithm that is able to efficiently maintain the inverse $\mM^{-1}$ in implicit form that allows fast queries to any partial row $\mM^{-1}_{i,T}$.
The dynamic algorithm by \cite{BrandNS19} has the fastest update complexity, but a rather slow query complexity.
We modify this algorithm to support faster queries to partial rows $\mM^{-1}_{i,T}$ by exploiting the fact that set $T\subset[n]$ is slowly changing.

%The high-level idea is as follows:
As outlined in \Cref{sec:overview:algebraic}, for any $0\le\nu\le\mu\le1$ the dynamic algorithm maintains the dynamic matrix $\mM$ in implicit form
\begin{align}
\mM = \mM' + \mU' \mV'^\top + \mU \mV^\top
\label{eq:algebraic:implicit}
\end{align}
where $\mM'$ is matrix $\mM$ at most $n^\mu$ updates ago,
and $\mU'\mV'$ have at most $n^\mu$ columns, 
and $\mU,\mV$ have at most $n^\nu$ columns.
%all of which have at most one non-zero entry per column.
Initially, $\mU,\mU',\mV,\mV'$ are all empty matrices (i.e. with 0 columns) as $\mM = \mM'$.
Then, with each update to $\mM$, we update $\mU$ and $\mV$ as follows:
The entry update to $\mM$ can be represented as adding some $v\cdot e_i e_j^\top$ to $\mM$. We can thus maintain \eqref{eq:algebraic:implicit} by setting $\mU \leftarrow [\mU|v\cdot e_i]$ and $\mV \leftarrow [\mV, e_j]$.
After $n^\nu$ updates, the matrices $\mU$ and $\mV$ have $n^\nu$ columns 
and we set $\mU' \leftarrow [\mU'|\mU]$, $\mV' \leftarrow [\mV'|\mV]$
and reset $\mU,\mV$ to be empty matrices.
Thus invariant \eqref{eq:algebraic:implicit} is still maintained and we can assume $\mU$ and $\mV$ always have at most $n^\nu$ columns.
After $n^\mu$ updates, the algorithm is reset by letting $\mM' \leftarrow \mM$ and all $\mU,\mU',\mV,\mV'$ are reset to be empty matrices. Thus we can also assume $\mU',\mV'$ have at most $n^\mu$ columns.

We now design two dynamic algorithms:
(i) a dynamic algorithm that maintains the inverse of $\mM'' := (\mM' + \mU' \mV'^\top)$ where $\mU'$ and $\mV'$ receives $n^\nu$ new columns with each update.
(ii) a dynamic algorithm that maintains the inverse of $\mM = \mM'' + \mU \mV^\top$ where $\mU$ and $\mV$ receive only one column per update. 

Note that to apply the Sherman-Morrison-Woodbury identity (\Cref{lem:woodbury}), data structure (ii) requires access to the inverse of $\mM''$.
This inverse is maintained by data structure (i), i.e.~we run data structure (i) internally in data structure (ii).
%In order to facilitate the fast queries to $\mA^{-1}_{i,S}$, data structure (ii) must also query entries of $\mM^{-1}_{i,S}$ from data structure (i).
The following \Cref{lem:batch_update} is our implementation of data structure (i).

\begin{lemma}\label{lem:batch_update}
For all $0 \le \nu \le \mu \le 1$ 
there exists a deterministic dynamic algorithm
that initializes in $O(n^\omega)$ operations on a dynamic set $T\subset[n]$
and a dynamic $n\times n$ matrix $\mM$ that is promised to stay non-singular.
The algorithm supports batch-updates to $n^\nu$ entries of $\mM$
in $O(n^{\omega(1,\mu,\nu)}+n^{\omega(1,1,\mu)-\mu+\nu})$ operations per update
and querying any entry $\mM^{-1}_{i,j}$ in $O(n^\mu)$ operations.
It also supports adding/removing an index to/from $T$ in $O(n^{\omega(1,\mu,\nu)-\nu})$ operations,
and querying any partial row $\mM^{-1}_{i,T}$ for any $i\in[n]$ in $O(n^{\nu+\mu})$ operations.
\end{lemma}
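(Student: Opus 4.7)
The plan is to adapt the dynamic matrix inverse algorithm of~\cite{BrandNS19} to the batch setting and then graft on an incremental precomputation that anticipates the slowly-changing set $T$. Concretely, I would keep $\mM$ in the implicit form $\mM = \mM' + \mU'\mV'^\top$, where $\mM'$ is refreshed only at reset points and $\mU',\mV'$ accumulate $n^\nu$ columns per batch update (each column having exactly one non-zero entry, encoding the $n^\nu$ entry updates to $\mM$). The explicit auxiliary matrices to maintain are $\mM'^{-1}$, $\mB := \mV'^\top\mM'^{-1}$ of dimension at most $n^\mu \times n$, and $\mA := \mM'^{-1}\mU'(\mI+\mV'^\top\mM'^{-1}\mU')^{-1}$ of dimension at most $n \times n^\mu$. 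By Sherman-Morrison-Woodbury (\Cref{lem:woodbury}) these satisfy $\mM^{-1} = \mM'^{-1} - \mA\mB$, so any individual entry $\mM^{-1}_{i,j}$ can be read as $\mM'^{-1}_{i,j} - (e_i^\top\mA)(\mB e_j)$ in the promised $O(n^\mu)$ time.

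Preprocessing computes $\mM'^{-1}$ in $O(n^\omega)$ operations. After every $n^\mu$ total entry updates to $\mM$ I would reset by absorbing $-\mA\mB$ into $\mM'^{-1}$: this is an $n\times n^\mu$ by $n^\mu\times n$ product costing $O(n^{\omega(1,1,\mu)})$, amortizing to $O(n^{\omega(1,1,\mu)-\mu+\nu})$ per batch (and de-amortized to worst-case via a parallel background copy in the spirit of \cite[Theorem B.1]{BrandNS19}). For each intermediate batch update, $\mB$ gains $n^\nu$ rows by reading rows of $\mM'^{-1}$ in $O(n^{1+\nu})$, while $\mA$ is updated using the block-inverse identity applied to the $k\times k$ matrix $\mE := \mI+\mV'^\top\mM'^{-1}\mU'$ enlarged to $(k+n^\nu)\times(k+n^\nu)$, where $k\le n^\mu$. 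Writing the enlarged inverse via its Schur complement of size $n^\nu\times n^\nu$ shows that $\mA_{\text{new}}$ equals $[\mA_{\text{old}}\mid\mathbf{0}]$ plus a rank-$n^\nu$ correction expressible as a chain of products whose inner dimensions are $n^\nu$ or $n^\mu$; the dominant product has shape $n\times n^\mu$ by $n^\mu\times n^\nu$, giving the claimed $O(n^{\omega(1,\mu,\nu)})$.

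For the partial-row query I would additionally maintain an explicit precomputation $\mP := \mM^{-1}_{[n],T_\circ}$, where $T_\circ$ is a snapshot of $T$ that lags behind by at most $n^\nu$ set updates. Right after each batch update, I refresh $\mP$ using the current $\mA,\mB$ and $T$: since only the $n^\nu$ new columns $\Delta\mA$ of $\mA$ differ (the rest being absorbed at the reset), the update to $\mP$ amounts to one $n\times n^\nu$ by $n^\nu\times n^\mu$ product, i.e.~$O(n^{\omega(1,\mu,\nu)})$. For set updates to $T$, I would buffer incoming indices and every $n^\nu$ such updates recompute the affected columns of $\mP$ in another $O(n^{\omega(1,\mu,\nu)})$ operations, amortizing to $O(n^{\omega(1,\mu,\nu)-\nu})$ per set update as required. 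A query to $\mM^{-1}_{i,T}$ then reads row $i$ of $\mP$ and fills in the at most $n^\nu$ entries with column indices in $T\setminus T_\circ$ by running $n^\nu$ single-entry queries at $O(n^\mu)$ each, totaling $O(n^{\mu+\nu})$.

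The hard part will be controlling the update of $\mA$: the block-inverse correction potentially alters every existing column of $\mA$, not just the $n^\nu$ new ones, so one must verify that the Schur-complement inversion, the back-substitution that yields the new $\mA$, and the downstream propagation to $\mP$ all fit inside $O(n^{\omega(1,\mu,\nu)}+n^{\omega(1,1,\mu)-\mu+\nu})$. I plan to handle this by isolating the entire change as a single rank-$n^\nu$ outer product acting on the $n$-dimensional side, so that both the update to $\mA$ and the update to $\mP$ collapse into the same $n\times n^\mu$ by $n^\mu\times n^\nu$ shape; the associated $n^\nu\times n^\nu$ inverse of the Schur complement is negligible. A final standard de-amortization (running a duplicate data structure one phase ahead and swapping at phase boundaries) converts the amortized reset bound into the claimed worst-case complexity.
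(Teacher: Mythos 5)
Your proposal is correct, but it takes a genuinely different low-rank decomposition than the paper's proof. You decompose the \emph{matrix} as $\mM = \mM' + \mU'\mV'^\top$ with sparse one-nonzero-per-column factors, and then maintain the two auxiliary Woodbury matrices $\mA,\mB$ explicitly; as you flag, this forces you to propagate each batch through the enlarged inner matrix $\mE = \mI+\mV'^\top\mM'^{-1}\mU'$ by block inversion/Schur complement. Your key observation that the resulting change to \emph{all} existing columns of $\mA$ still collapses to a rank-$n^\nu$ outer product (so both $\mA$ and $\mP$ can be patched in $O(n^{\omega(1,\mu,\nu)})$) is what makes this work, and indeed quantities like $\mB_{21}\mE^{-1}$ can be read off as rows of $\mA_{\text{old}}$, so no explicit $\mE^{-1}$ is needed. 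The paper's proof avoids this entire bookkeeping by decomposing the \emph{inverse} as $\mM^{-1} = \mM'^{-1} + \mU'\mV'^\top$ with \emph{dense} accumulated factors: each batch is processed once through Sherman--Morrison--Woodbury to produce a rank-$n^\nu$ correction $-\mQ(\mI+\mR)^{-1}\mS$, whose two factors are simply appended as $n^\nu$ new columns of $\mU'$ and rows of $\mV'^\top$. No global $\mE$ is ever inverted, and the old columns of $\mU',\mV'$ are never touched between resets. Both routes give the same operation counts (the reset amortization, the $O(n^\mu)$ single-entry query, the lagging snapshot $T_\circ$ with $O(n^{\omega(1,\mu,\nu)-\nu})$ set-updates and $O(n^{\mu+\nu})$ partial-row queries, and the de-amortization via a background copy are all identical), so the difference is purely one of bookkeeping: your formulation keeps a uniform invariant across batch- and single-entry settings at the price of the Schur-complement argument, while the paper's accumulate-the-correction formulation is shorter and avoids proving that the $\mA$-update stays low rank.
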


\begin{proof}
Let $\mM'$ be the matrix $\mM$ during initialization.
Then we maintain the inverse $\mM^{-1}$ in the implicit form
\begin{align}
\mM^{-1} = \mM'^{-1} + \mU'\mV'^\top \label{eq:batch_form}
\end{align}
for two matrices $\mU'$ and $\mV'$ with at most $n^\mu$ columns.
The set $T\subset[n]$ is maintained in the form $T=T^\init\cup T^\dyn$ 
where $|T^\dyn|\le n^\nu$, $|T^\init|\le n^\mu$
and we maintain entries $\mM'^{-1}_{\cdot,T^\init}$ 
(i.e.~columns with index in $T^\init$)
explicitly.
\paragraph{Initialization}
Given $\mM$ and $T\subset[n]$, we compute $\mM^{-1}$ in $O(n^{\omega})$ operations and set $T^\init \leftarrow T$, $T^\dyn\leftarrow\emptyset$, $\mU'=\mV'\leftarrow 0$.
\paragraph{Batch-update}
Consider a batch-update that changes $n^\nu$ entries of $\mM$. We can phrase these changes as adding an outer product $\mU\mV^\top$ to $\mM$ where both $\mU$ and $\mV$ consist of $n^\nu$ columns with only one nonzero entry each.
Applying \Cref{lem:woodbury} yields
$$
(\mM + \mU\mV^\top)^{-1} = \mM^{-1} - \underbrace{\mM^{-1}\mU}_{=:\mQ}(\mI+\underbrace{\mV^\top\mM^{-1}\mU}_{=:\mR})^{-1}\underbrace{\mV^\top\mM^{-1}}_{=:\mS}
$$
Since $\mU$ and $\mV$ have only one nonzero entry per column, 
the matrix $\mQ$ consists of $n^\nu$ columns of $\mM^{-1}$,
$\mS$ of $n^\nu$ rows of $\mM^{-1}$,
and $\mR$ is an $n^\nu\times n^\nu$ submatrix of $\mM^{-1}$.
(Though some rows/columns are rescaled since the nonzero entries of $\mU$ and $\mV$ are not necessarily $1$.)
Computing these rows and columns of $\mM^{-1}$ takes $O(n^{\omega(1,\mu,\nu)})$ operations by \eqref{eq:batch_form} and both $\mU'$, $\mV'$ having at most $n^\mu$ columns.
Computing $(\mI+\mR)^{-1}$ and multiplying with $\mS$ then takes 
$O(n^{\omega\cdot\nu}+n^{\omega(1,\nu,\nu)}) = O(n^{\omega(1,\nu,\nu)})$
operations.
We can append the resulting matrix to $\mV'$ and append $\mQ$ to $\mU'$ to obtain
\begin{align*}
&~
(\mM + \mU\mV^\top)^{-1} 
=
\mM^{-1} - \mQ (\mI+\mR)^{-1} \mS \\
=&~ 
\mM'^{-1} - \mU'\mV'^\top - \mQ (\mI+\mR)^{-1} \mS 
=
\mM'^{-1} - \underbrace{[\mU'|\mQ]}_{\text{new }\mU'} 
\underbrace{
\left[
\begin{array}{c}
\mV'^\top \\
\hline
(\mI+\mR)^{-1} \mS
\end{array}
\right]
}_{\text{new }\mV'^\top}.
\end{align*}
Thus, we can maintain the inverse in form \eqref{eq:batch_form}.
Additionally, to maintain the entries $\mM^{-1}_{[n],T^\init}$ explicitly we just need to add $-(\mQ(\mI+\mR)\mS)_{[n],T^\init}$ to it, which takes $O(n^{\omega(1,\nu,\mu)})$ operations by $|T^\init| \le n^\mu$.

As soon as more than $n^\mu$ entries were changed (i.e.~after $n^{\mu-\nu}$ batch-updates) we restart the algorithm but instead of computing $\mM^{-1}$ in $O(n^\omega)$ operations we compute $\mM^{-1} = \mM'^{-1} + \mU'\mV'^\top$ in $O(n^{\omega(1,1,\mu)})$ operations.
Thus, the update complexity is $O(n^{\omega(1,1,\mu)-\mu+\nu}+n^{\omega(1,\mu,\nu)})$ amortized operations per update, which can be made worst-case via standard techniques (see e.g.~\cite[Theorem B.1]{BrandNS19}).
\paragraph{Set-update}
If an index $i \in [n]$ is added to $T$, we add the index to $T^\dyn$.
If $|T^\dyn|>n^\nu$, we set $T^\init \leftarrow T^\init \cup T^\dyn$ and $T^\dyn \leftarrow \emptyset$.
To maintain the invariant that the entries $\mM^{-1}_{\cdot,T^\init}$ are always maintained explicitly, we compute $\mM^{-1}_{[n],T^\dyn}$ in $O(n^{\omega(1,\mu,\nu)})$ operations by taking the product $\mU'\mV'^\top_{[n^\mu],T^\dyn}$.
The amortized update time is thus $O(n^{\omega(1,\mu,\nu)-\nu})$ which can be made worst-case via standard techniques.
When removing an index from $T$, we remove the index from $T^\init$ and $T^\dyn$. 
%No further work needs to be performed.
\paragraph{Query}
To query any entry $\mM^{-1}_{i,j}$, we compute $e_i^\top\mU'\mV'^\top e_j$ which takes $O(n^\mu)$ operations.
To compute $\mM^{-1}_{i,T}$ for any $i\in[n]$ we query each entry of $\mM^{-1}_{i,T^\dyn}$ individually as $\mM'^{-1}_{i,T^\init}$ is already known.
This takes $O(n^{\mu+\nu})$ operations.
\end{proof}

By maintaining $\mM^{-1} = (\mM'' + \mU \mV^\top)^{-1}$ where $\mM''^{-1}$ is maintained via the previous \Cref{lem:batch_update}, we obtain the following dynamic matrix inverse algorithm.
\Cref{lem:algebraic:entry_update} together with \Cref{lem:reduction_multi} then implies \Cref{lem:algebraic:submatrix}.

\begin{lemma}\label{lem:algebraic:entry_update}
For any $0 \le \nu \le \mu \le 1$ 
there exists a deterministic dynamic algorithm
that initializes in $O(n^\omega)$ operations
on a dynamic set $T\subset[n]$ 
and dynamic $n\times n$ matrix $\mM$ that is promised to stay non-singular. 
The algorithm supports entry updates to $\mM$ and adding/removing an index to/from $T$ in $O(n^{\omega(1,1,\mu)-\mu} + n^{\omega(1,\mu,\nu)-\nu} + n^{\mu+\nu})$ operations per update.
Querying any partial row $\mM^{-1}_{i,T}$ for any $i\in[n]$ takes $O(n^{\mu+\nu})$ operations.
\end{lemma}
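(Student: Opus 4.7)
The approach combines the batch-update algorithm of Lemma~\ref{lem:batch_update} with an outer Sherman--Morrison--Woodbury layer that accumulates individual entry-updates. I would maintain $\mM$ in the implicit form $\mM = \mM'' + \mU\mV^\top$, where $\mU,\mV \in \F^{n \times k}$ with $k \le n^\nu$ have exactly one nonzero entry per column, and $\mM''^{-1}$ is maintained internally by an instance of Lemma~\ref{lem:batch_update} over the same dynamic set $T$. An entry-update that adds $c$ to $\mM_{i,j}$ is processed by appending $c\cdot e_i$ as a new column to $\mU$ and $e_j$ as a new column to $\mV$. Once $n^\nu$ columns have accumulated I flush $\mU\mV^\top$ into $\mM''$ via the inner batch-update and reset $\mU,\mV$ to empty. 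By Lemma~\ref{lem:batch_update}, this contributes amortized $O(n^{\omega(1,\mu,\nu)-\nu}+n^{\omega(1,1,\mu)-\mu})$ operations per outer entry-update and $O(n^{\omega(1,\mu,\nu)-\nu})$ operations per set-update to $T$, both within the stated budget.

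Queries rely on the Sherman--Morrison--Woodbury identity (Lemma~\ref{lem:woodbury}):
$$
\mM^{-1}_{i,T} \;=\; \mM''^{-1}_{i,T} \;-\; (e_i^\top\mM''^{-1}\mU)\,\mC\,(\mV^\top\mM''^{-1})_{[k],T}, \qquad \mC := (\mI + \mV^\top\mM''^{-1}\mU)^{-1}.
$$
On top of the inner algorithm I would explicitly maintain two auxiliary objects: the $k\times |T|$ block $(\mV^\top\mM''^{-1})_{[k],T}$ and the $k\times k$ matrix $\mC$. When a new column is appended to $\mU,\mV$, a new row of the first block is obtained by a single partial-row query $\mM''^{-1}_{j,T}$ to the inner algorithm in $O(n^{\mu+\nu})$ operations; updating $\mC$ additionally requires $O(n^\nu)$ entry-queries at $O(n^\mu)$ each to extend $\mI+\mV^\top\mM''^{-1}\mU$ by one row and one column, after which the inverse can be refreshed from its predecessor via a Schur-complement block-inversion in $O(k^2) = O(n^{2\nu})$ operations. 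When $T$ changes, one further column of the first block has to be computed from $O(n^\nu)$ entry-queries, again at total cost $O(n^{\mu+\nu})$. Because $\nu \le \mu$, every such auxiliary step fits inside $O(n^{\mu+\nu})$.

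A query to $\mM^{-1}_{i,T}$ then proceeds in four steps: (i)~query $\mM''^{-1}_{i,T}$ from the inner algorithm in $O(n^{\mu+\nu})$; (ii)~assemble the length-$k$ row-vector $e_i^\top\mM''^{-1}\mU$ by $O(n^\nu)$ entry-queries in $O(n^{\mu+\nu})$; (iii)~multiply by $\mC$ in $O(n^{2\nu})$; and (iv)~multiply by the stored $(\mV^\top\mM''^{-1})_{[k],T}$ in $O(k\cdot |T|) = O(n^{\mu+\nu})$. All contributions to update and query time thus match $O(n^{\omega(1,1,\mu)-\mu}+n^{\omega(1,\mu,\nu)-\nu}+n^{\mu+\nu})$, and the amortized costs inherited from Lemma~\ref{lem:batch_update} can be converted to worst-case bounds through the standard staggered-copies construction (\cite[Theorem~B.1]{BrandNS19}).

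The main obstacle is the bookkeeping around batch-flushes, where $\mM''$ changes and every auxiliary object must be consistently reset: at the moment of the flush $\mU,\mV$ become empty, so $\mC$ reverts to the $0\times 0$ identity and the $(\mV^\top\mM''^{-1})_{[k],T}$ block becomes empty, while the set $T$ inside the Lemma~\ref{lem:batch_update} instance must be kept in sync with the outer $T$. A secondary issue is the non-singularity assumption of Lemma~\ref{lem:batch_update} applied to $\mM''$: this is not implied in full generality by the non-singularity of $\mM$, but it is automatic in the intended application to $h$-bounded distances, where entries lie in $\F[X]/\langle X^h\rangle$ and $\mM\equiv\mI\pmod X$ throughout, forcing $\mM''\equiv\mI\pmod X$ and hence invertibility throughout all updates.
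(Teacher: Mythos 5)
Your proposal matches the paper's proof essentially verbatim: same two-layer decomposition $\mM = \mM'' + \mU\mV^\top$ around an inner instance of Lemma~\ref{lem:batch_update}, the same explicitly maintained auxiliary block $(\mV^\top\mM''^{-1})_{[k],T}$ and small inverse $\mC$ (which the paper calls $\mN$), the same flush-and-reset every $n^\nu$ updates, and the same query decomposition and complexity accounting. One small remark: the non-singularity concern you raise for $\mM''$ is moot even in general, since $\mM''$ is always a past value of $\mM$, and the promise is that $\mM$ stays non-singular throughout the entire update sequence.
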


\begin{proof}
We first describe in which form we maintain matrix $\mM$ and its inverse $\mM^{-1}$.
We consider some matrix $\mM'$ which is the matrix $\mM$ some time in the past, but at most $n^\nu$ updates ago.
Thus we can write $\mM = \mM' + \mU\mV^\top$ for two $n\times n^\nu$ matrices $\mU$ and $\mV$ with at most one nonzero entry per column.
We consider the size of $\mU$ and $\mV$ fixed, but they may contain columns that are all-zero. With each entry update to $\mM$, one all-zero column of $\mU$ and $\mV$ obtain one non-zero entry.

By the Sherman-Morrison-Woodbury identity of \Cref{lem:woodbury}, 
we can maintain the inverse $\mM^{-1}$ implicitly 
via an $n^\nu \times n^\nu$ matrix $\mN$ 
where
\begin{align}
\mM^{-1} 
= 
(\mM' + \mU\mV^\top)^{-1} 
= 
\mM'^{-1} - 
\mM'^{-1}\mU
\underbrace{(\mI+\mV^\top\mM'^{-1}\mU)^{-1}}_{=:\mN}
\mV^\top\mM'^{-1}\label{eq:batch_form2}
\end{align}
Here we maintain $\mM'^{-1}$ via the dynamic algorithm of \Cref{lem:batch_update}.
We will reset $\mU=\mV=0$ and $\mN=\mI$ every $n^\nu$ updates and perform the past $n^\nu$ updates to $\mM'$ (via \Cref{lem:batch_update}).
This way we are able to guarantee the dimensionality bound on the matrices $\mU,\mV,\mN$.
\paragraph{Initialization}
Given $\mM$ and $T\subset[n]$, we initialize the dynamic algorithm of \Cref{lem:batch_update} on $\mM$ in $O(n^{\omega})$ operations.
\paragraph{Entry-update}
Consider an entry update to $\mM$, then one all-zero column of $\mU$ and $\mV$ obtains a non-zero entry.
To maintain the matrix $\mN = (\mI+\mV^\top\mM'^{-1}\mU)^{-1}$,
note that $\mV^\top\mM'^{-1}\mU$ changes in only one row and column per update.
Further, these rows and columns contain $O(n^\nu)$ entries of $\mM'^{-1}$.
Thus we can maintain $\mV^\top\mM'^{-1}\mU$ by doing the following:

(i) Perform $O(n^\nu)$ queries to the dynamic algorithm that maintains $\mM'^{-1}$ (\Cref{lem:batch_update}), which takes $O(n^{\nu+\mu})$ operations,

(ii) Update $\mN$ by using \Cref{lem:woodbury} on $(\mI+\mV^\top\mM'^{-1}\mU)$.
With each update, this matrix changes in only one row and column, so we can maintain its inverse (i.e.~$\mN$) via \Cref{lem:woodbury} in $O(n^{2\nu})$ operations.

We further explicitly maintain the entries $(\mV^\top\mM'^{-1})_{[n^\nu],T}$.
Changing one all-zero column of $\mV$ by adding a single nonzero entry to that column,
will change $\mV^\top\mM'^{-1}$ in only one row.
Further, this changed row contains a row of $\mM'^{-1}$.
Thus to maintain $(\mV^\top\mM'^{-1})_{[n^\nu],T}$, we must only query $\mM'^{-1}_{i,T}$ for some $i\in[n]$ which takes $O(n^{\nu+\mu})$ time by \Cref{lem:batch_update}.

The update complexity is thus 
$O(n^{\mu+\nu} + n^{\omega(1,1,\mu)-\mu} + n^{\omega(1,\mu,\nu)-\nu})$ 
where the the last two terms come from updating \Cref{lem:batch_update} every $n^\nu$ updates.
This amortized update complexity can be made worst-case via standard techniques (see e.g.~\cite[Theorem B.1]{BrandNS19}).
\paragraph{Set-update}
When an index is added/removed to/from set $T$, we pass this update to the data structure of \Cref{lem:batch_update}.
That dynamic algorithm then performs $O(n^{\omega(1,\mu,\nu)-\nu})$ field operations.
Further, if some index $i\in[n]$ is added to $T$, we must compute $(\mV^\top\mM'^{-1})_{[n^\nu],i}$ in order to maintain $(\mV^\top\mM'^{-1})_{[n^\nu],T}$ explicitly.
As $\mV$ contains at most one non-zero entry per column and has $n^\nu$ columns, this corresponds to querying $n^\nu$ entries of $\mM'$, which takes $O(n^{\mu+\nu})$ operations (\Cref{lem:batch_update}).
A set-update thus takes $O(n^{\mu+\nu}+n^{\omega(1,\mu,\nu)-\nu})$ operations.
\paragraph{Queries}
To query any entry $\mM^{-1}_{i,j}$ we first query entry $\mM'^{-1}_{i,j}$ via \Cref{lem:batch_update} in $O(n^\mu)$ operations.
Further, we compute $(e_i^\top \mM'^{-1}\mU) \mN (\mV^\top \mM'^{-1} e_j)$, 
where the right-most and left-most term are just $O(n^\nu)$ entries of $\mM'^{-1}$
(as $\mU$ and $\mV$ contain at most one non-zero entry per column).
Obtaining these entries of $\mM'^{-1}$ takes $O(n^{\nu+\mu})$ operations (\Cref{lem:batch_update}).
The product with $\mN$ takes an additional $O(n^{2\nu})$ operations which is subsumed by the previous cost.
The query complexity is thus $O(n^{\nu+\mu})$ operations.

To query $(\mM^{-1})_{i,T}$ for any $i\in[n]$, we can query $\mM'^{-1}_{i,T}$ via \Cref{lem:batch_update} in $O(n^{\nu+\mu})$ operations.
We must also compute $(\mM'^{-1}\mU \mN \mV^\top \mM'^{-1})_{i,T} = (e_i^\top \mM'^{-1} \mU) \mN (\mV^\top \mM^{-1})_{[n^\nu],T}$.
Here $(e_i^\top \mM'^{-1} \mU)$ are just $O(n^\nu)$ entries of $\mM'^{-1}$ which are obtained in $O(n^{\nu+\mu})$ via \Cref{lem:batch_update}.
The entries of $(\mV^\top \mM^{-1})_{[n^\nu],T}$ are maintained in the update routine described above.
Multiplying all these terms takes $O(n^{2\nu}+n^{\nu}|T|)=O(n^{\nu+\mu})$ operations.
In summary, answering a query to $(\mM^{-1})_{i,T}$ for any $i\in[n]$ takes $O(n^{\nu+\mu})$ operations.
\end{proof}

At last, we observe that the algorithms presented here work also for a polynomial matrix of the form $(\mI-X\cdot\mA) \in (\F[X]\langle X^h \rangle)^{n\times n}$ under entry updates to $\mA$. 
When $\mA$ is the adjacency matrix of some graph, the inverse of such a matrix encodes the distances in $G$. The respective reduction was previously outlined in \Cref{sec:overview:algebraic} and will be used formally in \Cref{sec:algebraic:graph}.

\begin{corollary}\label{cor:algebraic:polynomial}
\Cref{lem:algebraic:submatrix} can also maintain the inverse of $(\mI-X\mA) \in (\F[X]/\langle X^h \rangle)^{n \times n}$ under entry updates to $\mA$.
The update and query complexities increase by a factor of $O(h \log h)$ field operations.
\end{corollary}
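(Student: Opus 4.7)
The plan is to argue that the entire algorithm underlying \Cref{lem:algebraic:submatrix} (which internally invokes \Cref{lem:batch_update}, \Cref{lem:algebraic:entry_update}, and \Cref{lem:reduction_multi}) is essentially ``ring-generic'': it performs only additions, multiplications, and a handful of scalar/matrix inversions, all of which remain well-defined when the scalar field $\F$ is replaced by the truncated polynomial ring $R := \F[X]/\langle X^h \rangle$. The corollary then follows by bounding the cost of a single arithmetic operation in $R$ in terms of field operations.

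First I would check the per-operation cost in $R$. Elements of $R$ are polynomials of degree $<h$. Addition costs $O(h)$ field operations, and multiplication (truncated to degree $<h$) costs $O(h \log h)$ via FFT; inversion of an element $p(X) \in R$ whose constant term $p(0) \in \F$ is a unit costs $O(h \log h)$ via Newton iteration, which only uses multiplication and addition. Thus every $\F$-operation in the original analysis is replaced by an $R$-operation at an $O(h \log h)$ multiplicative overhead.

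Second, I would verify that every inversion the algorithm attempts is well-defined over $R$ when instantiated on $\mM = (\mI - X \mA)$ under entry updates to $\mA$. An entry-update $\mA_{i,j} \mapsto \mA_{i,j} + \delta$ corresponds to the rank-one correction $-X \delta \cdot e_i e_j^\top$ applied to $\mM$; in particular every update, and hence every column of the rank-$n^\nu$ matrices $\mU,\mV$ and the rank-$n^\mu$ matrices $\mU',\mV'$, has entries whose constant term in $X$ is zero. Since $\mM$ starts with constant term $\mI$, it remains invertible in $R$ throughout, and $\mM^{-1}$ always has constant term $\mI$ (this is the same observation already used in \Cref{sec:overview:algebraic} to identify $\mM^{-1}$ with $\sum_{k<h} X^k \mA^k$). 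Consequently every Sherman--Morrison scalar $1 + c \cdot \mM^{-1}_{i,j}$ and every Sherman--Morrison--Woodbury matrix $\mI + \mV^\top \mM^{-1} \mU$ (with $\mV$ of the form above) has constant term $1$ or $\mI$, and is therefore invertible over $R$. Each such inversion of a $k\times k$ matrix over $R$ can be done in $O(k^\omega)$ ring operations by running the usual LU decomposition, since all pivots along the way are again $R$-units by the same constant-term argument applied to principal submatrices.

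Putting the two parts together, the algorithm of \Cref{lem:algebraic:submatrix} runs unchanged over $R$, and the operation counts quoted there become bounds on the number of $R$-operations; multiplying by the $O(h \log h)$ per-operation cost gives the claimed blow-up in $\F$-operations. The only potential obstacle I anticipate is the bookkeeping to confirm that no matrix the algorithm needs to invert ever acquires a singular constant-term block, but this is handled uniformly by the observation above that all rank-one updates come with a factor of $X$ and hence never perturb the constant-term blocks of the maintained matrices. No changes to the update, query, or set-update procedures themselves are required.
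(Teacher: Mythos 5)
Your proposal is correct and follows essentially the same approach as the paper: observe that the algorithms of Lemma~\ref{lem:algebraic:submatrix} (and the lemmas it relies on) only perform matrix additions, multiplications, and inversions; check that the only inversions are of Sherman--Morrison--Woodbury matrices of the form $\mI + \mV^\top\mM^{-1}\mU$, which have constant term $\mI$ because every rank-one correction stemming from an entry update to $\mA$ carries a factor of $X$; and then charge each ring operation in $\F[X]/\langle X^h\rangle$ at $O(h\log h)$ field operations. You supply a little more detail than the paper on the per-operation costs (FFT multiplication, Newton iteration for scalar inverses, an LU argument for matrix inverses with unit-constant-term pivots), whereas the paper simply notes the explicit inverse $\sum_{k} X^k(-\mN)^k$ for matrices of the form $\mI+X\mN$, but the key observations and the overall structure of the argument are the same.
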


\begin{proof}
The statements of \Cref{lem:algebraic:submatrix,lem:reduction_multi,lem:batch_update,lem:algebraic:entry_update} assume that the input matrix is over some field $\F$ and measures the complexity in the number of field operations in $\F$.
We now consider what happens if the matrix has entries from some ring $\F[X]/\langle X^h \rangle$ instead (i.e.~polynomials with degrees truncated to $<h$).
The algorithms of \Cref{lem:algebraic:submatrix,lem:reduction_multi,lem:batch_update,lem:algebraic:entry_update} only perform matrix operations (i.e.~matrix products or matrix inversions). 
To verify that the algorithms work for polynomial matrices, we must only verify that the matrices the algorithms try to invert are invertible.
For this, note that all matrix-inversions come from the Sherman-Morrison-Woodbury identity \Cref{lem:woodbury} which attempts to invert some matrix of the form $(\mI + \mV^\top \mM^{-1} \mU)$.
Here all entries of $\mU$ can be assumed to be a multiple of $X$, because we maintain the inverse of $\mI-X\cdot\mA$ with entry updates to $\mA$, i.e.~updates of form $\mI-X\mA+Xve_i e_j^\top$ for $v\in\F[X]\langle X^h\rangle$.
Thus $\mU$ is a multiple of $X$ and the inversion of $(\mI + \mV^\top \mM^{-1} \mU)$ is well-defined.
(The inverse of a matrix of the form $\mI + X\mN$ is $\sum_{k=0}^h X^k (-\mN)^k$.)
Finally, note that there is a complexity blow-up of a factor $O(h \log h)$ because each matrix entry is a polynomial of degree at most $h$.

\end{proof}

\subsection{Pairwise bounded distances}
\label{sec:algebraic:graph}

In \Cref{sec:algebraic:submatrix} we have shown how to maintain the submatrix of some dynamic matrix inverse.
We now use this result to formally prove \Cref{thm:overview:low_hop} which we restate here for convenience:

\lowHopThm*

To maintain $st$-distances, we will pick $|S|,|T| = \tilde{O}(n^{2/3})$, resulting in an $O(n^{1.407}/\epsilon^2 \log \epsilon^{-1})$ time algorithm. For the single-source case we will pick $|S|=n$ and $|T| = \tilde{O}(\sqrt{n})$, resulting in an $O(n^{1.529}/\epsilon^2 \log \epsilon^{-1})$ time algorithm.

%\begin{corollary}\label{lem:algebraic:slowUpdate}
%For any $0 \le \nu \le \mu \le 1$ and $0\le h\le n$, we can deterministically maintain dynamic $S \times T$ distances up to $h$ on unweighted directed graphs after $O(h^2 n^\omega)$ preprocessing time.
%Each edge update, or update to sets $S,T$, takes 
%$O(h^2 
%    (n^{\omega(1,1,\mu)-\mu} 
%     + n^{\omega(1,\mu,\nu)-\nu} 
%     + n^{\nu+\mu} 
%     + |S \times T|
%     ) \log h)$
%time.
%\end{corollary}

%We pick $|S| = O(\sqrt{n} \log n)$ and $T = V$ for our emulator construction in \Cref{sec:emulator}.
%For current values of $\omega$ and $\mu \approx 0.529$, this leads to $O(h^2 n^{1.529})$ edge- and set-update time.

\begin{proof}
Given adjacency matrix $\mA$ of a dynamic graph
we consider the inverse of $\mI-X\mA \in (\F[X]/\langle X^{h+1} \rangle)^{n\times n}$. We work with the field $\F = \Z_p$ for some prime $n^h < p \le 2n^h$.
We will argue that this $p$ is large enough that our numbers never wrap around (i.e. are always smaller than $p$). Specifically, any zeroes we encounter in our results will be true zeroes and not some multiples of $p$ that became $0$ by the modulo operation.

Note that $\mA_{u,v}^k$ is exactly the number of (not necessarily simple) paths from $u$ to $v$ of distance $k$.
For any $u,v$, the total number of $uv$-paths of distance $k$ is at most $n^k <p$ when $k \le h$, so by $(\mI-X\mA)^{-1} = \sum_{k=0}^h X^k\mA^k$, we can read any $uv$-distance $\le h$ by looking at the smallest degree with non-zero coefficient in $(\mI-X\mA)^{-1}_{u,v}$.
So to maintain pairwise distances of $S\times T$, we need to maintain the submatrix $(\mI-X\mA)^{-1}_{S,T}$ under entry-updates to $\mA$ and updates to $S$ and $T$.
This is done via \Cref{lem:algebraic:submatrix}.

At last, we observe that each field operation takes $O(h)$ time in the Word-RAM model because they fit in $O(h)$ words \cite{Knuth97}, so the complexities increase by a factor of $O(h)$, which leads to the complexities stated in \Cref{thm:overview:low_hop}.
\end{proof}

We remark that, if we are fine with randomization, we can save a factor $h$ in the complexity of \Cref{thm:overview:low_hop}. For example, one could pick a random prime $p$ with bitlength $O(\log n)$ to use for field $\Z_p$. Then for any $u,v \in V$ we have with constant probability than the number of shortest $uv$-paths is nonzero modulo $p$. So, running $O(\log n)$ copies of the data structure in parallel with independent random $p$, allows to maintain the distances w.h.p.

Given the short bit-length of the prime $p$, one arithmetic operation takes only $O(1)$ time in Word-RAM model and we obtain the following \Cref{thm:randomized:low_hop}.

\begin{corollary}
    \label{thm:randomized:low_hop}
    For all $0\le\nu\le\mu\le1$ there exists a randomized dynamic algorithm that, after preprocessing a given unweighted directed graph $G$ and sets $S,T \subseteq V$, supports edge-updates to $G$ and set-updates to $S$ and $T$ (i.e.~adding or removing a node to $S$ or $T$) as long as $|S|,|T|\le n^\mu$ throughout all updates.
    After each edge- or set-update the algorithm returns the $h$-bounded pairwise distances of $S\times T$ in $G$.
   
    The preprocessing time is $\tilde{O}(n^\omega h)$,
    and the worst-case update time is
    $$\tilde{O}((n^{\omega(1,1,\mu)-\mu} + n^{\omega(1,\mu,\nu)-\nu} + n^{\mu+\nu} + |S\times T|) h).$$
   
    The returned distances are correct with high probability.
\end{corollary}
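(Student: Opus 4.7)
The plan is to follow exactly the hint given in the paragraph preceding the corollary: reuse the deterministic construction from \Cref{thm:overview:low_hop} (which in turn rests on \Cref{lem:algebraic:submatrix} and \Cref{cor:algebraic:polynomial}) verbatim, except replace the deterministic prime $p$ of bitlength $\Theta(h \log n)$ with a prime of bitlength $O(\log n)$ drawn at random, and run $\Theta(\log n)$ independent copies in parallel.

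First I would instantiate the field as $\F = \Z_p$ for a prime $p$ chosen uniformly at random from $[n^c, 2n^c]$ for a suitably large constant $c$ (e.g.\ $c = 3$). Every element of $\F$ then fits in $O(1)$ machine words, so one addition or multiplication in $\F$ costs $O(1)$ time in the Word-RAM model instead of the $O(h)$ factor incurred in the deterministic proof of \Cref{thm:overview:low_hop}. The algorithms of \Cref{lem:algebraic:submatrix} and \Cref{cor:algebraic:polynomial} are used as black boxes: they still maintain $(\mI - X\mA)^{-1}_{S,T} \in \bigl(\F[X]/\langle X^{h+1}\rangle\bigr)^{S\times T}$ under entry updates to $\mA$ and set updates to $S,T$, and by \Cref{cor:algebraic:polynomial} the number of field operations per update is $O((n^{\omega(1,1,\mu)-\mu} + n^{\omega(1,\mu,\nu)-\nu} + n^{\mu+\nu} + |S\times T|)\, h \log h)$. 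Multiplying by $O(1)$ instead of $O(h)$ per field operation yields the $\tilde O(h)$ bound in the statement; the same argument applied to the $O(n^\omega h \log h)$ field-operation preprocessing of \Cref{cor:algebraic:polynomial} yields the $\tilde O(n^\omega h)$ preprocessing.

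Second, for correctness I would reproduce the interpretation used in the proof of \Cref{thm:overview:low_hop}: the coefficient of $X^k$ in $(\mI - X\mA)^{-1}_{u,v}$ equals, over the integers, the number of length-$k$ walks from $u$ to $v$, which is at most $n^k \le n^h$. The distance $d_G(u,v)$ is the smallest $k\le h$ whose coefficient is nonzero over $\Z$. When we reduce mod $p$, a coefficient can only become a \emph{false zero}, never a false nonzero, so the smallest index with nonzero coefficient in $\Z_p$ is always an overestimate of the true distance. It therefore suffices to show that for each fixed pair the coefficient at index $d_G(u,v)$ survives the reduction with constant probability, and then run $\Theta(\log n)$ independent copies and return, for every queried pair, the minimum of the distances reported across copies. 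A coefficient of absolute value at most $n^h$ has at most $O(h \log n)$ distinct prime factors, while the number of primes in $[n^c, 2n^c]$ is $\Theta(n^c / \log n)$ by the prime number theorem, so a random such prime divides any fixed coefficient with probability $O(h \log^2 n / n^c) = o(1)$ for our range of $h$ and $c$ large enough. Independence across the $\Theta(\log n)$ copies drives the per-pair failure probability below $n^{-C}$ for any constant $C$, and a union bound over the at most $n^2$ queried pairs gives correctness for the entire reported $S\times T$ distance matrix with high probability.

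The only mildly delicate point is the probability bound just sketched, since $h$ is a free parameter and one must be sure the chosen prime range is large enough relative to $h$. This is not a genuine obstacle because $h \le n$ in all regimes of interest and $c$ may be chosen as any sufficiently large constant; everything else is a direct consequence of reusing the structural machinery already established in \Cref{sec:algebraic:submatrix} together with the standard Word-RAM cost model for arithmetic on $O(\log n)$-bit integers.
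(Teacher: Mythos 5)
Your proof is correct and follows exactly the route the paper sketches in the remark preceding the corollary: pick a random prime $p$ of bitlength $O(\log n)$, observe that arithmetic over $\Z_p$ then costs $O(1)$ per operation in the Word-RAM model (saving the factor $h$ present in the deterministic Theorem~\ref{thm:overview:low_hop}), note that reduction mod $p$ can only create false zeros so reported distances only overestimate, and run $\Theta(\log n)$ independent copies so that the per-pair failure probability becomes super-polynomially small and a union bound over pairs and updates yields high-probability correctness. You simply spell out the prime-counting and prime-factor estimates that the paper leaves implicit; the argument is the same.
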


\section*{Acknowledgement}
Jan van den Brand is funded by ONR BRC grant N00014-18-1-2562 and by the Simons Institute for the Theory of Computing through a Simons-Berkeley Postdoctoral Fellowship. Sebastian Forster and Yasamin Nazari are supported by the Austrian Science Fund (FWF): P 32863-N. This project has received funding from the European Research Council (ERC) under the European Union's Horizon 2020 research and innovation programme (grant agreement No 947702).

\printbibliography[heading=bibintoc]

\appendix
\section{Randomized Algorithm for Diameter Approximation and APSP Distance Oracles}\label{app:randomized}
\label{sec:appendix:randomized}

In this section, we sketch two other implications of our techniques, namely in diameter approximation and APSP with subquadratic update time and sublinear query time, both of which were studied in \cite{BrandN19}. Unlike all of our other results these applications require randomness. % For diameter approximation we obtain a better bound if we assume an oblivious adversary.

We utilize techniques of \cite{BrandN19}, but get improved bounds by incorporating our emulator/MSSP results in their algorithms. 
%In particular, we use the following lemma:

\subsection{Diameter Approximation}
As discussed in Section \ref{sec:results}, we can maintain a (nearly) $(3/2+ \epsilon)$-approximation to the diameter using the algorithm of \cite{RV13}. More formally,

\begin{corollary}\label{cor:diam}
Given an unweighted graph $G=(V,E)$ with diameter $D$, and $0< \epsilon $, we can maintain an estimate $\hat{D}$ such that\footnote{Note that the term $1/3$ is only relevant for graphs with very small diameter $\leq 2$.}:
    \[ (2/3-\epsilon) D - 1/3 \le \hat{D} \le (1+\epsilon) D\]
with high probability against an adaptive adversary with the following guarantees:
\begin{itemize}[nosep]
    \item Pre-processing time of $ O(n^{\omega}) \cdot (\frac{1}{\epsilon})^{O(1)}$.
    %\item $O(n^{1.529})\cdot O(\tfrac{1}{\epsilon})^{\sqrt{\log_{1/\epsilon} n}} $ that holds against an oblivious adversary.
    \item $O((n^{\omega(1,1,\mu)-\mu}+n^{\omega(1, \mu, 0.5)}) (\frac{1}{\epsilon})^{O(k)} +n^{1.5+ \frac{1}{k}})$ worst-case update time for any $k$. For current $\omega$ this is $O(n^{1.596})\cdot (\frac{1}{\epsilon})^{O(1)}$.
\end{itemize}
\end{corollary}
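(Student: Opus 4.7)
The plan is to port the diameter framework of Roditty and Vassilevska Williams~\cite{RV13}, as previously dynamized by \cite{BrandN19}, into our setting by plugging in the sparse emulator of \Cref{lem:sparse_det_emulator} and the MSSP machinery underlying \Cref{thm:MSSP}. Recall the static RV13 scheme: (i) pick an arbitrary seed $s$ and compute $d_G(s,\cdot)$; (ii) let $w$ be the node furthest from $s$; (iii) let $B$ consist of the $\lceil\sqrt{n}\rceil$ nodes closest to $w$; (iv) sample a uniformly random set $S\subseteq V$ of size $\Theta(\sqrt{n}\log n)$; and (v) output the maximum $uv$-distance over all $u\in\{s,w\}\cup B\cup S$ and $v\in V$. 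This yields a $(3/2)$-approximation with high probability, and substituting $(1+\epsilon)$-approximate distances for exact ones degrades the ratio to $(3/2+\epsilon)$, with the $-1/3$ term in the statement absorbing the rounding corner cases at diameter $\le 2$, by the analysis carried out in~\cite{BrandN19}.

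To turn this into a dynamic algorithm I would first maintain a $(1+\epsilon,\beta)$-emulator $H$ of $G$ with $\beta=O(1/\epsilon)^{k}$ and $|E(H)|=\tilde O(n^{1+1/k})$ via \Cref{lem:sparse_det_emulator}, and second maintain $h$-bounded pairwise distances between the set $T:=\{s,w\}\cup B\cup S$ of size $\tilde O(\sqrt{n})$ and all of $V$ via \Cref{thm:overview:low_hop} with $h=\Theta(\beta)$; this instantiates the algebraic structure with $\nu=1/2$, leaving $\mu$ free to optimize the cost $n^{\omega(1,1,\mu)-\mu}+n^{\omega(1,\mu,1/2)}$. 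After each edge update, the estimate for a pair $(u,v)\in T\times V$ is the minimum of the $h$-bounded distance returned by the algebraic structure (which is exact whenever $d_G(u,v)\le h$) and $d_H(u,v)$, the latter obtained by running Dijkstra from each $u\in T$ on $H$ in total time $\tilde O(|T|\cdot|E(H)|)=\tilde O(n^{1.5+1/k})$. Taking the maximum over all such pair estimates yields $\hat D$, and the stretch analysis of \cite{BrandN19} goes through verbatim since we have a $(1+\epsilon)$-approximation of MSSP from each source in $T$.

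Adding the three contributions gives the claimed worst-case update time $O((n^{\omega(1,1,\mu)-\mu}+n^{\omega(1,\mu,1/2)})(1/\epsilon)^{O(k)}+n^{1.5+1/k})$, which for current $\omega$ and constant $k$ optimizes to $O(n^{1.596})\cdot(1/\epsilon)^{O(1)}$. Randomness enters solely through the sample $S$, and the adaptive-adversary guarantee is preserved exactly as in~\cite{BrandN19}: resampling $S$ with fresh randomness once per phase of $\Theta(\sqrt n)$ updates and running $O(\log n)$ independent parallel copies boosts the success probability to high probability while preventing the adversary from exploiting the random choices between resamples. The main technical obstacle is ensuring that the sets $B$ and $S$ fed to \Cref{thm:overview:low_hop} are slowly changing so that the set-update cost is absorbed by the other terms: $S$ is entirely under our control and can be refreshed with $O(1)$ additions/deletions per update, but $B$ is defined implicitly by the current distances from $w$ and can in principle jump substantially; I would handle this by redefining $B$ as a threshold ball (all nodes within the $\lceil\sqrt n\rceil$-th smallest distance from $w$, breaking ties lexicographically) and rebuilding it in the background with an amortization-to-worst-case conversion analogous to the one used in \Cref{lem:hitting-set}.
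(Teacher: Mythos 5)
The paper's proof deliberately does \emph{not} use \Cref{thm:overview:low_hop} for this result. It instead relies on the older batch-query data structure of Sankowski, stated as Lemma~\ref{lem:previous_algebraic}, and the paper explicitly flags why: ``even if we kept the set $S$ fixed, the set $W$ in the above algorithms changes in each update. Hence we need to use Lemma~\ref{lem:previous_algebraic} instead of \Cref{lem:overview:pairwise}.'' Your plan to feed $T=\{s,w\}\cup B\cup S$ into \Cref{thm:overview:low_hop} runs into exactly this wall, and your proposed workaround does not resolve it. The threshold ball $B$ is anchored at $w$, but $w$ is itself recomputed each round as the current extremal node, so after a single edge update $w$ can move, and with it the entire ball $B$ can be replaced. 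There is no bounded-recourse structure here to amortize against, unlike the hitting set in Lemma~\ref{lem:hitting-set} where we control which elements are swapped. If $B$ turns over by $\Theta(\sqrt n)$ vertices per update, you need $\Theta(\sqrt n)$ set-updates to \Cref{thm:overview:low_hop} each costing at least the $|S\times T|=n^{1.5}$ term, giving a per-update cost around $n^2$, not $n^{1.596}$.

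This is also visible in the complexity you would obtain: with \Cref{thm:overview:low_hop} at $\nu=1/2$ the relevant term is $n^{\omega(1,\mu,1/2)-1/2}$, whereas Corollary~\ref{cor:diam} states $n^{\omega(1,\mu,0.5)}$ (no $-1/2$). That exponent is precisely the batch-query cost $O(n^{\omega(s,\mu,t)}h\log h)$ from Lemma~\ref{lem:previous_algebraic} with $|W|=n^{0.5}$ and $T=V$. The correct route, as in the paper, is to maintain the sparse emulator dynamically and maintain the underlying algebraic structure under edge updates only, then after each update \emph{freshly} recompute $w$ and $W$ and issue a batch query for $W\times V$ (and similarly for $S\times V$ and $\{w\}\times V$). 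This removes any need for the query sets to be slowly changing and yields the stated $O(n^{1.596})\cdot(1/\epsilon)^{O(1)}$ bound after balancing against the $\tilde O(n^{1.5+1/k})$ cost of running MSSP on the emulator.
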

%\diam*

More specifically, we use a variant used in dynamic settings by \cite{BrandN19}\footnote{In \cite{BrandNS19} use $(1+\epsilon)$-MSSP instead of exact MSSP, and obtain the same approximation guarantee. We also use $(1+\epsilon)$-MSSP in our algorithm.}.
One important component of this algorithm maintaining distances from $\tilde{O}(\sqrt{n})$ sources. That is why our improved $(1+\epsilon)$-MSSP bounds based on sparse emulators lead to better bounds for diameter approximation. For completeness we sketch the algorithm here. After each update we perform the following.

\begin{enumerate}
    \item Maintain a sparse $(1+\frac{\epsilon}{6}, \beta)$-emulator $H$ of $G$ with $\beta= (\frac{1}{\epsilon})^{k}$ and size $\tilde{O}(n^{1+\frac{1}{k}})$. 
    \item Sample a set $S \subseteq V$ of size $\tilde{O}(\sqrt{n})$ uniformly at random.

    \item Compute $(1+\epsilon)$-approximate distance estimates $\hat{d}(s,v)$ for all $s \in S, v \in V$ by (i) querying $\Theta(\beta / \epsilon)$-bounded distances in $S \times V$ using the algebraic algorithm of Lemma~\ref{lem:previous_algebraic}, (ii) computing $d_H (s,v)$ for all $s \in S, v \in V$ on the emulator~$H$ and (iii) taking the minimum of both estimates for each pair in $S \times V $. \label{alg:diam:S_sssp}

    \item Let $w \in V$ be the node with the largest distance estimate from $S$ based on the estimate computed in Step~\ref{alg:diam:S_sssp}, i.e.~ $\min_{s \in S} \hat{d}(w,s) \geq \min_{s \in S} \tilde{d}(u,s)$ for all $ u \in V $.
    \item Compute $(1+\epsilon)$-approximate distance estimates $\hat{d}(w,v)$ for all $v \in V$ by (i) querying $\Theta(\beta / \epsilon)$-bounded distances in $\{w\} \times V$ using the algebraic algorithm of Lemma~\ref{lem:previous_algebraic}, (ii) computing $d_H (w,v)$ for all $ v \in V$ on the emulator~$H$ and (iii) taking the minimum of both estimates for each pair in $\{w\} \times V $. \label{alg:diam:w_sssp}

    \item Let $W \subset V$ be the set of $\sqrt{n}$ closest nodes to $w$ based on the estimates obtained in Step~\ref{alg:diam:w_sssp}, i.e.~for all $u \in W, v \in V \setminus W$ we have $\hat{d}(w,u) \leq \hat{d}(w,v)$. Ties can be broken arbitrarily. Query $\Theta(\beta / \epsilon)$-bounded distances in $W \times V$ using the algebraic algorithm of Lemma \ref{lem:previous_algebraic}.  \label{alg:diam:W} 
    \item Compute $(1+\epsilon)$-approximate distance estimates $\hat{d}(s,v)$ for all $s \in W, v \in V$ by (i) querying $\Theta(\beta / \epsilon)$-bounded distances in $W \times V$ using the algebraic algorithm of Lemma~\ref{lem:previous_algebraic}, (ii) computing $d_H (s,v)$ for all $s \in W, v \in V$ on the emulator~$H$ and (iii) taking the minimum of both estimates for each pair in $W \times V $.

    \item Set the diameter $\hat{D}:= \max_{v \in V, u \in {S \cup W}} \{\hat{d}(u,v)\}$, i.e.~the largest estimate obtained so far.
\end{enumerate}

Note that even if we kept the set $S$ fixed, the set $W$ in the above algorithms changes in each update. Hence we need to use Lemma \ref{lem:previous_algebraic} instead of \Cref{lem:overview:pairwise}.

\begin{lemma}[{\cite{Sankowski05}}]
    \label{lem:previous_algebraic}
    There exist randomized dynamic algorithms that, after preprocessing a given unweighted directed graph $G$ and a parameter $h \in \mathbb{N}$ in $O(n^\omega h \log h)$ time, supports edge-updates to $G$ in $O((n^{\omega(1,1,\mu)-\mu}+n^{1+\mu})h \log h)$ time.
    
    The algorithm supports queries that return for any given $S,T\subseteq V$ the $S\times T$ pairwise $h$-bounded distances in $O(n^{\omega(s, \mu, t)}h\log h)$ time where $s,t$ such that $|S|=n^s$, $|T| = n^t$.\footnote{\cite{Sankowski05} considers querying entries one by one. The modification for batch-queries $S\times T$ stems from \cite{BrandNS19}.}
\end{lemma}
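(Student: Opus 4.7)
The plan is to recover this classical Sankowski result as a lightweight version of the dynamic matrix inverse machinery developed in \Cref{sec:algebraic}. The first step is the same reduction used in \Cref{thm:overview:low_hop} and \Cref{cor:algebraic:polynomial}: maintain the inverse of $\mM := \mI - X\mA \in (\F[X]/\langle X^{h+1}\rangle)^{n\times n}$ under entry updates to $\mA$, reading off $d_G(u,v)$ as the smallest degree with nonzero coefficient in $(\mM^{-1})_{u,v}$. To keep field arithmetic in $O(1)$ Word-RAM operations, take $\F = \Z_p$ for a random prime $p$ of bitlength $\Theta(\log n)$; a walk count of size $\le n^h$ is zero mod $p$ with only inverse-polynomial probability, and $O(\log n)$ independent parallel copies of the data structure amplify to success with high probability across the entire update sequence. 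Polynomial arithmetic over $\F[X]/\langle X^{h+1}\rangle$ via FFT then costs $O(h\log h)$, which is the source of the $h\log h$ factor in every bound.

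The second step is the two-layer lazy-update scheme. Write $\mM = \mM' + \mU\mV^\top$, where $\mM'$ is the state of $\mM$ at the most recent rebuild and $\mU,\mV$ each have at most $n^\mu$ columns with a single nonzero entry per column. Maintain explicitly the full $n\times n$ matrix $\mM'^{-1}$, the auxiliary matrices $\mP := \mM'^{-1}\mU$ ($n\times n^\mu$) and $\mQ := \mV^\top\mM'^{-1}$ ($n^\mu\times n$), and the small inverse $\mN := (\mI + \mV^\top\mP)^{-1}$ ($n^\mu\times n^\mu$). After each entry update to $\mA$, one column is appended to $\mU,\mV$: the new column of $\mP$ is a scaled column of $\mM'^{-1}$ and the new row of $\mQ$ is a scaled row of $\mM'^{-1}$, both readable in $O(n)$ operations, while $\mN$ grows by one row and one column and is updated via \Cref{lem:woodbury} applied within the $n^\mu$-dimensional space at cost $O(n^{2\mu}) = O(n^{1+\mu})$ (using $\mu \le 1$). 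Every $n^\mu$ updates, reset $\mM'^{-1} \leftarrow \mM^{-1}$ by evaluating
\[
\mM^{-1} = \mM'^{-1} - \mP\,\mN\,\mQ
\]
via rectangular matrix multiplication in $O(n^{\omega(1,1,\mu)})$ field operations, amortized to $O(n^{\omega(1,1,\mu)-\mu})$ per update; the standard deamortization of \cite[Theorem B.1]{BrandNS19} makes this worst-case. Multiplying by the $O(h\log h)$ ring-arithmetic factor yields the claimed update bound $O((n^{\omega(1,1,\mu)-\mu} + n^{1+\mu})h\log h)$.

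The third step is the batch query. Given $S,T\subseteq V$ with $|S|=n^s$ and $|T|=n^t$, apply
\[
(\mM^{-1})_{S,T} = \mM'^{-1}_{S,T} - \mP_{S,\cdot}\,\mN\,\mQ_{\cdot,T}.
\]
The submatrix $\mM'^{-1}_{S,T}$ is read directly from explicit storage in $O(n^{s+t})$ time, $\mP_{S,\cdot}$ and $\mQ_{\cdot,T}$ are harvested in $O(n^{s+\mu} + n^{\mu+t})$ time, and the triple product of rectangular dimensions $n^s\times n^\mu\times n^\mu\times n^t$ is computed in $O(n^{\omega(s,\mu,t)})$ field operations by choosing the favorable multiplication order; all lower-order terms are absorbed into $n^{\omega(s,\mu,t)}$. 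Another factor $O(h\log h)$ for polynomial arithmetic gives the stated query cost.

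The main obstacle, and the only technically delicate balancing, is showing that the per-update maintenance of $\mP$, $\mQ$, and $\mN$ fits within the $O(n^{1+\mu})$ budget while still supporting the batched triple product at query time; this is precisely the classical Sankowski tradeoff. It can in fact be read off from \Cref{lem:algebraic:entry_update} by specializing $\nu = 0$, since $\omega(1,\mu,0) = 1+\mu$ makes the term $n^{\omega(1,\mu,\nu)-\nu}$ collapse to $n^{1+\mu}$ and the batch query follows by performing the $\nu=0$ triple product over $|S|=n^s$, $|T|=n^t$ via rectangular matrix multiplication rather than entry-by-entry. A secondary subtlety is handling the randomization across updates; since the overall failure probability is controlled by a union bound over a polynomial number of steps, the $\Theta(\log n)$-bit prime together with $O(\log n)$ parallel copies is sufficient.
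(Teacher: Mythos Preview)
The paper does not prove this lemma; it is cited from \cite{Sankowski05} (with the batch-query modification attributed to \cite{BrandNS19}), so there is no ``paper's own proof'' to compare against. Your reconstruction via the polynomial-matrix reduction and a single-layer Woodbury scheme is the right idea and essentially how one derives the result.

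There is, however, a small mismatch between your maintenance scheme and the stated bounds. You maintain $\mP$, $\mN$, $\mQ$ separately, which indeed costs only $O(n + n^{2\mu})$ per update, but then the query requires the triple product $\mP_{S,\cdot}\,\mN\,\mQ_{\cdot,T}$ of dimensions $n^s\times n^\mu$, $n^\mu\times n^\mu$, $n^\mu\times n^t$. Whichever order you multiply in, you incur an extra $O(n^{\omega(s,\mu,\mu)})$ or $O(n^{\omega(\mu,\mu,t)})$ term, which is \emph{not} dominated by $O(n^{\omega(s,\mu,t)})$ unless $\max(s,t)\ge\mu$. To obtain the clean $O(n^{\omega(s,\mu,t)})$ query bound in full generality, one instead maintains the inverse in the form $\mM^{-1}=\mM'^{-1}+\mU'\mV'^\top$ with $\mU',\mV'$ explicit (absorbing the small inverse into one of the factors, exactly as in the proof of \Cref{lem:batch_update}); then the query is a single $n^s\times n^\mu$ by $n^\mu\times n^t$ product. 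The price is that appending a new column to $\mU'$ requires one full column of the \emph{current implicit} $\mM^{-1}$, i.e.\ a vector times the $n\times n^\mu$ matrix $\mV'^\top$, which costs $O(n^{1+\mu})$---and this, not your $n^{2\mu}$ bound on maintaining $\mN$, is the true origin of the $n^{1+\mu}$ term in the lemma. The cleanest internal reference is therefore \Cref{lem:batch_update} specialized to $\nu=0$ (giving update cost $O(n^{\omega(1,\mu,0)}+n^{\omega(1,1,\mu)-\mu})=O(n^{1+\mu}+n^{\omega(1,1,\mu)-\mu})$ and the desired batch query), rather than \Cref{lem:algebraic:entry_update}.
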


\begin{proof}[Proof sketch of \Cref{cor:diam}.]
The correctness (stretch) analysis follows from the following claim proved by \cite{BrandN19}: assume that we have a dynamic algorithm that returns $(1+\epsilon)$-approximate distances between pairs in $T \times V$ for \textit{any} set $T$ of size $O(\sqrt{n})$ chosen at query time. Then we can use this to compute an estimate $\hat{D}$ for the diameter satisfying Corollary \ref{cor:diam}. Using similar arguments as our MSSP algorithm in Section \ref{sec:MSSP}, the estimates~$ \hat{d} (\cdot, \cdot) $ used by the above algorithm satisfy this condition.

The update time depends on 1) the update time of the algebraic data structure for maintaining $\Theta(\beta / \epsilon)$-bounded distances from (three) sets of size $O(\sqrt{n})$,
which takes time $\tilde{O}((n^{\omega(1,1,\mu)-\mu}+n^{\omega(1, \mu, 0.5)}) \beta / \epsilon)$ via \Cref{lem:previous_algebraic}. Note that this complexity is at best $O(n^{1.5}\beta / \epsilon)$ when $\omega=2$.
2) Running MSSP statically on the emulator, which takes time $\tilde{O}(n^{\frac{3}{2}+ \frac{1}{k}})$. 
%Note that even if we kept the set $S$ fixed, the set $W$ in the above algorithms changes in each update. Hence we need to use Lemma \ref{lem:overview:previous_algebraic}. Based on the same parameter setting as in Section \ref{sec:overview:algebraic}, this leads to a worst-case update time of $\tilde{O}(n^{1.596} \cdot \beta^2 / \epsilon^2)$.

To balance out the terms, we can set $k$ to be a large enough constant and get $\beta=(\frac{1}{\epsilon})^{O(1)}$. Hence by setting $k\ge11$, we get an update time of $O((n^{\omega(1,1,\mu)-\mu}+n^{\omega(1, \mu, 0.5)}) (\frac{1}{\epsilon})^{k+1} +n^{1.5+ \frac{1}{k}}) = O (n^{1.596}) \cdot (\frac{1}{\epsilon})^{O(1)}$ .
\end{proof}

\subsection{APSP with Subquadratic Update Time and Sublinear Query Time}\label{app:subquad_APSP}

The algorithm for maintaining $(1+\epsilon)$-APSP in subquadratic update-time with sublinear query is based on an techniques used in \cite{RZ12, BrandN19}. We modify the algorithm of \cite{BrandN19} by incorporating our new dynamic emulators which allow us to improve upon the internally used dynamic $(1+\epsilon)$-MSSP algorithm.
This way we obtain the following corollary.

\begin{corollary}\label{cor:subquad_APSP}
Given an unweighted, undirected graph $G=(V,E)$, and $0<\epsilon<1$, we can maintain a data structure that supports all-pair $(1+\epsilon)$-approximate distance queries with following guarantees:
\begin{itemize}[nosep]
    \item Preprocessing time: $ O (n^{2.585}) \cdot O (\tfrac{1}{\epsilon})^{\sqrt{2\log_{1/\epsilon} n}} $.
    \item Worst-case update time: $ O (n^{1.788}) \cdot O (\tfrac{1}{\epsilon})^{\sqrt{2\log_{1/\epsilon} n}} $.
    \item Query time: $ O (n^{0.45} \epsilon^{-2}) $.
\end{itemize}
Moreover, these bounds hold with high probability against an adaptive adversary.
\end{corollary}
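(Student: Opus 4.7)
The plan is to adapt the $(1+\epsilon)$-approximate APSP distance oracle framework of \cite{BrandN19}, itself based on path-hitting ideas of Roditty and Zwick \cite{RZ12}, and to replace its internal dynamic $(1+\epsilon)$-MSSP subroutine with our faster emulator-based MSSP from \Cref{thm:MSSP}. Since that framework uses MSSP essentially in a black-box manner, such a drop-in substitution should yield a strict quantitative improvement in update time while preserving the query complexity and the adaptive-adversary robustness.

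First, I would set up the three standard layers. A randomly sampled hitting set $S \subseteq V$ of size $\tilde{O}(n^\sigma)$ is maintained so that, with high probability, every shortest path of length at least $n^{1-\sigma}$ contains a node of $S$. Next, the algebraic bounded-distance structure of \Cref{thm:randomized:low_hop} with hop parameter $h = \Theta(\beta \epsilon^{-1})$, for $\beta = O(1/\epsilon)^{\sqrt{2\log_{1/\epsilon} n}}$, is used to maintain $h$-bounded distances between $S$ and $V$. Finally, our sparse emulator from \Cref{lem:sparse_det_emulator} combined with this bounded-distance layer provides the $(1+\epsilon)$-MSSP estimates from $S$, exactly as in the proof of \Cref{thm:MSSP}.

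The query routine for a pair $(u,v)$ then returns the better of the $h$-bounded direct $uv$-distance and $\min_{s}(\hat{d}(u,s) + \hat{d}(s,v))$ taken over the $\tilde{O}(n^{1-\sigma}\epsilon^{-2})$ candidate hitting-set nodes lying near $u$. The additive slack $\beta$ coming from the emulator becomes a $(1+\epsilon)$ multiplicative factor on long pairs, which is why $h$ is chosen as a constant multiple of $\beta/\epsilon$. Balancing $\sigma$ so that the query bound matches $O(n^{0.45}\epsilon^{-2})$ forces $\sigma \approx 0.55$, and the resulting update cost is the larger of our improved MSSP bound $(O(n^{1.529}) + O(|S|\cdot n)) \cdot O(1/\epsilon)^{\sqrt{2\log_{1/\epsilon} n}}$ and the auxiliary bounded-neighbourhood bookkeeping inherited from \cite{BrandN19}; this should yield the claimed $O(n^{1.788}) \cdot O(1/\epsilon)^{\sqrt{2\log_{1/\epsilon} n}}$ update time. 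The preprocessing bound follows from the $O(n^\omega)$-time initialization of the algebraic structure together with the emulator construction.

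The hard part will be verifying correctness against an adaptive adversary, given that the hitting set $S$ is randomized. Here I would use exactly the argument of \cite{BrandN19}: the emulator layer is fully deterministic by \Cref{lem:sparse_det_emulator}, while the algebraic layer of \Cref{thm:randomized:low_hop} uses randomness only to control bit-length and succeeds with high probability against adaptive updates. Hence the only output-dependent randomness is the sample $S$, which can be redrawn whenever necessary at cost absorbed by the MSSP update bound; because the adversary has no view into the fresh sample, it cannot steer the graph into a configuration where a shortest path of length at least $n^{1-\sigma}$ misses $S$ with non-negligible probability, so the hitting property and hence the approximation guarantee are preserved throughout the sequence of updates.
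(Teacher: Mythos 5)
Your high-level plan — path hitting set plus emulator-based multi-source distances, with a bounded-hop algebraic structure handling short pairs — is the right skeleton, and it is essentially the skeleton of the paper's proof. But the paper does \emph{not} and \emph{cannot} plug in \Cref{thm:MSSP} as a black box, and this is the main gap in your proposal. For robustness against an adaptive adversary, the hitting set $S$ must be resampled \emph{fresh after every update}; otherwise the adversary can infer $S$ from the returned distance estimates and construct shortest paths of length $\ge h$ that avoid $S$. But \Cref{thm:MSSP} is stated for a \emph{fixed} set of sources, and the underlying algebraic structure (\Cref{thm:overview:low_hop}) only supports \emph{incremental} set-updates of $O(1)$ elements at a time — resampling all of $S$ would require $|S|$ set-updates, each as expensive as an edge-update, which blows the budget. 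The paper's actual construction decouples the pieces precisely to avoid this: it maintains a deterministic $(1+\epsilon/6,\beta)$-emulator $H$ dynamically (\Cref{lem:sparse_det_emulator}), which is oblivious to $S$; after each update it samples a fresh $S$ and computes exact MSSP from $S$ on $H$ \emph{statically}; and it covers the additive slack $\beta$ by \emph{batch-querying} Sankowski's structure (\Cref{lem:previous_algebraic}) — which, unlike our new structure, handles arbitrary query sets $S \times V$ chosen at query time — for $h_2 = \Theta(\beta/\epsilon)$-bounded $S \times V$ distances.

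The second gap is the query protocol and the source of the $n^{1.788}$ update bound. Your proposal minimizes over "candidate hitting-set nodes lying near $u$" at query time, and asserts there are only $\tilde O(n^{1-\sigma}\epsilon^{-2})$ of them; neither the enumeration mechanism nor the count is justified. The paper instead precomputes, during the update phase, a single closest center $p(u)$ for every node $u$ (via the static MSSP on $H$), so the query only ever consults \emph{one} center. What the query \emph{does} need is a single lookup of the $h_1$-bounded $uv$-distance for $h_1 = \Theta(h/\epsilon)$, which is served by a \emph{separate} algebraic structure $\mathcal{D}_1$ (\Cref{lem:BN_algebraic}, from \cite{BrandN19}) distinct from the $h_2$-bounded structure $\mathcal{D}_2$. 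It is $\mathcal{D}_1$'s update cost $\tilde O(n^{\omega(1,\mu_1+\log_n h,1)-\mu_1}\epsilon^{-2})$, balanced against the static MSSP cost $\tilde O(n^{2+1/k}/h)$ under the constraint $n^{\mu_1} h \le n^{0.45}$ on the query time, that gives $n^{1.788}$ with $\mu_1 \approx 0.2374$ and $h = n^{0.2125}$ (so $|S| = \tilde O(n^{0.7875})$, not $\tilde O(n^{0.55})$ as you posit). Your proposal folds this bottleneck into unspecified "auxiliary bookkeeping inherited from \cite{BrandN19}," which is precisely the part that needs to be exhibited for the complexity claim to hold.
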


At a high-level the algorithm uses a well-known \textit{path hitting set}\footnote{Note that path hitting sets should not be confused with the neighborhood (of heavy nodes) hitting sets that we used in Section \ref{sec:hitting_set}.}~\cite{UllmanY91} argument as follows:
Consider any pair of nodes $u,v \in V$. If one samples $\Theta(\tfrac{n}{h} \log n)$ nodes $S\subseteq V$ uniformly at random, then w.h.p.~every shortest path of length $ h $ contains a sampled node. 
We maintain $O(\tfrac{h}{\epsilon})$-bounded distances using algebraic distance data structures.
Additionally we maintain $(1+\epsilon)$-approximate MSSP with the nodes in $ S $ as the sources using our sparse emulator.
In unweighted, undirected graphs, this information suffices to retrieve $ (1+\epsilon) $-approximate distance estimates.
Our final algorithm is a bit more involved since we need to consider two types of bounded distances: 1) We need to consider $\Theta (h/\epsilon)$-bounded distances for the parameter $h$ needed for the path hitting component of our data structure. 2) We need $\Theta (\beta/\epsilon) $-bounded distances for turning the additive approximation of our emulator into a multiplicative one. This also means we need to consider more cases in our correctness analysis.

The $\Theta(h/\epsilon)$-bounded distances will be maintained using the following dynamic algorithm.
\begin{lemma}[{\cite{BrandN19}}]\label{lem:BN_algebraic}
Given an undirected and unweighted graph $G=(V,E)$, for any $0<\tau, 0 < \epsilon, \mu_1,\mu_2 \leq 1$, we can maintain a randomized Monte-Carlo data structure against an adaptive adversary with worst-case update time of $O((n^{\omega(1,\tau+\mu_1,1)-\mu_1} \epsilon^{-1} + n^{\omega(1, \mu_2, 1)+\tau-\mu_2} + n^{1+\mu_2+\tau}) \log n)$ that can query $(1+\epsilon)$-approximate $n^\tau$-bounded pairwise distances for any set of nodes $S,T \subseteq V$ in $O(n^{\omega(\log_n |S|, \mu_1+\tau, \log_n |T|)} \epsilon^{-1} \log n)$ time.
The preprocessing time is $O(n^{\omega+\tau} \log n)$.
\end{lemma}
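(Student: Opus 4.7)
The plan is to derive \Cref{lem:BN_algebraic} by adapting the two-level dynamic matrix inverse framework of \cite{BrandNS19} (the same framework underlying \Cref{lem:algebraic:entry_update}) to the setting of $(1+\epsilon)$-approximate $h$-bounded distances with $h = n^\tau$, using Zwick-style scaling to reduce the effective polynomial degree at the expensive layer from $n^\tau$ to $O(1/\epsilon)$. First, I would set up the algebraic reduction: fix a prime $p = \poly(n)$, assign independent uniform random field elements $r_{uv} \in \F_p$ to each edge, form $\mA$ with $\mA_{u,v} = r_{uv}$ on edges, and work with $(\mI - X\mA)^{-1}$ in $\F_p[X]/\langle X^{n^\tau+1}\rangle$. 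By the Schwartz--Zippel lemma the lowest-degree nonzero coefficient of $((\mI - X\mA)^{-1})_{u,v}$ equals $d_G(u,v)$ for all $d_G(u,v) \le n^\tau$ with constant probability; the $O(\log n)$ factor in the stated complexities arises from running $\Theta(\log n)$ independent copies with fresh randomness and majority vote, which also handles the adaptive adversary (randomness is refreshed after each outer rebuild so the adversary cannot accumulate enough query responses to pin down the primes and coefficients).

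Second, I would set up the two-level Sherman--Morrison--Woodbury representation $\mM = \mM_0 + \mU_1 \mV_1^\top + \mU_2 \mV_2^\top$, exactly as in the proof of \Cref{lem:algebraic:entry_update}, with two parameters: an inner rebuild period of $n^{\mu_2}$ updates (after which $\mU_2 \mV_2^\top$ is absorbed into $\mU_1 \mV_1^\top$) and an outer rebuild period of $n^{\mu_1}$ updates (after which $\mU_1 \mV_1^\top$ is absorbed into $\mM_0$). The inner rebuild is performed \emph{exactly} over the polynomial ring of degree $n^\tau$: it costs an $n \times n^{\mu_2} \times n$ matrix product whose entries are polynomials of degree $n^\tau$, so $O(n^{\omega(1,\mu_2,1)} \cdot n^\tau)$ ring operations, amortized over $n^{\mu_2}$ updates to give the second term $n^{\omega(1,\mu_2,1)+\tau-\mu_2}$. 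Rank-$1$ per-update maintenance of $\mU_2,\mV_2$ over degree-$n^\tau$ polynomials contributes the trivial $n^{1+\mu_2+\tau}$ term.

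Third, for the outer rebuild I would apply the approximation trick. Rather than maintain the full degree-$n^\tau$ polynomial matrix $\mM_0^{-1}$, maintain it only to the ``resolution'' needed for $(1+\epsilon)$-approximate distances: using $O(\log_{1+\epsilon} n^\tau)$ geometric distance scales $h_i = (1+\epsilon)^i$ and keeping, per scale, an $O(1/\epsilon)$-degree truncation of the rescaled inverse (or equivalently using Zwick's technique of compressing polynomial coefficients so that each arithmetic operation touches only $O(1/\epsilon)$ coefficients, with a $\log n$ summed over scales absorbed into the global $\log n$). The outer rebuild then costs an $n \times n^{\tau+\mu_1} \times n$ matrix product over $O(1/\epsilon)$-degree polynomials, i.e.\ $O(n^{\omega(1,\tau+\mu_1,1)} \cdot \epsilon^{-1})$ ring operations, amortized over $n^{\mu_1}$ updates to give the first term $n^{\omega(1,\tau+\mu_1,1) - \mu_1} \epsilon^{-1}$. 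The standard amortized-to-worst-case conversion of \cite[Theorem~B.1]{BrandNS19} converts these amortized bounds to the worst-case update times claimed. For batch queries on $S \times T$, I would apply the Sherman--Morrison--Woodbury expansion to reduce the query to a rectangular matrix product of dimensions $|S| \times n^{\mu_1+\tau} \times |T|$ over the approximate (degree-$O(1/\epsilon)$) polynomial ring, yielding the stated $n^{\omega(\log_n |S|,\mu_1+\tau,\log_n |T|)} \epsilon^{-1} \log n$ query time.

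The main obstacle I foresee is the Zwick-style approximation layer: one must argue that only $O(1/\epsilon)$ ``significant coefficients'' of the length-$n^\tau$ polynomial need to be tracked at the outer layer while still recovering a $(1+\epsilon)$-approximation of $d_G(u,v)$ from the queried entry, and that these approximations compose correctly with the exact inner layer through the Sherman--Morrison--Woodbury sum. This requires picking the scales so that a single outer rebuild produces enough information to bracket every possible distance $\le n^\tau$ to within a $(1+\epsilon)$ factor, and checking that the Monte Carlo randomness (i.e.\ the Schwartz--Zippel identity-testing) still succeeds for every scale in the union bound across all $O(\epsilon^{-1} \log n)$ scales and $\Theta(\log n)$ repetitions — which is what ultimately justifies correctness against an adaptive adversary when combined with the rebuild-triggered randomness refresh.
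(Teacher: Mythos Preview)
The paper does not prove \Cref{lem:BN_algebraic}; it is stated with the citation \cite{BrandN19} and used as a black box in \Cref{app:subquad_APSP}. So there is no ``paper's own proof'' to compare against --- your proposal is really a sketch of how the cited result from \cite{BrandN19} is obtained, not of anything argued in this paper.

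That said, your sketch is broadly in the right spirit for what \cite{BrandN19} does: the two-level Sherman--Morrison--Woodbury representation with separate parameters $\mu_1,\mu_2$, the polynomial matrix $(\mI - X\mA)^{-1}$ over $\F_p[X]/\langle X^{h+1}\rangle$ with Schwartz--Zippel for correctness, and the idea that the $(1+\epsilon)$-approximation lets you collapse the $n^\tau$-degree polynomial cost to an $O(1/\epsilon)$ factor at the outer (expensive) layer via geometric distance scales. One point to be careful about: the $n^\tau$ factor in the outer-rebuild complexity $n^{\omega(1,\tau+\mu_1,1)-\mu_1}$ does not come from the polynomial degree (you have already reduced that to $O(1/\epsilon)$) but from the fact that a single edge update to $G$ changes $\Theta(n^\tau)$ entries of the scaled/compressed matrices across the $O(\epsilon^{-1}\log n)$ distance scales, so the rank of the outer-layer update accumulated over $n^{\mu_1}$ graph updates is $n^{\mu_1+\tau}$ rather than $n^{\mu_1}$. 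Your write-up conflates these two sources of $n^\tau$; getting the accounting right is exactly where the exponent $\omega(1,\tau+\mu_1,1)$ rather than $\omega(1,\mu_1,1)\cdot n^\tau$ comes from.
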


For current $\omega$, the best choice for minimizing the update time is roughly $\mu_2\approx0.529$, implying $O(n^{\omega(1, \mu_2, 1)+\tau-\mu_2} + n^{1+\mu_2+\tau}) = O(n^{1.529+\tau})$.

Let us continue the description of our algorithm.
The algorithm has a main parameter $ h $ and does the following.
At all times, it internally uses the following dynamic algorithms:
\begin{enumerate}
\item An instance of the dynamic algorithm of Lemma~\ref{lem:sparse_det_emulator} for maintaining a $(1 + \tfrac{\epsilon}{6}, \beta)$-emulator $H$, where $ \beta = O(\tfrac{1}{\epsilon})^k $, for $ k = \sqrt{\frac{\log_{1/\epsilon} n}{2}} $.
\item An instance $ \mathcal{D}_1 $ of the algebraic data structure of Lemma~\ref{lem:BN_algebraic} for maintaining $h_1$-bounded distances for $ h_1 = \tfrac{6 h}{\epsilon} $ (internally setting $ \tau = \log_n h_1 $), for some parameter $ h $ to be chosen later
\item An instance $ \mathcal{D}_2 $ of the algebraic data structure of Lemma~\ref{lem:previous_algebraic} for maintaining $h_2$-bounded distances for $ h_2 = \tfrac{6 \beta}{\epsilon}) $.
%\sebastian{FIX This lemma does not exist anymore}
\end{enumerate}

Additionally, after each update to the input graph our algorithm performs the following steps:
\begin{enumerate}
    \item Sample a set of $\Theta(\tfrac{n}{h} \log n)$ sources $S$ uniformly at random, which we call centers.
    \item Compute (exact) MSSP from all nodes in $S$ on $H$ statically.
    \item For every node $ u $, find the closest center among $ S $ in $ H $, denoted by $ p(u) $.
    \item Query the $h_2$-bounded distances $ d_G^{h_2} (x, u) $ from $ \mathcal{D}_2 $ for every $ x \in S $ and every $ v \in V $  (where $ d_G^{h_2} (u, v) = d_G (u, v) $ if $ d_G^{h_2} (u, v) < h_2 $ and $ d_G^{h_2} (u, v) = \infty $ otherwise).
\end{enumerate}

Each query for the approximate distance between two nodes $ u $ and $ v $ is now answered by performing the following steps:
\begin{enumerate}
\item Query the $h_1$-bounded distance $ d_G^{h_1} (u, v) $ from $ \mathcal{D}_1 $ (where $ d_G^{h_1} (u, v) = d_G (u, v) $ if $ d_G^{h_1} (u, v) < h $ and $ d_G^{h_1} (u, v) = \infty $ otherwise).
\item Set $ \hat{d} (u, v) = \min (d^{h_2}_G (p (u), u), d_H (p(u), u)) + \min (d^{h_2}_G (p (u), v), d_H (p(u), v)) $
\item Return $ \min (d_G^{h_1} (u, v), \hat{d} (u, v)) $
\end{enumerate}

We will now show that this algorithm has the desired guarantees.

%\subquadAPSP*

\begin{proof}[Proof of \Cref{cor:subquad_APSP}]
We first argue that the algorithm returns a $ (1 + \epsilon) $-approximation of the true distance $ d_G (u, v) $ at query time.
Clearly, the returned distance estimate never under-estimates the true distance.
If $ d_G (u, v) < h_1 $, then $ d_G^{h_1} (u, v) = d_G (u, v) $ provides the correct answer.

If $ d_G (u, v) \geq h_1 $, then consider the first $ \tfrac{\epsilon}{6} h_1 = h $ nodes on the shortest path from $ u $ to $ v $.
This set of nodes contains a center $ x \in S $ with high probability.
We thus know that
\begin{equation*}
    d_G (u, p(u)) \leq d_G (u, x) \leq \frac{\epsilon}{6} h_1 \leq \frac{\epsilon}{6} d_G (u, v) \, .
\end{equation*}
By the triangle inequality we now get
\begin{align*}
d_G (p(u), u) + d_G (p(u), v) &\leq d_G (p(u), u) + d_G (p(u), u) + d_G (u, v) \\
 &= d_G (u, v) + 2 d_G (p(u), u) \\
 &\leq \left(1 + \frac{\epsilon}{3}\right) d_G (u, v) \, .
\end{align*}

By the stretch guarantee of the emulator $ H $ we have $ d_H (p(u), u) \leq (1 + \tfrac{\epsilon}{6}) d_G (p(u), u) + \beta $ and $ d_H (p(u), v) \leq (1 + \tfrac{\epsilon}{6}) d_G (p(u), v) + \beta $.
If $ d_G (p(u), u) \geq \tfrac{6 \beta}{\epsilon} $, then $ d_H (p(u), u) \leq (1 + \tfrac{\epsilon}{3}) d_G (p(u), u) $, and if $ d_G (p(u), u) < \tfrac{6 \beta}{\epsilon} = h_2 $, then $ d^{h_2}_G (p (u), u) = d_G (p (u), u) $.

Thus, $ \min (d^{h_2}_G (p (u), u), d_H (p(u), u)) \leq (1 + \tfrac{\epsilon}{3}) d_G (p(u), u) $.
We can argue in the same manner that $ \min (d^{h_2}_G (p (u), v), d_H (p(u), v)) \leq (1 + \tfrac{\epsilon}{3}) d_G (p(u), v) $.
Overall, this gives us
\begin{align*}
\hat{d} (u, v) &\leq \left(1 + \frac{\epsilon}{3}\right) d_G (p(u), u) + \left(1 + \frac{\epsilon}{3}\right) d_G (p(u), v) \\
 &= \left(1 + \frac{\epsilon}{3}\right) (d_G (p(u), u) + d_G (p(u), v)) \\
 &\leq \left(1 + \frac{\epsilon}{3}\right)^2 d_G (u, v) \\
 &\leq (1 + \epsilon) d_G (u, v) \, .
\end{align*}

We proceed with the running time analysis.
Data structure $ \mathcal{D}_1 $ has, for any chosen $ \mu_ 1 $, an update time of $O((n^{\omega(1,\mu_1 + \log_n h_1,1)-\mu_1} \epsilon^{-1} + O(n^{1.529 + \log_n h_1}) \log n) = \tilde O (n^{\omega(1,\mu_1 + \log_n h,1)-\mu_1} \epsilon^{-2} + O(n^{1.529 + \log_n h} \epsilon^{-1})$.\footnote{Here we use that $ \omega (\alpha, \beta + \beta', \gamma) = O (\omega (\alpha, \beta, \gamma) n^{\beta'}) $. We will also use this bound later in the proof.}
Data structure $ \mathcal{D}_2 $ has, for any chosen $ \mu_2 $, an update time of $ O ((n^{\omega(1,1,\mu_2)-\mu_2} + n^{1+\mu_2}) h_2 \log h_2) = \tilde O ((n^{\omega(1,1,\mu_2)-\mu_2} + n^{1+\mu_2})) \cdot O(\tfrac{1}{\epsilon})^k $.\footnote{Note that these simplifications assume that $ \epsilon \geq \tfrac{1}{n} $ as a smaller value of $ \epsilon $ would allow for rounding to exact distances. We will also use this bound later in the proof.}

The update time of the algorithm maintaining the emulator $ H $ is $ O (n^{1.477}) $ (which is clearly dominated by the update time of $ \mathcal{D}_1 $) as in any case we can reduce $ \epsilon $ to a value that gives $ k \geq 8 $.
Sampling the set $ S $ takes time $ \tilde O (n) $.

Computing MSSP from $ S $ on $ H $ takes time $ \tilde O (|S| n^{1 + 1/k}) = \tilde O (\tfrac{n^{2 + 1/k}}{h}) $.
Computing the closest node among $ S $ for each node takes time $ O (n |S|) = \tilde O (\tfrac{n^2}{h}) $.

The time for querying the $ h_2 $-bounded $ S \times V $ distances from data structure $ \mathcal{D}_2 $ is bounded by $ O (n^{\omega (\log_n |S|, \mu_2, 1)} h_2 \log h_2) = \tilde O (n^{\omega (1 - \log_n h, \mu, 1)}) \cdot O(\tfrac{1}{\epsilon})^k $.
The running time of the remaining steps performed during an update is dominated by $ \tilde O (|S| n^{1 + 1/k}) = \tilde O (\tfrac{n^{2 + 1/k}}{h}) $.

In the query algorithm, we first query data structure $ \mathcal{D}_1 $ for one pair of nodes, which takes time $O(n^{\omega(0, \mu_1 + \log_n h_1, 0)} \epsilon^{-1} \log n) = \tilde O (n^{\mu_1} h \epsilon^{-2}) $.
The remaining steps in the query algorithm are clearly dominated by this term.

The preprocessing time of $ \mathcal{D}_1 $ is $ O (n^{\omega + \log_n h_1} \log n) = \tilde O (n^{\omega} h \epsilon^{-1}) $ and the preprocessing time of~$ \mathcal{D}_2 $ is $ O (n^{\omega} h_2 \log h_2) = \tilde O (n^\omega) \cdot O(\tfrac{1}{\epsilon})^k $.

Our approach for setting the parameters is to consider $ n^{\omega(1,\mu_1 + \log_n h,1)-\mu_1} $ and $ \tfrac{n^2}{h} $ as the dominant terms and set $ h $ and $ \mu_1 $ in such a way that $ n^{\mu_1 + \log_n h} = n^{0.45} $.
The latter is equivalent to $ \mu_1 + \log_n h = 0.45 $, which yields the simplification $ n^{\omega(1,\mu_1 + \log_n h,1)-\mu_1} = n^{\omega(1, 0.45 ,1)-\mu_1} $.
We now balance the terms $ n^{\omega(1, 0.45 ,1)-\mu_1} $ and $ \tfrac{n^2}{h} $ under the constraint $ n^{\mu_1 + \log_n h} < n^{0.45} $.

By setting $ \mu_1 = 0.2374 $ and $ h = n^{0.2125} $ we get $ \tilde O (n^{\omega(1,\mu_1 + \log_n h,1)-\mu_1}) = n^{1.788} $, $ \tilde O (\tfrac{n^2}{h}) = n^{1.788} $, and $ \tilde O (n^{\mu_1} h) = O (n^{0.45}) $ . % \mu_1 = 0.2374005, h = n^{0.2125995}
Now by setting $ \mu_2 = 0.24 $ we get $ \tilde O ((n^{\omega(1,1,\mu_2)-\mu_2} + n^{1+\mu_2})) = O (n^{1.76}) $ and $ \tilde O (n^{\omega (1 - \log_n h, \mu, 1)}) =  n^{1.788} $.
Finally, setting $ k = \sqrt{\frac{\log_{1/\epsilon} n}{2}} $ balances $ (\tfrac{1}{\epsilon})^k $ and $ n^{1/k} $ in the update time.

Overall, we obtain an update time of $ O (n^{1.788}) \cdot O (\tfrac{1}{\epsilon})^{\sqrt{2\log_{1/\epsilon} n}} $, a query time of $ O (n^{0.45} \epsilon^{-2}) $, and a preprocessing time of $ O (n^{2.585}) \cdot O (\tfrac{1}{\epsilon})^{\sqrt{2\log_{1/\epsilon} n}} $.
\end{proof}

\section{Comparison to Algebraic Algorithms with Batch-Queries}
\label{sec:batchquerycomparison}

In \Cref{sec:overview:emulator}
we remarked that the emulator construction with low recourse hitting sets is needed for our upper bounds to match conditional lower bounds from \cite{BrandNS19}.
Here we state the complexity that could be achieved if one were to use a new hitting set in each iteration instead.

The bottleneck is maintaining the submatrix $\mM^{-1}_{S,T}$ of some dynamic matrix inverse. \Cref{lem:algebraic:submatrix} is able to maintain such a submatrix efficiently, if sets $S$ and $T$ are slowly changing.
If the sets, however, were changing arbitrarily from one iteration to the next (e.g.~by using a new hitting set in our emulator construction after each update), then we would have to use previous algorithms \cite{Sankowski04,BrandNS19} instead with the following complexities.
\begin{lemma}[\cite{Sankowski04}]\label{lem:appendix:sank}
\label{lem:overview:previous_algebraic}
\label{lem:algebraic:slowUpdate}
For any $0\le\nu\le1$ there exists a dynamic matrix inverse algorithm that initializes in $O(n^\omega)$ operations and supports entry updates to $\mA$ in $O(n^{\omega(1,1,\nu)-\nu}+n^{1+\nu})$ operations.
Querying $\mA^{-1}_{S,T}$ for any $|S|=n^\sigma$, $|T|=n^\tau$ takes
$O(n^{\omega(\sigma,\nu,\tau)})$ operations.

%Via randomized reduction from \cite{Sankowski05} this can be used to query $h$-bounded pairwise distances for $S\times T$ in dynamic unweighted directed graphs under edge insertions and deletions. The update and query complexity increase by an $O(h \log h)$ factor.
%\footnote{In the original work \cite{Sankowski04}, only queries to individual entries were considered. The complexity of querying an entire submatrix was proven in \cite{BrandNS19}.}
\end{lemma}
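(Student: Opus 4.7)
The plan is to follow Sankowski's classical lazy-update scheme based on the Sherman--Morrison--Woodbury identity (\Cref{lem:woodbury}). During preprocessing I would compute $\mA^{-1}$ explicitly in $O(n^\omega)$ operations and set $\mA' := \mA$. The algorithm proceeds in phases of length $n^\nu$. Within a phase, I maintain the invariant
\begin{equation*}
\mA = \mA' + \mU\mV^\top,
\end{equation*}
where $\mU,\mV \in \F^{n\times k}$ with $k\le n^\nu$ and each column of $\mU,\mV$ has at most one non-zero entry. Each entry update $\mA \mapsto \mA + c\,e_i e_j^\top$ is handled by appending $c\,e_i$ to $\mU$ and $e_j$ to $\mV$.

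By \Cref{lem:woodbury}, the inverse admits the implicit representation
\begin{equation*}
\mA^{-1} \;=\; \mA'^{-1} \;-\; \mA'^{-1}\mU\,\mN\,\mV^\top\mA'^{-1},
\qquad \mN := \bigl(\mI + \mV^\top\mA'^{-1}\mU\bigr)^{-1}.
\end{equation*}
I would maintain $\mA'^{-1}$, the auxiliary matrices $\mP := \mA'^{-1}\mU$ and $\mQ := \mV^\top\mA'^{-1}$, and the small matrix $\mN$ explicitly. Per entry update, the new column appended to $\mP$ is just a scaled column of $\mA'^{-1}$, computable in $O(n)$ time; analogously for $\mQ$. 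The matrix $\mI + \mV^\top\mA'^{-1}\mU$ gains exactly one row and one column per update (whose entries can be read off $\mP$ and $\mQ$ in $O(n^\nu)$ time), and its inverse $\mN$ can be refreshed via a rank-2 block inversion in $O(n^{2\nu})$ time. At the end of a phase I reset $\mA' \leftarrow \mA$ and recompute $\mA'^{-1}$ via one application of Sherman--Morrison--Woodbury on all $n^\nu$ columns at once, which costs $O(n^{\omega(1,1,\nu)})$. Amortized over the $n^\nu$ updates this is $O(n^{\omega(1,1,\nu)-\nu})$; together with $O(n+n^{2\nu}) = O(n^{1+\nu})$ per-update maintenance, this gives the claimed update bound. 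Standard deamortization via two staggered copies (as referenced in \cite[Theorem~B.1]{BrandNS19}) yields worst-case bounds.

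For a query to $\mA^{-1}_{S,T}$ with $|S|=n^\sigma$, $|T|=n^\tau$, I evaluate
\begin{equation*}
\mA^{-1}_{S,T} \;=\; \mA'^{-1}_{S,T} \;-\; \mP_{S,\cdot}\,\mN\,\mQ_{\cdot,T}.
\end{equation*}
Extracting $\mA'^{-1}_{S,T}$, $\mP_{S,\cdot}$ and $\mQ_{\cdot,T}$ is a read-off from explicitly stored matrices. The triple product of matrices of shapes $n^\sigma\times n^\nu$, $n^\nu\times n^\nu$ and $n^\nu\times n^\tau$ can be computed by first multiplying the middle two (cost $O(n^{\omega(\nu,\nu,\tau)})$) and then on the left (cost $O(n^{\omega(\sigma,\nu,\tau)})$), and similarly subtracting the $n^{\sigma+\tau}$ entries of $\mA'^{-1}_{S,T}$ is subsumed. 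Using the standard inequality $\omega(\sigma,\nu,\tau) \ge \omega(\nu,\nu,\tau)$ (after a permutation of arguments) and that $n^{\sigma+\tau} \le n^{\omega(\sigma,\nu,\tau)}$, the dominating term is $O(n^{\omega(\sigma,\nu,\tau)})$, as claimed.

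The only delicate point I expect is the bookkeeping for the dimension $k$ of $\mU,\mV$, which grows throughout a phase: one must verify that the incremental costs of $\mP$, $\mQ$ and the block inversion of $\mN$ never exceed the stated per-update bound as $k$ approaches $n^\nu$, and that the rebuild amortizes cleanly into the $O(n^{\omega(1,1,\nu)-\nu})$ term. Everything else is routine once the Sherman--Morrison--Woodbury identity is in place.
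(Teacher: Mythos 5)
The paper does not prove this lemma; it is cited to Sankowski~\cite{Sankowski04}, so there is no in-paper proof to compare against. Your phase structure, the Sherman--Morrison--Woodbury representation, and the amortized rebuild cost $O(n^{\omega(1,1,\nu)-\nu})$ are all on the right track, but your query analysis has a genuine gap. You store $\mP$, $\mN$, $\mQ$ separately and compute the triple product $\mP_{S,\cdot}\mN\mQ_{\cdot,T}$ by first forming $\mN\mQ_{\cdot,T}$ at cost $O(n^{\omega(\nu,\nu,\tau)})$, then multiplying on the left at cost $O(n^{\omega(\sigma,\nu,\tau)})$, and you assert $\omega(\sigma,\nu,\tau)\ge\omega(\nu,\nu,\tau)$. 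That is not a standard inequality: by symmetry and monotonicity of $\omega(\cdot,\cdot,\cdot)$, it holds precisely when $\sigma\ge\nu$, and reordering the product to $(\mP_{S,\cdot}\mN)\mQ_{\cdot,T}$ only trades this hypothesis for $\tau\ge\nu$. When both $\sigma<\nu$ and $\tau<\nu$ the bound fails; concretely, for single-entry queries ($\sigma=\tau=0$) your scheme evaluates the bilinear form $v^\top\mN w$ in $\Theta(n^{2\nu})$ time, while the lemma promises $O(n^{\omega(0,\nu,0)})=O(n^\nu)$.

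The missing idea is that the low-rank correction must be kept in a \emph{flat} outer-product form so that a query is a single rectangular product rather than a triple product with a dense $n^\nu\times n^\nu$ middle factor. Store $\mY:=\mP\mN$ (an $n\times n^\nu$ matrix) explicitly rather than $\mP$ and $\mN$ separately, so that $\mA^{-1}_{S,T}=\mA'^{-1}_{S,T}-\mY_{S,\cdot}\mQ_{\cdot,T}$ is computed by one multiplication of shapes $n^\sigma\times n^\nu$ by $n^\nu\times n^\tau$, costing $O(n^{\omega(\sigma,\nu,\tau)})\ge O(n^{\sigma+\tau})$ as claimed. Equivalently, Sankowski maintains $\mA^{-1}=\mA'^{-1}+\mY\mZ^\top$ and, per update, appends the full length-$n$ vectors $\mA^{-1}e_i$ (scaled) and $(e_j^\top\mA^{-1})^\top$ to $\mY$ and $\mZ$. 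Computing these through the current rank-$k$ representation multiplies an $n\times n^\nu$ matrix by a vector and hence costs $\Theta(n^{1+\nu})$. This is exactly why $n^{1+\nu}$ appears in the stated update complexity: it is not a loose upper bound for your $O(n+n^{2\nu})$ per-update cost, but a price you must actually pay to make the query fast. Your flagged concern about the growing dimension $k$ is not the real obstruction; the factorized $\mP\mN\mQ$ form is simply too cheap to update and too expensive to query.
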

In comparison, our \Cref{lem:algebraic:submatrix} (when picking $\mu=1$) replaces the $O(n^{\omega(\sigma,\nu,\tau)})$ query complexity of \Cref{lem:appendix:sank} by a smaller additive $O(|S\times T|)=O(n^{\omega(\sigma, 0,\tau)})$ in the update complexity.

For the emulator construction for $(1+\epsilon)$-SSSP we pick $|S|=n$ and $T=\tilde{O}(\sqrt{n})$.
So for current $\omega$ and constant $\epsilon$, \Cref{lem:appendix:sank}
would only imply $O(n^{1.596})$ for single source distances ($\sigma=1,\tau=0.5,\nu\approx0.42$)
opposed to our deterministic $O(n^{1.529})$ upper bound
(which was also achieved by \cite{BHGWW2021} with randomization against oblivious adversaries).

For the emulators used in the $st$-case one could use the following lemma.
\begin{lemma}[\cite{BrandNS19}]\label{lem:appendix:bns}
For any $0\le\nu\le\mu\le1$ there exists a dynamic matrix inverse algorithm that initializes in $O(n^\omega)$ operations and supports entry updates to $\mA$ in $O(n^{\omega(1,1,\mu)-\mu}+n^{\omega(1,\mu,\nu)-\nu}+n^{\mu+\nu})$ operations.
Querying $\mA^{-1}_{S,T}$ for any $|S|=n^\sigma$, $|T|=n^\tau$ takes
$O(n^{\omega(\sigma,\mu,\nu)} + n^{\omega(\tau,\mu,\nu)}
+n^{\omega(\sigma,\nu,\tau)})$ operations.
\end{lemma}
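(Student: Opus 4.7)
The plan is to reuse the dynamic matrix inverse machinery already detailed in Section~\ref{sec:overview:algebraic} and re-analyze its cost under batch queries. Specifically, I would maintain $\mM$ in the implicit form $\mM = \mM' + \mU'\mV'^\top + \mU\mV^\top$ of \eqref{eq:overview:invariant}, where $\mU',\mV'$ have at most $n^\mu$ columns and $\mU,\mV$ have at most $n^\nu$ columns, together with the explicit matrices $\mM'^{-1}$, $\mA := \mM'^{-1}\mU'(\mI+\mV'^\top\mM'^{-1}\mU')^{-1}$, $\mB := \mV'^\top\mM'^{-1}$, and $\mC := (\mI+\mV^\top\mM''^{-1}\mU)^{-1}$ from \eqref{eq:maintenance}. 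Initialization is computing $\mM^{-1}$ from scratch in $O(n^\omega)$ operations and setting $\mU=\mU'=\mV=\mV'$ empty so that $\mM'=\mM$, $\mA=\mB=0$, and $\mC=\mI$.

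The update bookkeeping is standard: each entry update to $\mM$ appends a single column to $\mU$ and $\mV$; every $n^\nu$ updates the accumulated $\mU,\mV$ are merged into $\mU',\mV'$ and $\mA,\mB$ are refreshed; every $n^\mu$ updates $\mM'$ is reset to the current $\mM$ and $\mM'^{-1}$ is recomputed. Amortizing the periodic recomputation of $\mM'^{-1}$ (cost $n^{\omega(1,1,\mu)}$ every $n^\mu$ updates) and of $\mA,\mB$ (cost $n^{\omega(1,\mu,\nu)}$ every $n^\nu$ updates), plus the per-update Sherman--Morrison--Woodbury refresh of the small $n^\nu\times n^\nu$ matrix $\mC$, yields the stated update time $O(n^{\omega(1,1,\mu)-\mu} + n^{\omega(1,\mu,\nu)-\nu} + n^{\mu+\nu})$; worst-case bounds follow via the standard deamortization of \cite[Theorem~B.1]{BrandNS19}. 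The $n^{\mu+\nu}$ term arises because refreshing $\mC$ requires $O(n^\nu)$ freshly-computed entries of $\mM''^{-1}$, each costing $O(n^\mu)$ via $\mA_{i,*}\mB_{*,j}$.

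For a batch query on $\mM^{-1}_{S,T}$ with $|S|=n^\sigma$ and $|T|=n^\tau$, I would expand via \eqref{eq:maintenance} into
\[\mM^{-1}_{S,T} \;=\; \mM'^{-1}_{S,T} \;+\; \mA_{S,*}\,\mB_{*,T} \;+\; (\mM''^{-1}_{S,*}\mU)\,\mC\,(\mV^\top\mM''^{-1}_{*,T}).\]
The first summand is read directly from the explicit $\mM'^{-1}$. The tall factor $\mM''^{-1}_{S,*}\mU$ is an $n^\sigma\times n^\nu$ matrix; since every column of $\mU$ has a single nonzero, computing it reduces to evaluating $n^\nu$ selected columns of $\mA_{S,*}\mB$, which is $O(n^{\omega(\sigma,\mu,\nu)})$ via one rectangular multiplication plus $O(n^{\sigma+\nu})$ lookups in $\mM'^{-1}$. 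Symmetrically, $\mV^\top\mM''^{-1}_{*,T}$ costs $O(n^{\omega(\tau,\mu,\nu)})$. The remaining telescoping product of shape $(n^\sigma\times n^\nu)\cdot(n^\nu\times n^\nu)\cdot(n^\nu\times n^\tau)$ is done in $O(n^{\omega(\sigma,\nu,\tau)})$ operations.

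The main obstacle is the cross term $\mA_{S,*}\mB_{*,T}$, which a naive multiplication computes in $O(n^{\omega(\sigma,\mu,\tau)})$ operations, a cost that does not appear in the advertised bound. The resolution I would pursue is to avoid forming this product independently and instead fold it into the intermediate matrices already produced above: the slices $\mA_{S,*}$ and $\mB_{*,T}$ are each read while building $\mM''^{-1}_{S,*}\mU$ and $\mV^\top\mM''^{-1}_{*,T}$ respectively, and one can reroute the remaining inner multiplication through the $n^\nu$-dimensional ``bridge'' supplied by $\mC$ (exploiting that $\mU\mV^\top$ is the most recent rank-$n^\nu$ perturbation to $\mM''$). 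Verifying this absorption in the generality stated is the delicate step of the proof; once handled, the three advertised query terms follow, and the resulting algorithm is exactly the dynamic matrix inverse data structure used throughout \cite{BrandNS19}.
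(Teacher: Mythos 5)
First, note that the paper never proves this lemma---it is cited directly from \cite{BrandNS19} in the comparison appendix, so there is no internal proof to compare against; your proposal is a reconstruction from the sketch in \Cref{sec:overview:algebraic}.

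Your update-time analysis is fine and matches the bookkeeping around \eqref{eq:overview:invariant} and \eqref{eq:maintenance}. The query step, however, has a genuine gap, which you have correctly flagged but not closed. In the additive decomposition \eqref{eq:maintenance} the term $\mA\mB$ and the term $\mM''^{-1}\mU\mC\mV^\top\mM''^{-1}$ are \emph{additively independent} contributions to $\mM^{-1}$. The matrices $\mA,\mB$ have inner dimension $n^\mu$ and arise from the outer low-rank perturbation $\mU'\mV'^\top$, whereas $\mC$ is $n^\nu\times n^\nu$ and sits between the factors $\mU,\mV$ of the inner perturbation. There is no algebraic identity that lets the $[n^\sigma\times n^\mu]\cdot[n^\mu\times n^\tau]$ product $\mA_{S,*}\mB_{*,T}$ factor through the $n^\nu$-dimensional ``bridge'' $\mC$: the two sub-expressions do not share a common factor and the inner dimensions ($n^\mu$ versus $n^\nu$) do not match. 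Observing that $\mA_{S,*}$ and $\mB_{*,T}$ are each read while assembling the other pieces does not help either, because the expensive part is the rectangular matrix \emph{product}, not the reads. Your sentence about ``rerouting the inner multiplication through the $\mC$-bridge'' is an assertion, not an argument, and as written I do not see how to make it true. If the lemma holds as stated---with no $n^{\omega(\sigma,\mu,\tau)}$ term---then the resolution must rely on a mechanism from \cite{BrandNS19} that neither your writeup nor the paper's overview reproduces; working strictly within the Sherman--Morrison--Woodbury decomposition you (and the overview) use, the honest query cost one obtains is $O(n^{\omega(\sigma,\mu,\tau)}+n^{\omega(\sigma,\mu,\nu)}+n^{\omega(\tau,\mu,\nu)}+n^{\omega(\sigma,\nu,\tau)})$, which is not the claimed bound.
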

In comparison, our \Cref{lem:algebraic:submatrix} replaces all terms in the query complexity of \Cref{lem:appendix:bns} by a much smaller additive $O(|S\times T|)=O(n^{\omega(\sigma, 0,\tau)})$ in the update complexity.

We use $|S|=|T|=\tilde{O}(n^{2/3})$ in our emulator construction for $(1+\epsilon)$-approximate $st$-distances.
For current $\omega$ and constant $\epsilon$, \Cref{lem:appendix:bns} would imply $O(n^{1.438})$ update time for $st$-distances ($\mu\approx0.747,\nu\approx0.323,\sigma=\tau=2/3$)
opposed to our $O(n^{1.407})$ bound.
%matching the conditional lower bound in \cite{BrandNS19}.

\section{Improved Randomized Bounds for Exact $st$-Distances}\label{app:exact_st}

We remark that our new dynamic algorithm for maintaining bounded distances (\Cref{thm:randomized:low_hop}) can also be used to speed up dynamic exact $st$-distances in directed graphs, if we allow for randomization.
The algorithm is randomized and works on directed graphs.
We obtain an update time of $O(n^{1.7035})$, 
improving upon the previous best $O(n^{1.7643})$ \cite{Sankowski05,BrandNS19}.

\exactst*

\begin{proof}
Given graph $G$, we sample a random hitting set $H$ of size $\tilde O(n/h)$ and add $s,t$ to this set. 
With high probability, this set partitions the shortest $st$-path into paths $v \to v'$ (with $v,v'\in H$) of length $\le h$.
We maintain the pairwise $h$-bounded distances for $H\times H$ via the data structure of \Cref{thm:randomized:low_hop}.
After each update, we construct a graph $G' = (H, E')$ on vertex set $H$ with edges whose weight matches the maintained $h$-bounded distances.
The $st$-distance in $G$ can now be computed by running Dijkstra's algorithm on $G'$ in $\tilde{O}((n/h)^2)$ time.
This complexity is subsumed by the time required by \Cref{thm:randomized:low_hop} to maintain the $H\times H$ distances (i.e.~the weights to be used to construct $G'$.

As the result is w.h.p.~correct and exact, no information about the random choices is leaked to the adversary.
\end{proof}

\end{document}